\def\doi{8(4:18)2012}
\newcommand{\bydef}{ \stackrel{\mathrm{def}}{=} }
\newcommand{\freccia}[1]{\stackrel{#1}{\longrightarrow}}
\newcommand{\muprod}{\mu^{\odot}}
\newcommand{\diam}[1]{ \langle #1\rangle }
\newcommand{\quadrato}[1]{\left[ #1 \right]}
\newcommand{\lts}{\mathcal{L}}
\newcommand{\uno}{1}
\newcommand{\due}{2}
\newcommand{\supp}{ \textnormal{supp} }
\newcommand{\first}{ \textnormal{first} }
\newcommand{\last}{ \textnormal{last} }
\newcommand{\rootnode}{ \textnormal{root} }
\newcommand{\expected}{\mathbf{E}}
\newcommand{\martin}{$\textnormal{MA}_{\aleph_1}$}
\newcommand{\arena}{\mathcal{A}}
\newcommand{\val}{\mathsf{VAL}}
\newcommand{\game}{\mathcal{G}}
\newcommand{\pr}{\Omega}
\newcommand{\pl}{\mathsf{Pl}}
\newcommand{\bp}{\mathcal{BP}}
\newcommand{\lfp}{\mathrm{lfp}}
\newcommand{\gfp}{\mathrm{gfp}}
\newcommand{\sem}[1] {  \llbracket #1 \rrbracket  }  
\newcommand{\gsem}[1]{\llparenthesis{\,#1\,}\rrparenthesis}
\newcommand\node[1]{*+[o]{#1}}
\newcommand\nodeC[1]{*+[o][F]{#1}}
\newcommand\addLabelUL[1]{\ar@{}[]+UR|(1){~\makebox[0pt][l]{$\mathbf{#1}$}}}
\newcommand\addLabelUR[1]{\ar@{}[]+UR|(1){~\makebox[0pt][l]{$\mathbf{#1}$}}}
\newcommand\addLabelDL[1]{\ar@{}[]+UR|(1){~\makebox[0pt][l]{$\mathbf{#1}$}}}
\newcommand\addLabelDR[1]{\ar@{}[]+UR|(1){~\makebox[0pt][l]{$\mathbf{#1}$}}}
\newcommand\addDMD[2]{
	\ar@{-}[]+<#1pt,0pt>;[]+<0pt,#2pt>
	\ar@{-}[]+<0pt,#2pt>;[]+<-#1pt,0pt>
	\ar@{-}[]+<-#1pt,0pt>;[]+<0pt,-#2pt>
	\ar@{-}[]+<0pt,-#2pt>;[]+<#1pt,0pt>
}
\begin{document}

\title[Probabilistic Modal $\mu$-Calculus with Independent Product]{Probabilistic Modal $\mu$-Calculus\\ with Independent Product\rsuper*}
\titlecomment{{\lsuper*} A conference version of this paper, not including proofs, appeared as \cite{MIO11}.}

\author[M.~Mio]{Matteo Mio} 
\address{LIX, Ecole Polytechnique}
\email{mio@lix.polytechnique.fr} 
\thanks{
This research was partially supported by PhD studentships from LFCS and the IGS at the School of Informatics, University of Edinburgh, and by EPSRC research grant EP-F042043-1.
It was completed during the tenure of an ERCIM ``Alain Bensoussan'' Fellowship, supported by the Marie Curie Co-funding of Regional, National and International Programmes (COFUND) of the European Commission. 
}

\keywords{Probabilistic Temporal Logic, Game Semantics, Two-player Stochastic Games, Modal $\mu$-calculus} 
\subjclass{D.2.4, F.3.0, F.4.1}

%%% Phrasing here deliberately changed so as not to imply that all work was done
%%% under ERCIM fellowship. Also you can have the tenure of a fellowship, but
%%% not of a fellowship programme.
%It was completed during the tenure of an ERCIM ``Alain Bensoussan'' Fellowship, supported by the Marie Curie Co-funding of Regional, National and International Programmes (COFUND) of the European Commission. 

\maketitle

\begin{abstract}
The \emph{probabilistic} modal $\mu$-calculus is a fixed-point logic designed for expressing properties of probabilistic labeled transition systems (PLTS's). Two equivalent semantics have been studied for this logic, both assigning to each state a value in the interval $[0,1]$ representing the probability   that the property expressed by the formula holds at the state. One semantics is \emph{denotational} and the other is a \emph{game semantics}, specified in terms of two-player stochastic parity games.

A shortcoming of the probabilistic modal $\mu$-calculus is the lack of expressiveness required to encode other important temporal logics for PLTS's such as Probabilistic Computation Tree Logic (PCTL). To address this limitation we extend the logic with a new pair of operators: independent product and coproduct. The resulting logic, called \emph{probabilistic modal $\mu$-calculus with independent product}, can encode many properties of interest and subsumes the qualitative fragment of PCTL. 

The main contribution of this paper is the definition of an appropriate game semantics for this extended probabilistic $\mu$-calculus. This relies on the definition of a new class of games which generalize standard two-player stochastic (parity) games by allowing a play to be split into concurrent subplays, each continuing their evolution independently. Our main technical result is the equivalence of the two semantics. The proof is carried out in ZFC set theory extended with Martin's Axiom at an uncountable cardinal.
\end{abstract}

%%%%%%%%%%%%%%%%%%%%%%%%%%%%%%%%%%%%% INTRODUCTION %%%%%%%%%%%%%%%%%%

\section{Introduction}
The modal $\mu$-calculus (L${\mu}$) \cite{Kozen83,Stirling96,BS2001} is a very expressive logic, for expressing properties of labeled transition systems (LTS's), obtained by extending classical propositional modal logic with least and greatest fixed point operators. In the last decade, a lot of research has focused on the study of reactive systems that exhibit some kind of probabilistic behavior, and logics for expressing their properties.  Probabilistic labeled transition systems (PLTS's) \cite{S95} are a natural generalization of standard LTS's to the probabilistic scenario, as they allow both non-deterministic and (countable) probabilistic choices. A state $s$ in a PLTS can evolve by non-deterministically choosing one of the \emph{accessible} probability distributions $d$ (over states) and then continuing its execution from the state $s^{\prime}$ with probability $d(s^{\prime})$. This combination of non-deterministic choices immediately followed by probabilistic ones, allows the modeling of concurrency, non-determinism and probabilistic behaviors in a natural way. PTLS's can be visualized using  graphs labeled with probabilities in a natural way \cite{HP2000,KNPV2009,Bartels02}. For example the PLTS depicted in Figure \ref{figura_intro_plts_1} models a system with two states $p$ and $q$. At the state $q$ no action can be performed. At the state $p$ the system can evolve non-deterministically either  to the state $q$ with probability $1$ (when the transition $p\freccia{a}d_{2}$ is chosen) or to the state $p$ with probability $\frac{1}{3}$ and to the state $q$ and with probability $\frac{2}{3}$ (when the transition $p\freccia{a}d_{1}$ is chosen). 
\begin{figure}[h!]
\begin{center}
$$
\SelectTips{cm}{}
	\xymatrix @=20pt {
		\nodeC{p} \ar@{->}[rr]^{a}  \ar@{->}[drr]^{a} & &  \nodeC{d_{1}} \ar@{.>}@/^10pt/[rr]^{\frac{2}{3}} \ar@{.>}@/_10pt/[ll]_{\frac{1}{3}} & &  \nodeC{q}  	\\
		& &   \nodeC{d_{2}} \ar@{.>}@/^0pt/[urr]^{1}\\
		}
$$
\end{center}
\caption{Example of a PLTS}\label{figura_intro_plts_1}
\end{figure}
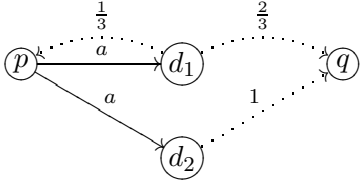

The probabilistic modal $\mu$-calculus (pL${\mu})$, introduced in \cite{MM97,HM96,AM04}, is a generalization of L${\mu}$ designed for expressing properties of PLTS's. 
This logic was originally named the \emph{quantitative} $\mu$-calculus, but since other $\mu$-calculus-like logics, designed for expressing properties of non-probabilistic systems, have been given the same name (see, e.g., \cite{FGK2010}), we adopt the \emph{probabilistic} adjective. The syntax of the logic pL$\mu$ coincides with that of the standard $\mu$-calculus. The denotational semantics  of pL${\mu}$ \cite{MM97,AM04} generalizes that of L$\mu$, by interpreting every formula $F$ as a map $\sem{F}\!:\!P\!\rightarrow\! [0,1]$, which assigns to each state $p$ a \emph{degree of truth}. In \cite{MM07}, the authors introduce an alternative semantics for the logic pL${\mu}$. This semantics, given in term of two-player stochastic parity games, is a natural generalization of the two-player (non stochastic) game semantics for the logic L$\mu$ \cite{Stirling96}. As in L$\mu$ games, the two players play a game starting from a
   configuration $\langle p, F\rangle$, where the objective for Player $1$ is
   to produce a path of configurations along which the outermost
   fixed point variable $X$ unfolded infinitely often is bound by a greatest
   fixed point in $F$. On a configuration of the form $\langle p,G_{1}\vee G_{2}\rangle$,
   Player $1$ chooses one of the disjuncts $G_{i}$, $i\!\in\!\{1,2\}$, by moving to the next
   configuration $\langle p, G_{i}\rangle$. On a configuration $\langle p,G_{1}\wedge G_{2}\rangle$,
   Player $2$ chooses a conjunct $G_{i}$ and moves to $\langle p, G_{i}\rangle$. On a configuration
   $\langle p, \mu X.G\rangle$ or  $\langle p, \nu X.G\rangle$ the game evolves to the configuration
   $\langle p, G\rangle$, after which, from any subsequent configuration $\langle q, X\rangle$ the game
   again evolves to $\langle q, G\rangle$. On configurations $\langle p, \diam{a}G\rangle$ and $\langle p, \quadrato{a}G\rangle$,
   Player $1$ and $2$ respectively choose a transition ${p}\!\freccia{a}\!{d}$ in
   the PLTS and move the game to $\langle d, G\rangle$. Here $d$ is a
   probability distribution (this is the key difference between
   pL$\mu$ and L$\mu$ games)  and the configuration $\langle d,G\rangle$ belongs to Nature, the probabilistic
   agent of the game, who moves on to the next configuration $\langle q,G\rangle$ with
   probability $d(q)$.
This game semantics allows one to interpret formulas as expressing, for each state $p$, the (limit) probability of a \emph{property}, specified by the formula, holding at the state $p$. In \cite{MM07}, the equivalence of the denotational and game semantics for pL$\mu$  on all finite models, was proven. The result was recently  extended to arbitrary models by the present author  \cite{MIO10}.

Having a complementary game semantics for the logic pL$\mu$ is of great conceptual importance. In the \emph{quantitative} approach to probabilistic temporal logics,  the truth value associated with a formula at a given state is supposed to represent the probability that the property expressed by the formula holds at the state. Since the connectives of pL$\mu$ can be given other meaningful denotational interpretations different from those considered in \cite{MM97,AM04} (see, e.g., \cite{HM96}), one naturally seeks an alternative description for the properties associated with formulas going beyond the mere denotational interpretation. The game semantics for pL$\mu$ provides such an  \emph{operational} interpretation in terms of the interactions between the controller (Player $1$) and a hostile environment (Player $2$) in the context of the stochastic choices occurring in the PLTS (Nature). 

However, a shortcoming of the probabilistic $\mu$-calculus is the lack of expressiveness required to encode other important temporal logics for PLTS's, such as Probabilistic Computation Tree Logic (PCTL) of \cite{BA1995}. To address this limitation,  we consider an extension of the logic pL$\mu$ obtained by adding to the syntax of the logic a second conjunction operator ($\cdot$) called \emph{product} and its De Morgan dual operator called \emph{coproduct} ($\odot$). We call this extension the \emph{probabilistic modal $\mu$-calculus with independent product}, or just pL$\muprod$. The denotational semantics of the product operator is defined as $\sem{F\cdot G}(p)\!=\!\sem{F}(p)\cdot\sem{G}(p)$, where the product symbol in the right hand side is multiplication on reals. The denotational semantics of the coproduct is defined, by De Morgan duality, as $\sem{F\odot G}(p)\!=\!1-\big((1-\sem{F}(p))\cdot(1-\sem{G}(p))\big)$. These operators were already considered in \cite{HM96} as a possible generalization of standard boolean conjunction and disjunction to the lattice $[0,1]$. Our logic pL$\muprod$ is novel in containing both ordinary conjunctions and disjunctions ($\wedge$ and $\vee$) and independent products and coproducts ($\cdot$ and $\odot$). While giving a denotational semantics to pL$\muprod$ is straightforward, the major task we undertake in this paper is to extend the game semantics of \cite{MM07} to the new connectives. The conceptual importance of this kind of result has already been outlined above (see also \cite[\S 1]{MioThesis} for an extensive exposition). The game semantics implements the intuition that $H_{1} \cdot H_{2}$ expresses the probability that $H_{1}$ and $H_{2}$ both hold if verified \emph{independently} of each other.

To capture formally this intuition we introduce a game semantics for the logic pL$\muprod$ in which independent execution of many instances of the game is allowed. Our games build on those for pL$\mu$ outlined above. Novelty arises in the game interpretation of the game states $\langle p, H_{1}\!\cdot\! H_{2}\rangle$ and $\langle p, H_{1}\odot H_{2}\rangle$. When during the execution of the game one of this kind of nodes is reached, the game is split into two concurrent and  independent subgames continuing their executions from the states $\langle p, H_{1}\rangle$ and $\langle p,H_{2}\rangle$ respectively. The difference between the game-interpretation of product and coproduct operators  is that on a product configuration $\langle p, H_{1}\cdot H_{2}\rangle$, Player $1$ has to win  in both generated sub-games, while on a coproduct configuration $\langle p, H_{1}\odot H_{2}\rangle$, Player $1$ needs to win just one of the two generated sub-games.
 
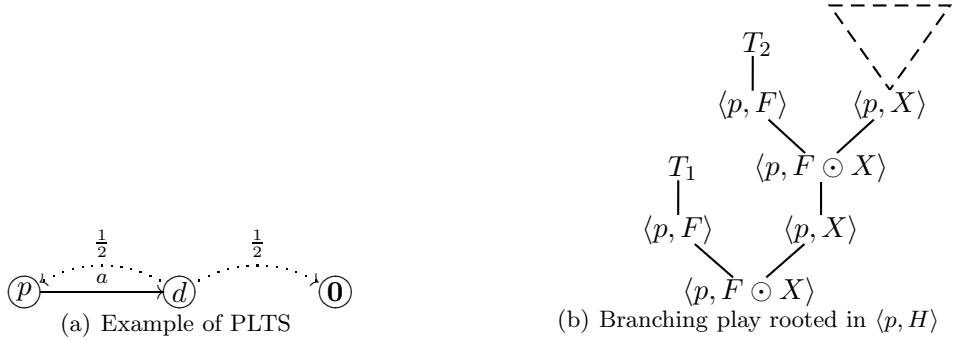
\begin{figure}[t]\label{figura_intro}
\centering
\subfigure[Example of PLTS]{\label{esempio_plts}
$$
\SelectTips{cm}{}
	\xymatrix @=20pt {
		\nodeC{p} \ar@{->}[rr]^{a} & &  \nodeC{d} \ar@{.>}@/^10pt/[rr]^{\frac{1}{2}} \ar@{.>}@/_10pt/[ll]_{\frac{1}{2}} & &  \nodeC{\mathbf{0}}  	}
$$
}\qquad\qquad\qquad
\subfigure[Branching play rooted in $\langle p, H\rangle$]{\label{intro_fig_play}
$\ \ \ \ \ \ \ \ \ $
\pstree[ treemode=U,levelsep=5ex ]{\Tr{$\langle p, F \odot X\rangle$}}{
 \pstree{\Tr{$\langle p, F\rangle$}}{
  		\TR{ $T_{1} $}
 }
 \pstree{\Tr{$\langle p, X\rangle$}}{
 	\pstree[ treemode=U,levelsep=5ex ]{\Tr{$\langle p, F \odot X\rangle$}}{
	 \pstree{\Tr{$\langle p, F\rangle$}}{
  		\TR{ $T_{2} $}
 }
 	\pstree[ treemode=U,levelsep=8ex,linestyle=dashed ]{\Tr{$\langle p, X \rangle$}}{
        \pstree[linestyle=none,arrows=-,levelsep=2ex]{\Tfan[fansize=10ex]}{\TR{ $\ $}}
	}
	}
}
}
}\caption{Illustrative example}
\end{figure}

To illustrate the main ideas, let us consider the PLTS of figure \ref{esempio_plts} and the pL$\mu$ formula $F\!=\!\diam{a}\diam{a}t\!t$ which asserts the possibility of performing two consecutive $a$-steps. The probability of $F$ being satisfied at $p$ is $\frac{1}{2}$,  since after the first $a$-step, the state $\bold{0}$ is reached with probability $\frac{1}{2}$ and no further $a$-step is possible. Let us consider the pL$\muprod$ formula $H\!=\! \mu X. (F\odot X)$. Figure \ref{intro_fig_play} depicts a play in the game starting
 from the configuration $\langle p, H\rangle$ (fixed-point unfolding steps are omitted).
 The branching points represent places where coproduct is the
 main connective, and each $T_{i}$ represents play in one of the
 independent subgames for $\langle p, F\rangle$ thereupon generated.  We call such a tree, describing play on all independent subgames, a
 \emph{branching play}. Since all branches are coproducts, and the
 fixpoint is a least fixpoint, the objective for Player $1$ is to win
 at least one of the games $T_i$. Since the probability of winning a
 particular game $T_i$ is $\frac{1}{2}$, and there are infinitely many
 independent such games, the probability of Player $1$  winning the whole
 game $H$ is $1$. Therefore the game semantics assigns $H$ at $p$
 the value $1$.

The above example illustrates an interesting application of the new operators, namely the  possibility of encoding the \emph{qualitative} probabilistic modalities $\mathbb{P}_{>0}F$ ($F$ holds with probability greater than zero) and $\mathbb{P}_{=1}F$ ($F$ holds with probability $1$), which are equivalent to the pL$\muprod$ formulas $\mu X.(F\odot X)$ and $\nu X.(F\cdot X)$ respectively. 
Other useful properties can be expressed by using these probabilistic modalities in the scope of fixed point operators. Some interesting formulas include $\mu X.\big( \diam{a}X \vee (\mathbb{P}_{=1}H)\big)$, $\nu X.\big(\mathbb{P}_{>0}\diam{a}X\big)$ and $\mathbb{P}_{>0}\big(\nu X.\diam{a}X\big)$. The first formula assigns to a state $p$ the probability of eventually reaching, by means of a sequence of $a$-steps, a state in which $H$ holds with probability $1$. The second, interpreted on a state $p$, has value $1$ if there exists an infinite sequence of possible (in the sense of having probabilty greater than $0$) $a$-steps starting from $p$, and $0$ otherwise. The third formula, expresses a stronger property, namely it assigns to a state $p$ value $1$ if the probability of making (starting from $p$) an infinite sequence of $a$-steps  is greater than $0$, and value $0$ otherwise. As a matter of fact the logic pL$\mu^{\odot}$ is expressive enough to encode the qualitative fragment of PCTL.

Formalizing the pL$\muprod$ games outlined above is a surprisingly
 technical undertaking. To account for the \emph{branching plays} that arise,
 we introduce a general notion of \emph{tree game} which is of
 interest in its own right. Tree games generalize two-player stochastic
 games, and are powerful enough to encode certain classes of games of
 imperfect information such as Blackwell 
 games \cite{Martin98}. A further level of difficulty arises in expressing
 when a branching play in a pL$\muprod$ game is considered an objective
 for Player $1$. This is delicate because branching plays
 can contain infinitely many interleaved occurrences of product and
 coproduct operations (so our simple explanation of such nodes
 above does not suffice). To account for this, branching plays
 are themselves considered as ordinary two-player (parity) games with
 coproduct nodes as Player $1$ nodes, and product nodes as Player $2$ nodes.
 Player $1$'s goal in the \emph{outer} pL$\muprod$ game is to produce a
 branching play for which, when itself considered as a game, the
 \emph{inner} game, they have a winning strategy. To formalize the class of tree games whose objective is specified by means of \emph{inner} games, we introduce the notion of \emph{two-player stochastic meta-parity game}.

Our main technical result is the equivalence of the denotational semantics and the game semantics for the logic pL$\muprod$. As in \cite{MIO10} the proof of  equivalence of the two semantics is based on the \emph{unfolding method} of \cite{FGK2010}. However there are significant complications, notably,
the transfinite inductive characterization of the set of winning branching plays in a given pL$\muprod$ game (section \ref{technical_section}) and the lack of denotational continuity on the free variables taken care by the game-theoretic notion of \emph{robust} Markov branching play (Section \ref{robust_markov_plays_section}). Moreover, because of the complexity of the objectives described by means of \emph{inner games}, the proof is carried out in ZFC set theory extended with  $\textrm{MA}_{\aleph_1}$  (Martin's Axiom at $\aleph_1$) and therefore our result is at least consistent with ZFC. We leave open the question of whether our result is provable in ZFC alone; we do not  know if this is possible even restricting the equivalence problem to finite models.

The rest of the paper is organized as follows. In Section \ref{basic} we discuss the required mathematical background.   In Section \ref{syntax} we define the syntax and the denotational semantics of the logic pL$\muprod$.
In Section \ref{tree_games_section} the class of stochastic tree games, and its sub-class given by two-player stochastic meta-parity games, are introduced in detail. In Section \ref{model_checking_games} the game semantics of the logic pL$\muprod$ is defined in terms of two-player stochastic meta-parity games. In Section \ref{technical_section} we provide a transfinite inductive characterization of the winning set of the game associated with a formula $\mu X.F$. In Section \ref{robust_markov_plays_section} we introduce the technical notion of robust Markov branching play.
In Section \ref{proof_section} we prove the equivalence of the two semantics, our main result. Conclusions and directions for future research are presented in Section \ref{conclusion_section}. 

%%%%%%%%%%%%%%%%%%%%%%%%%% BACKGROUND %%%%%%%%%%%%%%%%%%%%%%

\section{Mathematical Background}\label{basic}
\begin{defi}A (discrete) probability distribution $d$ over a set $X$ is a function $d\!:\!X\!\rightarrow\![0,1]$ such that $\sum_{x\in X}d(x)\!=\!1$. The \emph{support} of  $d$, denoted by $\supp(d)$, is defined as the (necessarily countable) set $\{x\!\in\! X\ | \ d(x)\!>\!0\}$. We denote with $\mathcal{D}(X)$ the set of probability distributions over $X$. We denote with $\delta_{x}$, for $x\!\in\!X$, the unique probability distribution such that $\supp(\delta_{x})\!=\!\{x\}$.
\end{defi}
\begin{defi}[PLTS \cite{S95}]\label{PLTS} Given a countable set $L$ of labels, a \emph{Probabilistic Labeled Transition System} is a pair $\lts\!=\!\langle P, \{ \freccia{a}\}_{a\in L}\rangle$, where $P$ is a \emph{countable} set of states and ${\freccia{a}}\! \subseteq\! P\times \mathcal{D}(P)$, for every $a\in L$. In this paper we restrict our attention to those PLTS's such that for every $p\!\in\! P$ and every $a\!\in\! L$, the set $\{d\ | \ p\freccia{a}d\}$ is countable. We refer to the countable set $\bigcup_{a\in L}\bigcup_{p\in P}\{d\ | \ p\freccia{a}d\}$, denoted by $\mathcal{D}(\lts)$, as the set of probability distributions of the PLTS. We say that a PLTS $\lts$ is \emph{not-probabilisitc}, or just a LTS, if every probability distribution $d\!\in\!\mathcal{D}(\lts)$ is of the form $\delta_{p}$, for some $p\!\in\!P$.
\end{defi}
Given a set $X$, we denote with $2^{X}$ the set of all subsets $Y\!\subseteq\! X$. Given a complete lattice $(L,\leq)$, we denote with $\bigsqcup\!:2^{L}\!\rightarrow\! L$ and $\bigsqcap\!:2^{L}\!\rightarrow\! L$ the operations of join and meet respectively. A function $f\!:\!L\!\rightarrow\!L$ is \emph{monotone} if $x\!\leq\! y$ implies $f(x)\!\leq\! f(y)$, for every $x,y\!\in\!L$. The set of fixed points of any monotone function $f\!:\!L\rightarrow\!L$, ordered by $\leq$, is  a non-empty complete lattice \cite{Tarski1955}. We denote with $\lfp(f)$ and $\gfp(f)$ the least and the greatest fixed points of $f$, respectively. 

\begin{thm}[Knaster--Tarski \cite{Tarski1955}]
Let $(L,\leq)$ be a complete lattice and $f\!:\!L\!\rightarrow\!L$ a monotone function. The following equalities hold:
\begin{enumerate}[\em(1)]
\item $\lfp(f) =\bigsqcup_{\alpha} f^{\alpha}$, where $f^{\alpha}=\bigsqcup_{\beta<\alpha}f(f^{\beta})$,
\item $\gfp(f) = \bigsqcap_{\alpha} f_{\alpha}$, where $f_{\alpha}=\bigsqcap_{\beta<\alpha}f(f_{\beta})$,
\end{enumerate}
where the greek letters $\alpha$ and $\beta$ range over ordinals.
\end{thm}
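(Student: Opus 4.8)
The plan is to prove (1) directly and obtain (2) by order-duality: applying (1) to the dual lattice $(L,\geq)$ interchanges $\bigsqcup$ and $\bigsqcap$, sends least to greatest, and leaves $f$ monotone (monotonicity is self-dual), so the formula for $\gfp(f)$ drops out. Since the classical result cited just above already guarantees that $\lfp(f)$ exists and is a fixed point, it suffices to show that the transfinite sequence $(f^{\alpha})_{\alpha}$ defined by $f^{\alpha}=\bigsqcup_{\beta<\alpha}f(f^{\beta})$ has supremum $\lfp(f)$.

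First I would record two elementary properties of the sequence. Monotonicity in the index is immediate from the definition: if $\alpha\leq\alpha'$ then $\{\beta : \beta<\alpha\}\subseteq\{\beta : \beta<\alpha'\}$, so $f^{\alpha}$ is a join over a sub-family of the join defining $f^{\alpha'}$, whence $f^{\alpha}\leq f^{\alpha'}$; in particular $f^{0}=\bigsqcup\emptyset=\bot$. Next, at successor stages the join collapses: because $(f^{\beta})_{\beta\leq\alpha}$ is non-decreasing and $f$ is monotone, the family $(f(f^{\beta}))_{\beta\leq\alpha}$ is non-decreasing with largest element $f(f^{\alpha})$, so $f^{\alpha+1}=\bigsqcup_{\beta\leq\alpha}f(f^{\beta})=f(f^{\alpha})$.

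The heart of the argument is then a boundedness step followed by a stabilization step. For boundedness I would show $f^{\alpha}\leq\lfp(f)$ for every $\alpha$ by transfinite induction: assuming $f^{\beta}\leq\lfp(f)$ for all $\beta<\alpha$, monotonicity of $f$ together with the fixed-point equation $f(\lfp(f))=\lfp(f)$ gives $f(f^{\beta})\leq\lfp(f)$, and taking the join yields $f^{\alpha}\leq\lfp(f)$. For stabilization I would argue that the non-decreasing ordinal-indexed chain $(f^{\alpha})_{\alpha}$ cannot increase strictly through all ordinals, since a strictly increasing such chain would embed the proper class of ordinals into the set $L$; hence there is an ordinal $\lambda$ with $f^{\lambda+1}=f^{\lambda}$. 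By the successor-collapse this reads $f(f^{\lambda})=f^{\lambda}$, so $f^{\lambda}$ is a fixed point. Being a fixed point, $\lfp(f)\leq f^{\lambda}$, while boundedness gives $f^{\lambda}\leq\lfp(f)$; therefore $f^{\lambda}=\lfp(f)$. Finally monotonicity and stabilization give $\bigsqcup_{\alpha}f^{\alpha}=f^{\lambda}=\lfp(f)$, which is (1).

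The only genuinely delicate point is the stabilization step: it is not a pure lattice manipulation but a size argument relying on $L$ being a set, namely that no strictly increasing map from the ordinals into $L$ can exist. Everything else — the index-monotonicity, the successor collapse, and the boundedness induction — is routine. I would also pause briefly on the slightly non-standard indexing $f^{\alpha}=\bigsqcup_{\beta<\alpha}f(f^{\beta})$, verifying that it reproduces the usual iterates (so $f^{0}=\bot$, $f^{1}=f(\bot)$, and limit stages are genuine suprema) before invoking it.
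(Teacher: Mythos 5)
Your proof is correct. Note that the paper itself offers no proof of this statement: it is quoted as a classical result with a citation to Tarski, so there is no argument in the paper to compare against, and what you have supplied is the standard transfinite-iteration proof that the paper leaves implicit. All the steps check: index-monotonicity follows from the join being taken over a larger family; the successor collapse $f^{\alpha+1}=f(f^{\alpha})$ uses exactly the monotonicity of the chain $(f^{\beta})_{\beta\leq\alpha}$ and of $f$; the boundedness induction $f^{\alpha}\leq\lfp(f)$ legitimately borrows the existence of $\lfp(f)$ as a fixed point from the cited fixed-point lattice result, just as the paper does when it defines $\lfp$ and $\gfp$; and your stabilization step is the correct Hartogs/Replacement size argument --- a strictly increasing ordinal-indexed chain would inject the proper class of ordinals into the set $L$, so some $\lambda$ with $f^{\lambda+1}=f^{\lambda}$ exists, whence $f^{\lambda}$ is a fixed point squeezed between $\lfp(f)$ (by leastness) and $\lfp(f)$ (by boundedness). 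You are also right to single that step out as the only non-lattice-theoretic ingredient. The reduction of (2) to (1) by passing to the dual lattice is sound, since monotonicity is self-dual and dualization interchanges $\bigsqcup$ with $\bigsqcap$ and least with greatest fixed points; and your observation that the single clause $f^{\alpha}=\bigsqcup_{\beta<\alpha}f(f^{\beta})$ uniformly yields $f^{0}=\bot$, $f^{1}=f(\bot)$, and genuine suprema at limit stages correctly justifies the paper's slightly non-standard indexing.
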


The closed real interval $[0,1]$, with its standard order $\leq$, is a (distributive) complete lattice \cite{MioThesis} with $\bigsqcup X\!=\! \sup  X $ and $\bigsqcap X\!=\! \inf X$, for every $X\!\subseteq\![0,1]$. When $X\!=\! \{x,y\}$ we simply have $x\sqcup y=\max\{x,y\}$ and $x\sqcap y=\min\{x,y\}$. The involutive map $x\mapsto 1-x$ is clearly order-reversing, thus the structure $([0,1], \leq, \lambda x.1-x)$ constitutes a complete De Morgan algebra. We shall consider the operation of product ($\cdot$), i.e., standard multiplication on $[0,1]$, and its De Morgan dual, called coproduct ($\odot$), defined as: $x\odot y\!\bydef\! 1-\big((1-x)\cdot (1-y)\big)$. Equivalently, $x\odot y\!=\! x+y-xy$. Both operations are commutative, associative, monotone and extend uniquely to infinitary operations of type $\prod,\coprod:[0,1]^{\mathbb{N}}\!\rightarrow[0,1]$ as follows:
\begin{center}
$\displaystyle \prod\{x_{n}\}_{n\in\mathbb{N}} = \bigsqcap_{m\in\mathbb{N}} x_{0}\cdot \ldots \cdot x_{m} \ \ \ \ $ and $\displaystyle \ \ \ \ \coprod\{x_{n}\}_{n\in\mathbb{N}} = \bigsqcup_{m\in\mathbb{N}} x_{0}\odot \ldots \odot x_{m}$.
\end{center}
We shall make use of the following fast growing function on the natural numbers to approximate infinite (co)products.
\begin{defi}\label{approx_function}
The function $\#\!:\!\mathbb{N}\!\rightarrow\!\mathbb{N}$ is defined as $\#(n)\bydef 2^{2^{n}+1}$.
\end{defi}
\begin{prop}
The following inequalities hold for every $\varepsilon\!\in\!(0,1]$ and $\vec{x}_{n}\!\in\![0,1]^{\mathbb{N}}$:
\begin{center}
$\displaystyle \prod_{n\in\mathbb{N}}\big( x_{n} + \frac{\varepsilon}{\#(n)}\big) \leq \big(\prod_{n\in\mathbb{N}} x_{n}\big) + \varepsilon \ \ \ \textnormal{and} \ \ \ \prod_{n\in\mathbb{N}}\big( x_{n} - \frac{\varepsilon}{\#(n)}\big) \geq \big(\prod_{n\in\mathbb{N}} x_{n}\big) - \varepsilon$
\end{center}
and, similarly, if infinitary products are replaced by infinitary coproducts.
\end{prop}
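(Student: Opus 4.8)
The plan is to reduce the infinitary statement to a finite perturbation bound for ordinary multiplication on $[0,1]$ and then pass to the limit. Recall that in the lattice $[0,1]$ we have $\bigsqcap X = \inf X$, so that $\prod\{y_{n}\}_{n\in\mathbb{N}} = \inf_{m}\prod_{i=0}^{m} y_{i}$ is the infimum of the finite partial products. Writing $\delta_{n}\bydef\frac{\varepsilon}{\#(n)}$, I would first note that since $\prod$ is only defined on $[0,1]$-valued sequences, the factors $x_{n}\pm\delta_{n}$ are implicitly truncated into $[0,1]$; setting $a_{n}\bydef\min(1,x_{n}+\delta_{n})$ and $b_{n}\bydef\max(0,x_{n}-\delta_{n})$ we retain $0\le a_{n}-x_{n}\le\delta_{n}$ and $0\le x_{n}-b_{n}\le\delta_{n}$. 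The whole argument then rests on one elementary lemma.

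The lemma is the finite perturbation bound: for all $a_{0},\dots,a_{m},x_{0},\dots,x_{m}\in[0,1]$, one has $\big|\prod_{i=0}^{m} a_{i}-\prod_{i=0}^{m} x_{i}\big|\le\sum_{i=0}^{m}|a_{i}-x_{i}|$. I would prove this by the telescoping identity $\prod_{i=0}^{m}a_{i}-\prod_{i=0}^{m}x_{i}=\sum_{k=0}^{m}\big(\prod_{i<k}a_{i}\big)(a_{k}-x_{k})\big(\prod_{i>k}x_{i}\big)$, noting that every partial product of numbers in $[0,1]$ is itself bounded by $1$, so each summand is bounded in absolute value by $|a_{k}-x_{k}|$. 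Equivalently, one obtains the two one-sided bounds $\prod_{i=0}^{m}(x_{i}+\delta_{i})\le\prod_{i=0}^{m}x_{i}+\sum_{i=0}^{m}\delta_{i}$ and $\prod_{i=0}^{m}(x_{i}-\delta_{i})\ge\prod_{i=0}^{m}x_{i}-\sum_{i=0}^{m}\delta_{i}$ directly by induction on $m$, using $x_{m}\le 1$ and $\prod_{i<m}x_{i}\le 1$ at the inductive step.

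Next I would bound the total perturbation budget. Since $\#(n)=2^{2^{n}+1}$, we have $\sum_{n\in\mathbb{N}}\frac{1}{\#(n)}=\tfrac12\sum_{n}2^{-2^{n}}<1$, hence $\sum_{n}\delta_{n}\le\varepsilon$ and a fortiori $\sum_{i=0}^{m}\delta_{i}\le\varepsilon$ for every $m$. Combining with the lemma gives, for every $m$, both $\prod_{i=0}^{m}(x_{i}+\delta_{i})\le\prod_{i=0}^{m}x_{i}+\varepsilon$ and $\prod_{i=0}^{m}(x_{i}-\delta_{i})\ge\prod_{i=0}^{m}x_{i}-\varepsilon$. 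Taking the infimum over $m$ and using that $a_{m}\le c_{m}$ for all $m$ implies $\inf_{m}a_{m}\le\inf_{m}c_{m}$ (and its dual), the constant $\pm\varepsilon$ passes through the infimum, yielding the two displayed inequalities for $\prod$. The coproduct inequalities then follow by De Morgan duality rather than a fresh computation: from $x\odot y=1-(1-x)(1-y)$ one gets $\coprod\{z_{n}\}=1-\prod\{1-z_{n}\}$, so applying the already-proved lower bound for $\prod$ to the sequence $1-x_{n}$ converts it into the upper bound for $\coprod$, and symmetrically for the other direction.

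There is no deep obstacle here; the two points requiring care are (i) the truncation of out-of-range factors, which must be arranged so as not to reverse the inequality (the truncated sequence is exactly what $\prod$ is applied to, and $|a_{n}-x_{n}|,|x_{n}-b_{n}|\le\delta_{n}$ survive truncation), and (ii) the interchange of the $\pm\varepsilon$ error with the infimum over $m$, which is legitimate precisely because $\varepsilon$ is a fixed constant independent of $m$. It is worth observing that the doubly-exponential growth of $\#$ is not actually needed for this proposition -- any $\#$ with $\sum_{n}1/\#(n)\le 1$ would suffice -- so the full strength of Definition \ref{approx_function} is presumably reserved for the nested, recursively-split budget arguments arising later in the paper.
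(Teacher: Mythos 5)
Your proof is correct and is precisely the ``routine method'' the paper alludes to: the paper itself gives no argument, deferring to Lemma 2.2.10 of the author's thesis, and your telescoping perturbation bound $\big|\prod_{i\le m}a_{i}-\prod_{i\le m}x_{i}\big|\le\sum_{i\le m}|a_{i}-x_{i}|$ combined with the budget estimate $\sum_{n}\varepsilon/\#(n)\le\varepsilon$ and De Morgan duality for the coproduct case is the standard way to carry it out. The two delicate points you flag --- truncating out-of-range factors into $[0,1]$ without reversing the inequalities, and passing the constant $\pm\varepsilon$ through the infimum over finite partial products --- are exactly the right ones, and you handle both correctly.
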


\begin{proof}
Both inequalities are easily proved by routine methods. We refer to Lemma 2.2.10 of \cite{MioThesis} for a detailed proof.
\end{proof}

In the following we assume standard notions of basic topology and basic measure theory. We refer to \cite{Kechris} and \cite{Tao_measuretheory} as  standard references to  these topics. The author's PhD thesis \cite[Chapter $2$]{MioThesis} provides a succinct introduction to the necessary material. 
 The topological spaces we consider will always be $0$-dimensional Polish spaces.  

Adopting standard notation, we denote with $\mbox{\boldmath$\Gamma$}^{0}_{\alpha}$, for $\Gamma\!\in\!\{\Sigma,\Pi,\Delta \}$ and $\alpha$ a countable ordinal greater than $0$,  the corresponding class of subsets (of a Polish space) in the Borel hierarchy. In particular, $\mbox{\boldmath$\Delta$}^{0}_{1}$, $\mbox{\boldmath$\Sigma$}^{0}_{1}$ and $\mbox{\boldmath$\Pi$}^{0}_{1}$ denote the collections of clopen, open and closed sets respectively. 

\begin{defi}\label{projective_symbol_list}
For each $n\!\geq\!1$ the \emph{projective classes}  $\mbox{\boldmath$\Sigma$}^{1}_{n}(X)$,  $\mbox{\boldmath$\Pi$}^{1}_{n}(X)$,  $\mbox{\boldmath$\Delta$}^{1}_{n}(X)$ of subsets of a Polish space $X$ are defined as follows:
\begin{center}
\begin{tabular}{l l l}
 $\mbox{\boldmath$\Sigma$}^{1}_{n+1}$ & $=$ & $\Big\{ A\!\subseteq\! X \ | \  A\!=\! \{ x \ | \ \exists y. (x,y)\!\in\!B\} \textit{ for some } B\!\in\! \mbox{\boldmath$\Pi$}^{1}_{n}(X\times Y) \Big\}$\\
 $\mbox{\boldmath$\Pi$}^{1}_{n+1}$& $=$ & $\{ A\!\subseteq\! X \ | \ (X\setminus A)\!\in\!  \mbox{\boldmath$\Sigma$}^{1}_{n+1}(X)\}$\\
  $\mbox{\boldmath$\Delta$}^{1}_{n}$ & $=$ &  $\mbox{\boldmath$\Sigma$}^{1}_{n} \cap  \mbox{\boldmath$\Pi$}^{1}_{n}$\\
\end{tabular}
\end{center}
where $Y$ is a Polish space, and $X\times Y$ is equipped with the product topology.
It is a well-known result of Susin (see, e.g.,  Theorem 14.11 in \cite{Kechris}) that  $\mbox{\boldmath$\Delta$}^{1}_{1}$ is the collection of Borel sets. Thus the class of projective sets is defined, starting from the Borel sets, by iterating the operations of projection and complementation.
\end{defi}

\begin{defi}
Given a Polish space $X$ and a subset $A\!\subseteq\! X$, we say that $A$ is \emph{universally measurable} if it is $\mu$-measurable for every Borel probability measure on $X$. It follows that the collection of universally measurable subsets of $X$, denoted by $\textnormal{UM}(X)$, forms a $\sigma$-algebra. Given a measurable space $(Y,\mathcal{S})$ and a function $f\!:\!X\rightarrow\!Y$, we say that $f$ is \emph{universally measurable} if $f^{-1}(S)\!\in\! \textnormal{UM}(X)$ for all $S\!\in\! \mathcal{S}$.
\end{defi}

The following facts will be useful.
\begin{thm}\label{properties_univ_meas}
Let $X$ be a Polish space. The following assertions hold:
\begin{enumerate}[\em(1)]
\item $\mbox{\boldmath$\Sigma$}^{1}_{1}(X)\subseteq \textnormal{UM}(X)$, and if $X$ is uncountable the inclusion is strict.
\item If $A\!\in\!\textnormal{UM}(Y)$ and $f\!:\!X\rightarrow Y$ is universally measurable, then $f^{-1}(A)\!\in\!\textnormal{UM}(X)$.
\end{enumerate}
\end{thm}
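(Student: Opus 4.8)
The plan is to prove the two assertions of Theorem \ref{properties_univ_meas} separately, relying on standard descriptive set theory.

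For part (1), the inclusion $\mbox{\boldmath$\Sigma$}^{1}_{1}(X)\subseteq \textnormal{UM}(X)$ is the classical theorem that analytic sets are universally measurable. I would invoke this directly: given any analytic $A\subseteq X$ and any Borel probability measure $\mu$ on $X$, one shows $A$ is $\mu$-measurable by approximating it from within by closed (in fact compact, after passing to a suitable subspace) sets and from without by Borel sets, using the fact that $A$ is the continuous image of a Polish space (equivalently, admits a Suslin scheme / is obtained via the operation $\mathcal{A}$). Since the measure $\mu$ was arbitrary, $A\in\textnormal{UM}(X)$. For strictness when $X$ is uncountable, the point is that $|\textnormal{UM}(X)|$ is strictly larger than $|\mbox{\boldmath$\Sigma$}^{1}_{1}(X)|$: an uncountable Polish space contains a set $C$ homeomorphic to Cantor space $2^{\mathbb{N}}$, and $C$ carries a non-atomic Borel probability measure giving it measure zero-ish behaviour that makes every subset of a measure-zero Cantor-like set universally measurable; more cleanly, one exhibits a universally measurable set that is not analytic by a counting argument, since there are only $2^{\aleph_0}$ analytic sets but $2^{2^{\aleph_0}}$ subsets of (and universally measurable subsets contained in) a perfect set of universal measure zero. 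I would cite \cite{Kechris} for the analytic measurability and for the existence of such Cantor subsets rather than reproduce these arguments.

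For part (2), the argument is a short formal verification. Let $A\in\textnormal{UM}(Y)$ and let $f\!:\!X\rightarrow Y$ be universally measurable; I must show $f^{-1}(A)\in\textnormal{UM}(X)$, i.e.\ that $f^{-1}(A)$ is $\mu$-measurable for every Borel probability measure $\mu$ on $X$. Fix such a $\mu$ and consider the pushforward measure $\nu\!=\! f_{*}\mu$ on $Y$, defined on Borel sets of $Y$ by $\nu(B)\!=\!\mu(f^{-1}(B))$; this is a Borel probability measure on $Y$ because $f$ is in particular Borel-to-universally-measurable, so preimages of Borel sets are universally measurable and hence $\mu$-measurable. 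Since $A\in\textnormal{UM}(Y)$, $A$ is $\nu$-measurable, so there exist Borel sets $B_{1}\!\subseteq\! A\!\subseteq\! B_{2}$ in $Y$ with $\nu(B_{2}\setminus B_{1})\!=\!0$. Pulling back, $f^{-1}(B_{1})\!\subseteq\! f^{-1}(A)\!\subseteq\! f^{-1}(B_{2})$, and the two outer sets are $\mu$-measurable (being preimages of Borel sets under a universally measurable map), with $\mu\big(f^{-1}(B_{2})\setminus f^{-1}(B_{1})\big)\!=\!\mu\big(f^{-1}(B_{2}\setminus B_{1})\big)\!=\!\nu(B_{2}\setminus B_{1})\!=\!0$. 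By completeness of the $\mu$-measurable $\sigma$-algebra, $f^{-1}(A)$ is sandwiched between two $\mu$-measurable sets whose difference is $\mu$-null, hence is itself $\mu$-measurable. As $\mu$ was arbitrary, $f^{-1}(A)\in\textnormal{UM}(X)$.

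The main obstacle is really only in part (1): the measurability of analytic sets and the strictness of the inclusion are nontrivial classical results, and I would lean on the cited literature for them rather than develop the Suslin-scheme machinery in this paper. Part (2) is routine once the pushforward measure is introduced; the only subtlety worth checking is that $f_{*}\mu$ is well-defined as a Borel measure, which follows because universally measurable functions pull Borel sets back to $\mu$-measurable sets, so $\nu(B)$ makes sense for every Borel $B\subseteq Y$.
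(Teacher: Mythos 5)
The paper states this theorem without proof, as standard background material (the section points to \cite{Kechris} and the author's thesis for such facts), so there is no in-paper argument to compare against line by line. Your part (2) is correct and is the standard argument: the pushforward $\nu = f_{*}\mu$ is well-defined precisely because $f$ pulls Borel sets back to $\mu$-measurable sets, and the sandwich $B_{1}\subseteq A\subseteq B_{2}$ with $\nu(B_{2}\setminus B_{1})=0$, pulled back through $f$, exhibits $f^{-1}(A)$ as $\mu$-measurable by completeness. The inclusion $\mbox{\boldmath$\Sigma$}^{1}_{1}(X)\subseteq \textnormal{UM}(X)$ in part (1) is likewise correctly identified as the classical Lusin-type theorem, and deferring its Suslin-scheme proof to the literature is appropriate here.

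However, your argument for \emph{strictness} in part (1) contains a genuine error. There is no ``perfect set of universal measure zero'': every perfect subset of a Polish space contains a homeomorphic copy of Cantor space, which supports a nonatomic Borel probability measure concentrated on it, so no perfect set is universally null. Relatedly, the subsets of a Lebesgue-null Cantor set are all \emph{Lebesgue} measurable, but they are not all universally measurable --- the Cantor set's own coin-tossing measure witnesses failures, and under AC it has non-measurable subsets with respect to that measure. The counting claim you fall back on, that there are $2^{2^{\aleph_0}}$ universally measurable sets, is moreover not a theorem of ZFC: Larson, Neeman and Shelah have shown it consistent with ZFC that there are only $2^{\aleph_0}$ universally measurable sets, so no cardinality argument of this shape can establish strictness. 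The correct and much easier route uses closure properties you already have: $\textnormal{UM}(X)$ is a $\sigma$-algebra containing $\mbox{\boldmath$\Sigma$}^{1}_{1}(X)$, hence it also contains $\mbox{\boldmath$\Pi$}^{1}_{1}(X)$; since the projective hierarchy is proper on any uncountable Polish space, there is a coanalytic set that is not analytic, and that set witnesses strictness. (An alternative ZFC-correct repair of your idea replaces the perfect set by Hausdorff's universally null set of cardinality $\aleph_1$: any uncountable subset of it is universally measurable yet contains no perfect subset, hence is not analytic by the perfect set property for analytic sets --- but the coanalytic witness is simpler.)
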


It is not possible, in \textnormal{ZFC} alone, to show that universally measurable sets extend any further up in the projective hierarchy.
\begin{thm}
Let $X$ be a Polish space. If $\textnormal{ZFC}$ is consistent the following assertions hold:
\begin{enumerate}[\em(1)]
\item $\textnormal{ZFC}\nvdash \mbox{\boldmath$\Delta$}^{1}_{2}(X)\subseteq \textnormal{UM}(X)$,
\item $\textnormal{ZFC}\nvdash \mbox{\boldmath$\Delta$}^{1}_{2}(X)\subsetneq \textnormal{UM}(X)$.
\end{enumerate}
\end{thm}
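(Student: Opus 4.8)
The plan is to derive both non-provability statements from a single model of set theory, namely G\"odel's constructible universe $L$. Throughout I assume $\textnormal{ZFC}$ is consistent, so that $L$ is a (standard) model of $\textnormal{ZFC}$. Since all uncountable Polish spaces are Borel isomorphic, and Borel isomorphisms preserve the projective classes and, via pushforward of measures, universal measurability, it suffices to prove the statements for one uncountable Polish space, say $X=[0,1]$; this is the case of interest, as for countable $X$ every subset is Borel and the two classes coincide.

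The key classical input I would invoke is G\"odel's analysis of $L$: if $V=L$ then there is a $\mbox{\boldmath$\Delta$}^{1}_{2}$ good well-ordering of the reals, and from it one extracts a set $A\subseteq[0,1]$ that is $\mbox{\boldmath$\Delta$}^{1}_{2}$ but not Lebesgue measurable (see, e.g., \cite{Kechris}). I would then observe that a non-Lebesgue-measurable set cannot be universally measurable: Lebesgue measure on $[0,1]$ is a Borel probability measure, and by definition $\textnormal{UM}([0,1])$ is the intersection, over \emph{all} Borel probability measures $\mu$, of the $\mu$-measurable sets. Hence $A\in\mbox{\boldmath$\Delta$}^{1}_{2}([0,1])\setminus\textnormal{UM}([0,1])$, so that in $L$ we have $\mbox{\boldmath$\Delta$}^{1}_{2}([0,1])\not\subseteq\textnormal{UM}([0,1])$, and by the transfer above $\mbox{\boldmath$\Delta$}^{1}_{2}(X)\not\subseteq\textnormal{UM}(X)$ for every uncountable Polish space $X$.

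Both conclusions now follow from this single failure of the inclusion in the model $L\models\textnormal{ZFC}$. For part (1), the failure of the inclusion in a model of $\textnormal{ZFC}$ shows directly that $\textnormal{ZFC}\nvdash\big(\mbox{\boldmath$\Delta$}^{1}_{2}(X)\subseteq\textnormal{UM}(X)\big)$. For part (2), I would note that the strict inclusion $\mbox{\boldmath$\Delta$}^{1}_{2}(X)\subsetneq\textnormal{UM}(X)$ entails the plain inclusion $\mbox{\boldmath$\Delta$}^{1}_{2}(X)\subseteq\textnormal{UM}(X)$, which already fails in $L$; hence $\mbox{\boldmath$\Delta$}^{1}_{2}(X)\subsetneq\textnormal{UM}(X)$ is false in $L$ as well, giving $\textnormal{ZFC}\nvdash\big(\mbox{\boldmath$\Delta$}^{1}_{2}(X)\subsetneq\textnormal{UM}(X)\big)$.

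The main obstacle is concentrated entirely in the first step: the existence, under $V=L$, of a $\mbox{\boldmath$\Delta$}^{1}_{2}$ non-measurable set rests on G\"odel's $\mbox{\boldmath$\Delta$}^{1}_{2}$ well-ordering of $L\cap\mathbb{R}$ and is the only non-routine ingredient, which I would simply cite rather than reprove. Everything else is routine: the remark that non-measurability defeats universal measurability, the Borel-isomorphism reduction between uncountable Polish spaces, and the passage from \emph{false in a model of} $\textnormal{ZFC}$ to \emph{unprovable in} $\textnormal{ZFC}$. For context I would also remark, without proof, that the complementary consistency --- that $\mbox{\boldmath$\Delta$}^{1}_{2}(X)\subseteq\textnormal{UM}(X)$ can hold, for instance in models where every $\mbox{\boldmath$\Sigma$}^{1}_{2}$ set is measurable --- shows the inclusion is genuinely independent of $\textnormal{ZFC}$, which is why the result is at least consistent with $\textnormal{ZFC}$.
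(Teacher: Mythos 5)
Your proof of assertion (1) is, in substance, the paper's own: the paper disposes of (1) by citing G\"odel (Corollary 25.28 of \cite{Jech}), which is exactly the $\mbox{\boldmath$\Delta$}^{1}_{2}$ non-Lebesgue-measurable set in $L$ that you invoke, and the scaffolding you add --- that non-measurability defeats universal measurability, the Borel-isomorphism transfer across uncountable Polish spaces, and the passage from failure in a model to unprovability --- is the routine material the paper leaves implicit. (Your parenthetical that the theorem is really about uncountable $X$ is well taken: for countable $X$ every subset is Borel, both classes equal $2^{X}$, and (1) is outright false, so uncountability is a tacit hypothesis.)

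Assertion (2) is where your proposal and the paper part ways, and here there is a genuine gap relative to what is being asserted. You read $\subsetneq$ literally and deduce (2) vacuously from (1), since strict inclusion entails inclusion, which already fails in $L$. That inference is logically sound for the statement as typeset, but it renders (2) contentless, and it is not what the paper proves: the paper's proof of (2) cites Martin--Solovay \cite{MS70}, not G\"odel, and the sentence immediately following the theorem (``it is not decidable in $\textnormal{ZFC}$ if every set in $\mbox{\boldmath$\Delta$}^{1}_{2}(X)$ is universally measurable'') requires irrefutability of the inclusion in addition to its unprovability. So (2) is intended as $\textnormal{ZFC}\nvdash \mbox{\boldmath$\Delta$}^{1}_{2}(X)\not\subseteq\textnormal{UM}(X)$ --- the $\subsetneq$ is evidently a typo for $\not\subseteq$ --- and its proof needs a second, different model, one in which the inclusion \emph{holds}: by \cite{MS70}, $\textnormal{MA}_{\aleph_{1}}$ is consistent relative to $\textnormal{ZFC}$ and implies every $\mbox{\boldmath$\Sigma$}^{1}_{2}$ set is universally measurable (cf.\ Theorem \ref{consequences_martin}(1)), whence $\mbox{\boldmath$\Delta$}^{1}_{2}(X)\subseteq\textnormal{UM}(X)$ there. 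You state precisely this complementary consistency in your closing remark, but explicitly decline to prove it or make it part of the argument; under the intended reading it is not background colour but the entire content of (2). The two halves of the independence necessarily come from two different models, and your single-model plan through $L$ only delivers the first half --- indeed, the fact that your reading made (2) a trivial corollary of (1) was the signal that the statement had been misread.
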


\begin{proof}
The result of the first assertion is due to Kurt G\"{o}del, see, e.g.,  Corollary 25.28 of \cite{Jech}. A proof of the second assertion can be found in, e.g., \cite{MS70}.
\end{proof}

Thus, it is not decidable in $\textnormal{ZFC}$ if every set in $\mbox{\boldmath$\Delta$}^{1}_{2}(X)$ is universally measurable. As we shall see in later sections, the winning sets of pL$\muprod$ games are, in general,  $\mbox{\boldmath$\Delta$}^{1}_{2}(X)$ sets. To deal with the associated measure theoretic issues, we will carry our main result in $\textnormal{ZFC}+\textnormal{MA}_{\aleph_{1}}$ set theory, where $\textnormal{MA}_{\aleph_{1}}$ is the so-called \emph{Martin's Axiom at $\aleph_{1}$}, the first uncountable cardinal. We often identify $\aleph_{1}$ with the least uncountable ordinal $\omega_{1}$. We refer to \cite{MS70,MioThesis} for a description of the axiom  $\textnormal{MA}_{\aleph_{1}}$ and some of its set-theoretic consequences. Here we list those that are relevant to our work.

\begin{thm}[\cite{MS70}]\label{consequences_martin}
Let $X$ be a Polish space, $\mu$ be a Borel probability measure on $X$ and $\{ A_{\alpha}\}_{\alpha<\omega_{1}}$ a collection of $\mu$-measurable subsets of $X$, where $\omega_{1}$ is the least uncountable ordinal. The following assertions hold in $\textnormal{ZFC}+ \textnormal{MA}_{\aleph_{1}}$:
\begin{enumerate}[\em(1)]
\item  $\mbox{\boldmath$\Delta$}^{1}_{2}(X)\subseteq \textnormal{UM}(X)$,
\item $\omega_{1}$-completeness:  $\bigcup_{\alpha<\omega_{1}}A_{\alpha}$ is a $\mu$-measurable set,
\item $\omega_{1}$-continuity: $\mu\big( \bigcup_{\alpha< \omega_{1}}A_{\alpha} \big) = \bigsqcup_{\alpha< \omega_{1}} \mu(A_{\alpha})$.
\end{enumerate}
\end{thm}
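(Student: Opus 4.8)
The plan is to isolate a single set-theoretic fact from which all three assertions follow, and to treat it as the heart of the argument:

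\textbf{Core additivity lemma.} Under $\textnormal{MA}_{\aleph_1}$, the union of $\aleph_1$ many $\mu$-null sets is again $\mu$-null.

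To prove the core lemma I would fix $\varepsilon>0$ and force with the \emph{amoeba} poset $\mathbb{A}_\varepsilon$ whose conditions are the open sets $U\subseteq X$ with $\mu(U)<\varepsilon$, ordered by reverse inclusion. Since $X$ is $0$-dimensional Polish it has a countable basis of clopen sets, and approximating each open condition from inside by a clopen set together with a rational measure bound exhibits $\mathbb{A}_\varepsilon$ as a countable union of linked sets; hence $\mathbb{A}_\varepsilon$ is $\sigma$-linked, and in particular ccc. For each null set $N_\xi$ (with $\xi<\omega_1$) the family $D_\xi=\{U\in\mathbb{A}_\varepsilon : N_\xi\subseteq U\}$ is dense, because outer regularity of $\mu$ lets one thicken any condition $V$ by an open superset of $N_\xi$ of measure $<\varepsilon-\mu(V)$. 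Applying $\textnormal{MA}_{\aleph_1}$ to the $\aleph_1$ dense sets $\{D_\xi\}_{\xi<\omega_1}$ yields a filter $G$ meeting each $D_\xi$; its union $W=\bigcup G$ is open with $\mu(W)\le\varepsilon$ and contains every $N_\xi$. Letting $\varepsilon\to 0$ gives $\mu^*(\bigcup_\xi N_\xi)=0$, which is the lemma. This step, where $\textnormal{MA}_{\aleph_1}$ is actually invoked, is the main obstacle, the delicate point being the verification of the chain condition for $\mathbb{A}_\varepsilon$.

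From the core lemma I would derive (2) and (3) together. It suffices to treat an increasing chain, since replacing $A_\alpha$ by the partial union $\bigcup_{\beta\le\alpha}A_\beta$ — a \emph{countable} union, hence $\mu$-measurable — produces an increasing chain with the same total union. Given an increasing chain, set $s=\sup_\alpha\mu(A_\alpha)$, pick a countable sequence $\alpha_0<\alpha_1<\dots$ with $\mu(A_{\alpha_n})\uparrow s$, and let $\beta=\sup_n\alpha_n<\omega_1$; monotonicity forces $\mu(A_\beta)=s$, and then $\mu(A_\alpha)=s$ for every $\alpha\ge\beta$, so each difference $A_\alpha\setminus A_\beta$ is null. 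Thus $\bigcup_\alpha A_\alpha=A_\beta\cup\bigcup_{\alpha\ge\beta}(A_\alpha\setminus A_\beta)$ is the union of a measurable set with an $\aleph_1$-union of null sets; by the core lemma the latter is null, so the whole union is $\mu$-measurable with measure $s=\bigsqcup_\alpha\mu(A_\alpha)$. Running this for every Borel probability measure $\mu$ shows the conclusions persist for universal measurability.

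Finally, (1) reduces to (2) via descriptive set theory. By Shoenfield's tree representation every $\mbox{\boldmath$\Sigma$}^{1}_{2}$ set $A\subseteq X$ is $\omega_1$-Suslin, $A=p[T]$ for a tree $T$ on $\omega\times\omega_1$; restricting $T$ to the ordinals below a countable $\alpha$ yields an analytic set $A_\alpha$, and since any branch of $T$ uses only countably many ordinals one has $A=\bigcup_{\alpha<\omega_1}A_\alpha$ as an increasing $\omega_1$-union of analytic sets. Each $A_\alpha$ lies in $\textnormal{UM}(X)$ by Theorem~\ref{properties_univ_meas}(1), so by (2) applied to every $\mu$ we obtain $A\in\textnormal{UM}(X)$; hence $\mbox{\boldmath$\Sigma$}^{1}_{2}(X)\subseteq\textnormal{UM}(X)$, and closing under complementation (UM being a $\sigma$-algebra) gives $\mbox{\boldmath$\Pi$}^{1}_{2}(X)\subseteq\textnormal{UM}(X)$, whence $\mbox{\boldmath$\Delta$}^{1}_{2}(X)\subseteq\textnormal{UM}(X)$. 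Here the one substantial imported ingredient is Shoenfield's $\omega_1$-Suslin representation; everything else is bookkeeping layered on top of the core lemma.
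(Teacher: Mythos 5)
The paper itself gives no proof of this theorem — it is quoted directly from Martin--Solovay \cite{MS70} — and your argument is correct and is essentially the classical proof from that source: the $\sigma$-linked amoeba poset together with $\textnormal{MA}_{\aleph_1}$ gives $\aleph_1$-additivity of the $\mu$-null ideal, increasing chains then follow by picking a countable stage $\beta$ realizing $\sup_\alpha \mu(A_\alpha)$, and Shoenfield's $\omega_1$-Suslin representation writes each $\mbox{\boldmath$\Sigma$}^{1}_{2}$ set as an increasing $\omega_1$-union of analytic sets, yielding in fact the stronger conclusion $\mbox{\boldmath$\Sigma$}^{1}_{2}(X)\subseteq \textnormal{UM}(X)$. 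One caveat worth recording: assertion (3) is literally false for arbitrary collections (two disjoint sets of positive measure already refute it), so it must be read for $\subseteq$-increasing families — which is exactly how the paper applies it (to the chains $\mathbb{W}_{\alpha}$) and exactly what your reduction to increasing chains via countable partial unions proves, so your proof establishes the intended statement.
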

Note that, since singleton sets are always closed (hence measurable) in Polish spaces, it follows immediately from Theorem \ref{consequences_martin}  that Martin's Axiom at $\aleph_{1}$ implies $2^{\aleph_{0}}\neq \aleph_{1}$, i.e., the negation of \emph{Continuum Hypothesis} \cite{Jech}. As a matter of fact, Martin's Axiom was introduced by the authors as a possible set-theoretic alternative to the Continuum Hypothesis \cite{MS70}.

\section{The logic pL$\muprod$}
\label{syntax}
Given a set $\mathit{Var}$ of propositional variables ranged over by the letters $X$, $Y$ and $Z$, and a set of labels $L$ ranged 
over by the letters $a$, $b$ and $c$, the formulas of the logic pL${\muprod}$ are defined by the following 
grammar: 
\begin{center}
$F, G ::= X \   | \ F\wedge G\ | \ F \vee G\ | \  F\cdot G\ | \ F \odot G\ | \ \quadrato{a}F\ | \  \diam{a}F\ | \ \nu X.F \ | \ \mu X. F  $
\end{center}
which extends the syntax of the probabilistic modal $\mu$-calculus (pL$\mu$) with a new kind of conjunction ($\cdot$) and disjunction ($\odot$) operators called \emph{product}  and \emph{coproduct}  respectively.
As usual the operators $\nu X.F$ and $\mu X.F$ bind the variable $X$ in $F$. A formula is \emph{closed} if it has no \emph{free} variables. 

\begin{defi}[Subformula] We define the set $Sub(F)$ by case analysis on $F$ as follows: $Sub(   X  )\! \bydef\! \{  X   \}$, $Sub(   \quadrato{a}F  ) \! = \!  \{ \quadrato{a}F    \} \cup Sub(F)$, $Sub(  F_{1}\wedge F_{2}   )\! =\!  \{  F_{1}\wedge F_{2}   \} \cup Sub(F_{1} ) \cup Sub(F_{2})$, $Sub(  F_{1}\cdot F_{2}   ) \! = \!  \{  F_{1}\cdot F_{2}   \} \cup Sub(F_{1} ) \cup Sub(F_{2})$ and $Sub(  \nu X.F   )\! = \!  \{   \nu X.F \} \cup Sub(F)$. The cases for the connectives $\diam{a}$, $\vee$, $\odot$ and $\mu X$ are defined as for their duals. We say that $G$ is a \emph{subformula} of $F$ if $G\in Sub(F)$.
\end{defi}

Given a PLTS $\langle P, \{ \freccia{a} \}_{a\in L}\rangle$ we denote with $[0,1]^{P}$   the complete lattice of functions from $P$  to the real interval $[0, 1]$ with the pointwise order. A function $\rho\!:\! \mathit{Var}\!\rightarrow\! [0,1]^P$ is called a $[0,1]$-valued \emph{interpretation}, or just an interpretation, of the variables. Given a function $f\!:\! P\!\rightarrow\! [0,1]$ we denote with $\rho[f /X]$ the interpretation that assigns $f$ to the variable $X$, and $\rho (Y)$ to all other variables $Y$.

The denotational semantics $\sem{F}_{\rho}\!:\! P\! \rightarrow\! [0,1]$ of the pL${\muprod}$ formula $F$, under the interpretation $\rho$,  is defined by structural induction on $F$ as follows:
\begin{center}
\begin{tabular}{l l l}
$\sem{X}_{\rho}(p)$ & $=$ & $\rho(X)(p)$\\
$\sem{G \vee H}_{\rho}(p)$ & $=$  & $\sem{ G}_{\rho}(p) \sqcup \sem{H}_{\rho}(p)$\\
$\sem{G \wedge H}_{\rho}(p)$ & $=$  & $\sem{ G}_{\rho}(p) \sqcap \sem{H}_{\rho}(p)$\\
$\sem{G \odot H}_{\rho}(p)$ & $=$ & $\sem{ G}_{\rho}(p) \odot \sem{H}_{\rho}(p)$\\
$\sem{G \cdot H}_{\rho}(p)$ & $=$ & $\sem{ G}_{\rho}(p) \cdot \sem{H}_{\rho}(p)$\\
$\sem{\diam{a}G}_{\rho}(p) $ & $=$ & $  \displaystyle \bigsqcup_{p\freccia{a}d} \Big( \sum_{q\in \textnormal{supp}(d)} d(q) \cdot \sem{G}_{\rho}(q) \Big)$ \\ 
$\sem{\quadrato{a}G}_{\rho}(p) $ & $=$ & $  \displaystyle \bigsqcap_{p\freccia{a}d} \Big( \sum_{q\in \textnormal{supp}(d)} d(q) \cdot \sem{G}_{\rho}(q) \Big)$ \\ 
$\sem{\mu X. G}_{\rho}(p)$ & $=$ & $   \lfp \Big( \lambda f. ( \sem{G}_{\rho[ f/X]}) \Big) (p)$ \\
$\sem{\nu X. G}_{\rho}(p) $ & $=$ & $   \gfp \Big( \lambda f. ( \sem{G}_{\rho[ f/X]}) \Big) (p)$ 
\end{tabular}
\end{center}
where the symbols $\cdot$ and $\odot$ on the right hand side denote  the operations of product and coproduct on $[0,1]$, respectively.  It is easy to verify that the interpretation assigned to every pL$\muprod$ operator is monotone, thus the existence of the least and greatest fixed points is guaranteed by the Knaster--Tarski theorem. 

The interpretation of the connectives of the logic pL$\mu^{\odot}$ (and its fragment pL$\mu$) resembles the corresponding ones for L$\mu$. Both operations $\{\sqcup,\odot\}$, when restricted to the two element set $\{0,1\}$ act as ordinary boolean disjunction. Similarly, the operations $\{\sqcap,\cdot\}$ restricts to ordinary boolean conjunction. For what concerns the semantics associated with the modal operators, since in PLTS's transitions lead to probability distributions over states, rather than states, the most natural way to interpret the meaning of a formula $G$ at a probability distribution $d$ is to consider the \emph{expected probability} of the formula $G$ holding at a state $q$ randomly drawn in accordance with $d$, and this is formalized by the weighted sums in the definition above. 

\begin{rem}\label{remark_negation}
As it is common practice  when dealing with fixed point logics such as the modal $\mu$-calculus, we presented the syntax of pL$\mu^{\odot}$ in \emph{positive form}, i.e., without including a negation operator. This simplifies the presentation of the denotational semantics because all formulas in positive form are interpreted as monotone functions. A negation operator on (closed) pL$\mu^{\odot}$ formulas can be defined by induction on the structure of the formula, by exploiting the dualities between the connectives of the logic, in such a way that $\sem{\neg F}_{\rho}(p)= 1-\sem{F}_{\rho}(p)$, for all formulas $F$ and states $p$. We omit the routine details.
\end{rem}

As anticipated in the introduction, the main reason for extending the logic pL$\mu$ with the new pair of connectives of product and coproduct is to get a richer and more expressive logic capable of encoding the \emph{qualitative threshold modalities} defined as follows.
\begin{defi}\label{threshold_modalities_def}
Given a pL$\mu^{\odot}$ formula $F$, we define the macro formulas $\mathbb{P}_{>0}F$ and $\mathbb{P}_{=1}F$ as follows: $\mathbb{P}_{>0}F\bydef \mu X. (F \odot X)$ and $ \mathbb{P}_{=1}F\bydef \nu X. (F \cdot X)$, where $X$ is not free in $F$.
\end{defi}

The following lemma captures the denotational semantics of the qualitative threshold modalities.
\begin{lem}\label{qualitative_modalities_semantics}
Given a PLTS $\lts\!=\! \langle P, \{\freccia{a}\}_{a\in L}\rangle$, a $[0,1]$-interpretation of the variables $\rho\!\in\! \mathit{Var}\!\rightarrow \! [0,1]^{P}$ and a pL$\mu^{\odot}$ formula $F$, the following assertions hold:
\begin{center}
$\sem{\mathbb{P}_{>0}F}_{\rho}(p)\!=\! \left\{     \begin{array}{l  l}
 						1 & $if $\sem{F}_{\rho}(p) > 0\\
						0 & $otherwise$\\
						\end{array}      \right.
\ \ \ \ $and$ \ \ \ \ 
\sem{\mathbb{P}_{=1}F}_{\rho}(p)\!=\! \left\{     \begin{array}{l  l}
 						1 & $if $\sem{F}_{\rho}(p) =1 \\
						0 & $otherwise$\\
						\end{array}      \right.$\\
\end{center}
for every state $p\!\in\! P$.
\end{lem}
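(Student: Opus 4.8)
The plan is to reduce both claims to elementary one-dimensional fixed-point computations on the lattice $[0,1]$, exploiting the fact that the two monotone operators involved act coordinatewise on the product lattice $[0,1]^P$. First I would fix a state $p$ and abbreviate $v_p \bydef \sem{F}_\rho(p)$. Since $X$ does not occur free in $F$ (the side condition in Definition \ref{threshold_modalities_def}), the standard coincidence lemma for the denotational semantics gives $\sem{F}_{\rho[f/X]}(p) = v_p$ for every $f\!\in\![0,1]^P$; that is, the value of $F$ is independent of the interpretation assigned to $X$. Unfolding $\sem{F\odot X}_{\rho[f/X]}$ and $\sem{F\cdot X}_{\rho[f/X]}$ then shows that the monotone operator $\Phi \bydef \lambda f.\,\sem{F\odot X}_{\rho[f/X]}$ defining $\sem{\mathbb{P}_{>0}F}_\rho$ via its least fixed point is given pointwise by $\Phi(f)(p)=v_p\odot f(p)$, and that the operator $\Psi \bydef \lambda f.\,\sem{F\cdot X}_{\rho[f/X]}$ defining $\sem{\mathbb{P}_{=1}F}_\rho$ via its greatest fixed point is given pointwise by $\Psi(f)(p)=v_p\cdot f(p)$.

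The key observation is that both $\Phi$ and $\Psi$ act independently in each coordinate: $\Phi(f)(p)$ depends only on $f(p)$, and likewise for $\Psi$. I would record the general fact that when a monotone operator on a product of complete lattices decomposes as a family of coordinatewise monotone maps $(g_p)_{p\in P}$, its least (respectively greatest) fixed point is obtained by taking, in each coordinate $p$, the least (respectively greatest) fixed point of $g_p$. This is immediate: the tuple of coordinatewise extremal fixed points is itself a fixed point of the product operator, and conversely any fixed point of the product operator is, in each coordinate, a fixed point of the corresponding $g_p$, hence dominates (respectively is dominated by) the extremal one. It therefore suffices to compute, for each $p$, the least fixed point of $g_p(x)=v_p\odot x$ and the greatest fixed point of $k_p(x)=v_p\cdot x$ on $[0,1]$.

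The final step is the one-dimensional algebra. Using $x\odot y = x+y-xy$, the equation $x = v_p\odot x$ simplifies to $v_p(1-x)=0$; hence if $v_p>0$ the unique fixed point is $x=1$, while if $v_p=0$ every $x$ is a fixed point and the least one is $0$, yielding exactly the stated value of $\sem{\mathbb{P}_{>0}F}_\rho(p)$. Dually, the equation $x = v_p\cdot x$ simplifies to $x(1-v_p)=0$; hence if $v_p<1$ the unique fixed point is $x=0$, while if $v_p=1$ every $x$ is a fixed point and the greatest one is $1$, giving the stated value of $\sem{\mathbb{P}_{=1}F}_\rho(p)$. No step presents a genuine obstacle, since monotonicity of $\Phi$ and $\Psi$ (and hence applicability of Knaster--Tarski) has already been established; the only point requiring a little care is the justification that the extremal fixed points may be computed coordinatewise, which is where I would concentrate the argument.
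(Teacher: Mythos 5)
Your proposal is correct and takes essentially the same route as the paper: the paper's proof simply observes that $x\mapsto \lambda\odot x$ has $1$ as unique fixed point when $\lambda>0$ and $0$ as least fixed point when $\lambda=0$ (dually for $x\mapsto\lambda\cdot x$), and concludes that the result ``follows trivially''. You have merely spelled out the two steps the paper leaves implicit --- the coincidence lemma using that $X$ is not free in $F$, and the coordinatewise computation of extremal fixed points of a product operator --- both of which are standard and correctly handled.
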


\begin{proof}
The map $x\mapsto \lambda \odot x$, for a fixed $\lambda\!\in\![0,1]$, has $1$ as unique fixed point when $\lambda\!>\!0$, and $0$ as the least fixed point when $\lambda\!=\!0$.  Similarly for the map $x\mapsto \lambda \cdot x$. The result then follows trivially.
\end{proof}

It is easy to verify that the  derived qualitative threshold modalities, if taken as primitives, are De Morgan duals. As an immediate consequence of Lemma \ref{qualitative_modalities_semantics} we have the following fact.
\begin{prop}\label{not_continuous_proposition}
The denotational interpretation of an open pL$\mu^{\odot}$ formula $F$ is, in general, not continuous  in the free variables, i.e., the denotation of a formula, seen as a function of type $\big(\mathit{Var}\!\rightarrow\! [0,1]^P\big)\!\rightarrow\![0,1]^{P}$, where $\big(\mathit{Var}\!\rightarrow\! [0,1]^P\big)$ is ordered pointwise, does not preserve countable $\sqsubseteq$-increasing chains.
\end{prop}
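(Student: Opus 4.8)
The plan is to exhibit a single open formula together with a concrete countable increasing chain of interpretations on which the denotation fails to commute with suprema; since the statement is advertised as an \emph{immediate} consequence of Lemma~\ref{qualitative_modalities_semantics}, nothing beyond that lemma is required. The crucial observation is that although the modality $\mathbb{P}_{>0}$ happens to be continuous (if $\bigsqcup_{n} x_{n}=0$ then every $x_{n}=0$, while if $\bigsqcup_{n} x_{n}>0$ then some $x_{m}>0$, so in either case both sides of the continuity equation agree), its De~Morgan dual $\mathbb{P}_{=1}$ is \emph{not}: it is discontinuous precisely at the top element $1$. I would therefore take $F\bydef \mathbb{P}_{=1}Z = \nu X.(Z\cdot X)$, an open formula with a single free variable $Z$ (and $X\neq Z$), and work with respect to this free variable.

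For the model, any PLTS with at least one state $p$ suffices; fix such an $\lts=\langle P,\{\freccia{a}\}_{a\in L}\rangle$ and a base interpretation $\rho$. Next I would define the chain $f_{n}\in[0,1]^{P}$ by $f_{n}\bydef \lambda q.\,(1-2^{-(n+1)})$, which is $\sqsubseteq$-increasing with $\bigsqcup_{n} f_{n}=\lambda q.\,1$. Applying Lemma~\ref{qualitative_modalities_semantics} twice then completes the argument: since $f_{n}(p)=1-2^{-(n+1)}<1$ for every $n$, we get $\sem{F}_{\rho[f_{n}/Z]}(p)=0$ for all $n$, whence $\bigsqcup_{n}\sem{F}_{\rho[f_{n}/Z]}(p)=0$; whereas $(\bigsqcup_{n} f_{n})(p)=1$, so $\sem{F}_{\rho[\bigsqcup_{n} f_{n}/Z]}(p)=1$. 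The two values differ, so the map $\rho\mapsto\sem{F}_{\rho}$ does not preserve this countable $\sqsubseteq$-increasing chain, which is exactly what is asserted.

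There is no genuine obstacle here—the verification is a one-line computation once the witness is chosen—but the single point requiring care is the choice of modality together with the direction of approach. One must use $\mathbb{P}_{=1}$ rather than $\mathbb{P}_{>0}$, and the chain must approach its limit $1$ \emph{strictly from below}, so that every member of the chain is assigned value $0$ while the supremum of the chain jumps to $1$. Note that monotonicity of $\sem{F}_{(-)}$ always forces $\bigsqcup_{n}\sem{F}_{\rho[f_{n}/Z]}\sqsubseteq \sem{F}_{\rho[\bigsqcup_{n} f_{n}/Z]}$, so non-continuity can only manifest as the strict inequality exhibited above, and this is precisely the phenomenon that the game-theoretic notion of \emph{robust} Markov branching play (Section~\ref{robust_markov_plays_section}) is later designed to accommodate.
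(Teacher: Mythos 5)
Your proposal is correct and takes essentially the same route as the paper: the paper's proof likewise picks the single witness formula $\mathbb{P}_{=1}X=\nu Y.(X\cdot Y)$ (yours is the same up to renaming of the variables) and reads off from Lemma~\ref{qualitative_modalities_semantics} that its denotation jumps from $0$ to $1$ exactly when the interpretation of the free variable reaches $1$. The only difference is that you spell out the witnessing increasing chain $f_{n}=\lambda q.\,(1-2^{-(n+1)})$ explicitly, which the paper leaves implicit.
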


\begin{proof}
Consider the formula $\mathbb{P}_{=1}X\!=\!\nu Y. (X \cdot Y)$ having just one free variable $X$. We have that $\sem{\mathbb{P}_{=1}X}_{\rho}(p)\!=\!0$ if $\rho(X)(p)\!<\!1$ and  $\sem{\mathbb{P}_{=1}X}^{\lts}_{\rho}(p)\!=\!1$ if $\rho(X)(p)\!=\!1$.
\end{proof}

This contrasts with the fact that the denotations of pL$\mu$ formulas (i.e., formulas without occurrences of (co)products operators) when interpreted over finite PLTS's, are continuous in the free variables (see, e.g., Appendix C of \cite{MM07} for a proof of this fact, and \cite[\S 3.3.2.3]{MioThesis} for a discussion about this phenomenon). It then follows that the qualitative threshold modalities are not expressible in pL$\mu$. Thus, pL$\mu^{\odot}$ is a strictly more expressive logic than pL$\mu$, as previously claimed. 

The following theorem summarizes some expressivity results about the logic pL$\mu^{\odot}$. Since the focus of this paper is primarily  the study of a game semantics for pL$\mu^{\odot}$, we just refer to Theorem 7.2.16 and Proposition 7.2.5 in \cite{MioThesis} for detailed proofs.
\begin{thm}
The following propositions hold. s.
\begin{enumerate}[\em(1)]
\item the logic pL$\mu^{\odot}$ can encode the qualitative fragment of the logic \textnormal{PCTL} of \cite{BA1995},
\item there are (closed) pL$\mu^{\odot}$ formulas that can by satisfied only by infinite PLTS's. Thus pL$\mu^{\odot}$ does not satisfy the (expected adaptation of) the \emph{finite model property},
\item there are (closed) pL$\mu^{\odot}$ formulas satisfiable by some PLTS's but not satisfiable by any non-probabilistic PLTS (see Definition \ref{PLTS}),
\end{enumerate}
where a PLTS $\lts\!=\!\langle P, \{\freccia{a}\}_{a\in L}\rangle$ satisfies a (closed) formula $F$, if there is some $p\in P$ such that $\sem{F}(p)\!=\!1$.
\end{thm}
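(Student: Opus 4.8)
I would prove the three assertions separately; since (1) is a systematic translation and (3) reduces to one explicit model, the genuine difficulty lies in (2).

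\emph{Assertion (1).} The plan is to give a compositional translation $(\cdot)^{\dagger}$ sending each qualitative \textnormal{PCTL} state formula $\Phi$ to a closed pL$\muprod$ formula $\Phi^{\dagger}$ such that, on every PLTS $\lts$ and state $p$, one has $p\models\Phi$ iff $\sem{\Phi^{\dagger}}(p)=1$ (and indeed $\sem{\Phi^{\dagger}}(p)\in\{0,1\}$ always, so the real-valued semantics collapses to a genuine truth value). Boolean state formulas are handled by the matching pL$\muprod$ connectives, using the definable negation of Remark \ref{remark_negation}. The probabilistic operators are translated in two stages: first, to each path formula $\psi$ I associate a pL$\muprod$ formula $[\psi]$ whose denotation is the optimal-over-schedulers probability of $\psi$, setting $[\bigcirc\Phi]\bydef\diam{a}\Phi^{\dagger}$ for next and $[\Phi_{1}\,\mathrm{U}\,\Phi_{2}]\bydef\mu X.\big(\Phi_{2}^{\dagger}\vee(\Phi_{1}^{\dagger}\wedge\diam{a}X)\big)$ for until; these are exactly the standard reachability recurrences, which the denotational clauses for $\vee,\wedge,\diam{a},\mu$ reproduce. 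Second, I wrap $[\psi]$ in the qualitative threshold modalities of Definition \ref{threshold_modalities_def}, putting $(\mathbb{P}_{>0}\psi)^{\dagger}\bydef\mathbb{P}_{>0}[\psi]$ and $(\mathbb{P}_{=1}\psi)^{\dagger}\bydef\mathbb{P}_{=1}[\psi]$, the remaining thresholds ($\mathbb{P}_{=0}=\neg\mathbb{P}_{>0}$, $\mathbb{P}_{<1}=\neg\mathbb{P}_{=1}$) arising by De Morgan duality. Correctness is then a routine structural induction, Lemma \ref{qualitative_modalities_semantics} supplying the threshold step. The only point needing care is matching \textnormal{PCTL}'s scheduler quantification to the modalities: lower thresholds call for the sup value and hence $\diam{a}$, upper thresholds for the inf value and hence $\quadrato{a}$, and dually $\mu$ is replaced by $\nu$ for the release/globally modality. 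Because the bounds are qualitative, the test depends only on whether the optimal value is $0$, lies in $(0,1)$, or equals $1$, which is precisely what Lemma \ref{qualitative_modalities_semantics} detects.

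\emph{Assertion (3).} Here I would exhibit a single formula that can be made true only by genuine randomization. Take $\phi\bydef\mathbb{P}_{>0}\diam{b}\true$, which is $\{0,1\}$-valued by Lemma \ref{qualitative_modalities_semantics}, and set
\[
\Phi_{3}\ \bydef\ \mathbb{P}_{>0}(\diam{a}\phi)\ \wedge\ \mathbb{P}_{<1}(\diam{a}\phi),
\]
asserting $\sem{\diam{a}\phi}(p)\in(0,1)$. On any non-probabilistic PLTS (Definition \ref{PLTS}) every accessible distribution is a Dirac $\delta_{q}$, so $\sem{\diam{a}\phi}(p)=\bigsqcup_{p\freccia{a}\delta_{q}}\sem{\phi}(q)$ is a supremum of values in $\{0,1\}$, hence itself in $\{0,1\}$; therefore $\sem{\Phi_{3}}(p)=\min(\sem{\diam{a}\phi}(p),\,1-\sem{\diam{a}\phi}(p))=0$ at every $p$, so no LTS satisfies $\Phi_{3}$. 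Conversely, the PLTS with $p\freccia{a}d$, $d(q_{1})=d(q_{0})=\tfrac12$, a $b$-loop at $q_{1}$ and no $b$-transition at $q_{0}$ gives $\sem{\phi}(q_{1})=1$, $\sem{\phi}(q_{0})=0$, whence $\sem{\diam{a}\phi}(p)=\tfrac12$ and $\sem{\Phi_{3}}(p)=1$, witnessing (3) (with a finite model).

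\emph{Assertion (2), and the main obstacle.} The plan is to produce a closed formula $\Phi_{2}$ together with an infinite witnessing PLTS and then rule out finite models by pumping. The witness is a ray $s_{0}\freccia{a}s_{1}\freccia{a}\cdots$ whose $a$-step out of $s_{n}$ continues to $s_{n+1}$ with probability $1-2^{-(n+2)}$ and otherwise reaches a dead state: the ray then has probability $\prod_{n}(1-2^{-(n+2)})>0$, while every continuation factor is strictly below $1$. The formula $\Phi_{2}$ must express exactly this combination, that there is a positive-probability infinite $a$-path — captured, as in the third introductory example, by $\mathbb{P}_{>0}(\nu X.\diam{a}X)$ — \emph{all of whose steps are genuinely probabilistic}. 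Finiteness is then excluded cleanly: in a finite PLTS any infinite path traverses a cycle by the pigeonhole principle, and since the continuation factors are drawn from a finite set all lying below $1$ they are bounded away from $1$, so the product along the infinite unfolding tends to $0$, contradicting positivity; the witness escapes this only because its factors tend to $1$, which forces infinitely many distinct transitions. The hard part is constructing $\Phi_{2}$ itself. The modal operators of pL$\muprod$ aggregate over a whole accessible distribution — $\diam{a}$ and $\quadrato{a}$ take a sup or inf over transitions and, within a transition, an expected value — so a single transition's continuation probability to the next path-vertex is not directly exposed to the logic, and the side condition ``no step continues with probability $1$'' admits no Dirac-style test as in (3). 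Expressing it requires combining the product operator with the threshold modalities so that the denotation isolates, along the path chosen by the greatest fixed point, transitions splitting their mass nontrivially, and then verifying that the resulting formula evaluates to $1$ on the ray and to $0$ on every finite PLTS; this is the technical crux, carried out in detail in Proposition 7.2.5 of \cite{MioThesis}.
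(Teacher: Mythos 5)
The paper offers no proof of this theorem at all---it delegates everything to Theorem 7.2.16 and Proposition 7.2.5 of \cite{MioThesis}---so your treatment of (1) and (3) already goes beyond the source. Your proof of (3) is correct and self-contained: $\mathbb{P}_{<1}G$ is indeed definable (via Remark \ref{remark_negation}, or directly as $\mathbb{P}_{>0}\neg G$), and the Dirac argument that $\sem{\diam{a}\phi}(p)$ is a supremum of $\{0,1\}$-values on any non-probabilistic PLTS is exactly right; your shorthand $\sem{\Phi_{3}}(p)=\min(v,1-v)$ is only literally the semantics when $v\in\{0,1\}$, but that is the only case in which you invoke it, so no harm is done. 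Your translation scheme for (1) is the standard and correct shape (fixed-point recurrences for path formulas, threshold modalities of Definition \ref{threshold_modalities_def} on top, $\diam{a}/\mu$ versus $\quadrato{a}/\nu$ according to polarity), modulo pinning down whether the qualitative PCTL operators of \cite{BA1995} are read via $\forall/\exists$ scheduler quantification or via optimal values---on infinite PLTSs infima need not be attained, so the inductive invariant must fix the optimal-value reading.

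For (2) there is a genuine gap, and it is not merely that you never construct $\Phi_{2}$: the property you propose to formalize is \emph{insufficient} to exclude finite models. Consider the finite PLTS with two states $s,s'$, each having a single transition $s\freccia{a}d$, $s'\freccia{a}d$ with $d(s)=d(s')=\tfrac12$. Almost surely there is an infinite $a$-path, and \emph{every} step along every path ``splits its mass nontrivially'' (each continuation factor to the next path-vertex is $\tfrac12<1$), so a formula expressing ``positive probability of an infinite $a$-path all of whose steps are genuinely probabilistic'' would evaluate to $1$ on this two-state model. The same example defeats your pigeonhole argument: the probability of the event is not a product along a single path---here every cylinder product $(\tfrac12)^{n}$ tends to $0$, yet the event of survival has measure $1$, because the mass re-branches at each step.

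The correct side condition is state-based rather than step-based. Writing $G\bydef\nu Y.\diam{a}Y$, take
\begin{equation*}
\Phi_{2}\ \bydef\ \mathbb{P}_{>0}\Big(\nu X.\big(\mathbb{P}_{<1}(G)\wedge\diam{a}X\big)\Big),
\end{equation*}
demanding a positive-probability infinite $a$-path along which every visited state has $\sem{G}$-value strictly below $1$. On your ray one computes $\sem{G}(s_{n})=\prod_{k\geq n}\big(1-2^{-(k+2)}\big)\in(0,1)$ for all $n$ (and in the counterexample above $\sem{G}(s)=\sem{G}(s')=1$, killing the formula there), so the ray satisfies $\Phi_{2}$. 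On a finite PLTS one argues at the level of optimal values, not paths: if the inner greatest fixed point $h$ had positive value, the set $M$ of states attaining $\max h$ would be contained in $\{q\mid\sem{G}(q)<1\}$ and, by maximality, each $q\in M$ would have a transition whose support lies in $M$; staying in $M$ then yields an infinite $a$-path with probability $1$, forcing $\sem{G}=1$ on $M$ and contradicting the $\mathbb{P}_{<1}$ guard. This is the argument carried out in Proposition 7.2.5 of \cite{MioThesis}, and it is what your ``bounded away from $1$'' intuition must be replaced by.
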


%%%%%%%%%%%%%%%%%%%% TREE GAMES %%%%%%%%%%%%%%%%%%%%%%%%%
\section{Stochastic tree games}\label{tree_games_section}
In this (unavoidably long) section we introduce a new class of games which we call \emph{two-player stochastic tree games}, or just $2\frac{1}{2}$-player \emph{tree games}. Stochastic tree games generalizes standard two-player turn-based stochastic games (see, e.g., \cite{ChatPhD}, \cite{KLT99}) by allowing a new class of \emph{branching states} on which the execution of the game in split in independent concurrent subgames. Formally, stochastic tree games games are infinite duration games played by Player $1$, Player $2$ and a third probabilistic agent named \emph{Nature}, on a \emph{game arena} $\mathcal{A}\!=\!\langle (S,E), (S_{1},S_{2},S_{N}, B), \pi \rangle$, where $(S,E)$ is a directed graph with countable set of vertices $S$ and transition relation $E$,  $(S_{1},S_{2},S_{N},B)$ is a partition of $S$ and $\pi\! :\! S_{N}\!\rightarrow\! \mathcal{D}(S)$.
The states in $S_{1}$, $S_{2}$, $S_{N}$ and $S_{B}$ are called \emph{Player $1$} states, \emph{Player $2$} states, \emph{probabilistic} states and \emph{branching} states respectively. We denote with $E(s)$, for $s\!\in\! S$, the set $\{ s^{\prime}\ | \ (s,s^{\prime})\in E\}$. As a technical constraint, we require\footnote{The constraint $\supp(\pi({s}))\!=\! E(s)$, which might look more natural, is unnecessarily restrictive since one just one want to impose that Nature's choices always belong to the set of successor states. Our relaxed constraint will be technically convenient, see, e.g., Footnote \ref{nota_successors}.} that $\supp(\pi({s}))\!\subseteq\! E(s)$, for every $s\!\in\! S_{N}$.

\begin{defi}[Paths in $\arena$]
We denote with $\mathcal{P}^{\omega}$ and $\mathcal{P}^{<\omega}$ the sets of infinite and finite paths in $\arena$. Given a finite path $\vec{s}\!\in\!\mathcal{P}^{<\omega}$ we denote with $\first(\vec{s})$ and $\last(\vec{s})$ the first and last state of  $\vec{s}$, respectively. Given a finite path $\vec{s}$ and a (finite or infinite) path $\vec{t}$ we write $\vec{s}\lhd \vec{t}$ if $\vec{s}$ is a (not necessarily proper) prefix of  $\vec{t}$ and, provided that $\first(\vec{t})\!\in\!E\big(\last(\vec{s}) \big)$, we write $\vec{s}.\vec{t}$ for the concatenation of the two paths.  We denote with $\mathcal{P}^{t}$ the set of finite paths ending in a terminal state, i.e., the set of paths $\vec{s}$ such that $E(\last(\vec{s}))\!=\!\emptyset$. Similarly, we denote with $\mathcal{P}^{<\omega}_{1}$ and $\mathcal{P}^{<\omega}_{1}$  the sets of finite paths ending in a state in $S_{1}$ and $S_{2}$, respectively.
  We denote with $\mathcal{P}$ the set $\mathcal{P}^{\omega}\cup \mathcal{P}^{t}$ and we refer to this set as the set of \emph{completed paths} in $\arena$. Given a finite path $\vec{s}\!\in\! \mathcal{P}^{<\omega}$, we denote with $O_{\vec{s}}$ the set of all completed paths having $\vec{s}$ as prefix. We consider the standard topology on $\mathcal{P}$ where the countable basis for the open sets is given by the clopen sets $O_{\vec{s}}$, for $\vec{s}\!\in\! \mathcal{P}^{<\omega}$. This is a $0$-dimensional Polish space \cite{MioThesis}.
\end{defi}

\begin{defi}[Tree in $\arena$]
A  \emph{tree} in the arena $\arena$ is a collection $T\!=\!\{ \vec{s}_{i} \}_{i\in I}$ of finite paths $\vec{s}_{i}\!\in\! \mathcal{P}^{<\omega}$, such that
\begin{enumerate}[(1)]
\item $T$ is down-closed: if $\vec{s}\!\in\! T$ and $\vec{t}\lhd \vec{s}$  then $\vec{t}\!\in\!T$.
\item $T$ has a root: there exists exactly one finite path $\{s\}$ of length one in $T$. The state $s$, denoted by $\rootnode(T)$, is called the root of the tree $T$.
\end{enumerate}
We  consider the nodes $\vec{s}$ of $T$ as labeled by the $\last$ function.  
\end{defi}

\begin{defi}[Uniquely and fully branching nodes of a tree]
A node $\vec{s}$ in a tree $T$ is said to be \emph{uniquely branching} in $T$
if either $E(\last(\vec{s}))\! =\! \emptyset$ or $\vec{s}$ has a unique
child in $T$. Similarly, $\vec{s}$ is \emph{fully branching} in $T$
if, for every $s\! \in\! E(\last(\vec{s}))$, it holds that $\vec{s}.s\!\in\!T$.
\end{defi}

An outcome of the game in $\arena$, which we call a \emph{branching play}, is a possibly infinite tree $T$ in $\arena$ defined as follows:
\begin{defi}[Branching play in $\arena$]\label{branching_play} A \emph{branching play} in the arena $\arena$ is a tree $T$ in $\arena$ such that, for every node $\vec{s}\!\in\! T$ the following conditions holds:
\begin{enumerate}[(1)]
\item If $\last(\vec{s})\!\in\! S_{1}\cup S_{2}\cup S_{N}$  then $\vec{s}$ branches uniquely in $T$.
\item If $\last(\vec{s})\!\in\! B$ then $\vec{s}$ branches fully in $T$.
\end{enumerate}
We denote with $\bp$ the set of branching plays $T$ in the arena $\arena$. 
\end{defi}
A branching play $T$ represents a possible execution of the game from the state $s$ labeling the root of $T$. The nodes of $T$ with more than one child are all labeled with a state $s\!\in\! B$ and are the branching points of the game. Their children represent the many independent instances of play generated at the branching point.

\begin{defi}[Topology on $\bp$] \label{topology_bp}Given a finite tree $F$ in $\arena$, we denote with $O_{F}\subseteq\bp$ the set of all branching plays $T$ such that $F\subseteq T$. We fix the topology on $\bp$, where the basis for the open sets is given by the clopen sets $O_{F}$, for every branching-play prefix $F$. It is routine to show that this is a $0$-dimensional Polish space \cite{MioThesis}.
\end{defi}
As usual when working with \emph{stochastic} games, it is useful  to look at the possible outcomes of a play up-to the behavior of Nature. In the context of standard two-player stochastic games this amounts to considering Markov chains. In our setting the following definition of Markov branching play is natural:

\begin{defi}[Markov branching play in $\arena$]\label{markov_branching_play} A \emph{Markov branching play} in $\arena$ is a tree $M$ in $\arena$ such that  for every node $\vec{s} \!\in\! M$, the following conditions hold:
\begin{enumerate}[(1)]
\item If $\last(\vec{s})\!\in\! S_{1}\cup S_{2}$  then $\vec{s}$ branches uniquely in $T$.
\item If $\last(\vec{s})\!\in\! S_{N}\cup B$ then $\vec{s}$ branches fully in $T$.
\end{enumerate}
\end{defi}
A Markov branching play is similar to a branching play except that probabilistic choices of Nature have not been resolved. 

\begin{defi}[Probability measure $\mathbb{P}_{M}$]
\label{measure_definition}
Every Markov branching play $M$ determines a probability assignment $\mathbb{P}_{M}(O_{F})$ to every basic clopen set $O_{F}\!\subseteq\!\bp$, for $F$ a finite tree in $\arena$ (we can assume that every node $\vec{s}\!\in\!F$ such that $\last(\vec{s})\!\in\!S_{N}$ has a unique child in $F$, since otherwise $O_{F}\!=\!\emptyset$),  defined as follows:

\begin{center}
 $\mathbb{P}_{M}(O_{F}) \bydef	 \left\{     \begin{array}{l  l}  \displaystyle \prod \{Ê\pi (s)(s^{\prime}) \ | \  {\vec{s}.s.s^{\prime}}\! \in\! F  \wedge s\!\in\! S_{N}\}\ \  & $if $ F\subseteq M\\
 											 0 & $otherwise$
						 \end{array}      \right.$

\end{center} 
Such an assignment extends, by Carath\'{e}odory's Extension Theorem \cite{Tao_measuretheory,MioThesis}, to
 a unique Borel probability measure on $\bp$, whence to a complete probability measure, also denoted by $\mathbb{P}_{M}$.
\end{defi}

It is the definition above that  implements  the \emph{probabilistic independence} of the sub-branching plays  rooted at some branching node.

\begin{defi}[Two-player stochastic tree game]\label{tree_game_def}
A \emph{two-player stochastic tree game} (or a $2\frac{1}{2}$-player tree game) is given by a pair $\langle \arena, \Phi\rangle$, where $\arena$ is a stochastic tree game arena as described above, and $\Phi\!\subseteq\!\bp$, which is the \emph{objective} or \emph{winning set} for Player $1$, is a universally measurable set of branching plays in $\arena$.
\end{defi}

\begin{defi}[Expected value of a Markov branching play]
\label{value_markov_play}
Let $\langle \arena, \Phi\rangle$ be a $2\frac{1}{2}$-player tree game, and $M$ a Markov branching play in $\arena$. We define the \emph{expected value}  of $M$ as follows:  $\expected(M)\!=\! \mathbb{P}_{M}(\Phi)$. The value $\expected(M)$ should be understood as the probability for Player $1$ to win the probabilistic play represented by $M$.
\end{defi}
As usual in game theory, players' moves are determined by strategies. 

\begin{defi}[Deterministic strategies]
An (\emph{unbounded memory deterministic}) strategy $\sigma_{1}$ for Player $1$ in $\arena$ is defined as a function $\sigma_{1}\!:\!\mathcal{P}_{1}^{<\omega}\!\rightarrow\! S\cup \{ \bullet  \}$ such that $\sigma_{1}(\vec{s})\!\in\! E(\last(s))$ if $E(\last(\vec{s}))\!\not =\! \emptyset$ and $\sigma_{1}(\vec{s})\!=\! \bullet$ otherwise. Similarly a \emph{strategy} $\sigma_{2}$ for Player $2$ is defined as a function $\sigma_{2}\!:\!\mathcal{P}_{2}^{<\omega}\!\rightarrow\! S\cup \{ \bullet  \}$. A pair $\langle \sigma_{1},\sigma_{2}\rangle$ of strategies, one for each player, is called a  \emph{strategy profile} and determines the behaviors of both players.
\end{defi}
Note that the above definition of strategy captures the intended behavior of the game: both players, when acting on a given instance of the game, know all the history of the actions happened on that subgame, but have no knowledge of the evolution of the other \emph{independent} parallel subgames.

\begin{defi}[$M^{s_{0}}_{\sigma_{1},\sigma_{2}}$] Given an initial state $s_{0}\!\in\! S$ and a strategy profile $\langle\sigma_{1},\sigma_{2}\rangle$ a unique Markov branching play  $M^{s_{0}}_{\sigma_{1},\sigma_{2}}$ is determined:
\begin{enumerate}[(1)]
\item the root of $M$ is labeled with $s_{0}$,
\item for every $\vec{s}\!\in\! M^{s_{0}}_{\sigma_{1},\sigma_{2}}$, if $\last(\vec{s})\!=\!s$ with $s\!\in\! S_{1}$ not a terminal state, then the unique child of  $\vec{s}$ in $M^{s_{0}}_{\sigma_{1},\sigma_{2}}$ is $\vec{s}.\big(\sigma_{1}(\vec{s})\big)$,
\item for every $\vec{s}\!\in\! M^{s_{0}}_{\sigma_{1},\sigma_{2}}$, if $\last(\vec{s})\!=\!s$ with $s\!\in\! S_{2}$ not a terminal state, then the unique child of  $\vec{s}$ in $M^{s_{0}}_{\sigma_{1},\sigma_{2}}$ is $\vec{s}.\big(\sigma_{2}(\vec{s})\big)$.
\end{enumerate}
This specifies uniquely $M^{s_{0}}_{\sigma_{1},\sigma_{2}}$  because Markov branching plays branch fully on probabilistic and branching states. 
\end{defi}

\begin{defi}[Upper and lower values of a $2\frac{1}{2}$-player tree game]\label{deterministic_value}
Let $\game\!=\!\langle \arena, \Phi\rangle$ be a $2\frac{1}{2}$-player tree game. We define the lower and upper values of $\game$ on the state $s$, denoted by $\val^{s}_{\downarrow}(\game)$ and $\val^{s}_{\uparrow}(\game)$ respectively, as follows:
\begin{center}
$\val^{s}_{\downarrow}(\game)= \bigsqcup_{\sigma_{1}}\bigsqcap_{\sigma_{2}}\!\! \expected (M^{s}_{\sigma_{1},\sigma_{2}}) \ \ \ \ \ \ \ \ \val^{s}_{\uparrow}(\game)= \bigsqcap_{\sigma_{2}}\bigsqcup_{\sigma_{1}}\!\! \expected(M^{s}_{\sigma_{1},\sigma_{2}})$
\end{center}
\end{defi}
$\val^{s}_{\downarrow}(\game)$ represents the limit probability of Player $1$ winning, when the game begins at $s$, by choosing his strategy $\sigma_{1}$ first and then letting Player $2$ pick an appropriate counter strategy $\sigma_{2}$. Similarly $\val^{s}_{\uparrow}(\game)$ represents the limit probability of Player $1$ winning, when the game begins at $s$, by first letting Player $2$ choose a strategy $\sigma_{2}$ and then picking an appropriate counter strategy $\sigma_{1}$. Clearly, for every $s$, the following inequality holds: $\val^{s}_{\downarrow}(\game) \leq \val^{s}_{\uparrow}(\game)$. In the special case (not true in general) that this inequality is an equality, we say that the game $\game$  is \emph{determined} at $s$.

\begin{defi}[$\varepsilon$-optimal strategies]
\label{epsilon_optimal}
Let $\game\!=\!\langle \mathcal{A}, \Phi\rangle$ be a $2\frac{1}{2}$-player tree game. We say that a strategy $\sigma_{1}$  for Player $1$ in $\game$ is $\varepsilon$-optimal, for $\varepsilon\!\geq\!0$, if for every state $s$, the following inequality holds: 
\begin{center}
$\bigsqcap_{\sigma_{2}} \!\! \expected (M^{s}_{\sigma_{1},\sigma_{2}}) \geq \val_{\downarrow}^{s}(\game) - \varepsilon$.
\end{center}
Similarly we say that a deterministic strategy $\sigma_{2}$ for Player $2$ in $\game$ is  $\varepsilon$-optimal, for $\varepsilon\!\geq\!0$, if  for every state $s$, the following inequality holds: 
\begin{center}
$\bigsqcup_{\sigma_{1}} \!\! \expected (M^{s}_{\sigma_{1},\sigma_{2}}) \leq \val_{\uparrow}^{s}(\game) + \varepsilon$.
\end{center}
Clearly $\varepsilon$-optimal (mixed and deterministic) strategies for Player $1$ and Player $2$ always exist for every $\varepsilon\! >\!0$, but not necessarily so for $\varepsilon\! =\!0$.  
\end{defi}

\begin{rem}\label{remark_tree_games_without_branching_states}
Observe that a $2\frac{1}{2}$-player tree game without branching states is just an ordinary two-player  turn-based stochastic game (see, e.g., \cite{KLT99}, \cite{MioThesis}): the set of branching plays is homeomorphic to the set of completed paths, the notion of Markov branching play collapses to the standard notion of Markov chain, and strategies are maps from finite paths to successor states. Thus, as previously claimed, two-player stochastic tree games constitute a generalization of ordinary two-player stochastic games. 
\end{rem}

Notwithstanding our primary interest in  an appropriate game semantics for pL$\mu^{\odot}$, we highlighting here that the simple form of  \emph{partial information} implemented in  tree games (players when acting on a given subgame are not aware of what happens in the other independent subgames) is surprisingly powerful. For instance the class of Blackwell games \cite{blackwell69,Martin98} can be encoded as two-players tree games (Theorem 4.2.18 in \cite{MioThesis}). Moreover, interesting open problems in the field, such as that of \emph{qualitative determinacy} of stochastic games (see, e.g., \cite{vaclav2011a}), can be formulated as appropriate determinacy problems for $2$-player tree games (see, \cite[\S 4.4]{MioThesis}). We refer to the author's PhD thesis \cite{MioThesis} for an extensive introduction and analysis of tree games.

\subsection{Two-player stochastic meta-parity games}\label{section_meta_parity}
In this subsection we identify a class of two-player stochastic tree games, called $2\frac{1}{2}$-player meta-parity games, which will be used to give an appropriate game semantics to the logic pL$\mu^{\odot}$.

A  $2\frac{1}{2}$-player meta-parity game is  a $2\frac{1}{2}$-player tree game $\game\!=\!\langle \arena,\Phi\rangle$, which we refer to as the \emph{outer} game, in which every branching play $T$ is itself interpreted as a (ordinary) two-player parity game $G_{T}$, which we refer to as the \emph{inner game associated with $T$}, and whose objective $\Phi$  is defined as the set of branching plays $T$ for which Player $1$ has a winning strategy in $G_{T}$.

We start formalizing this notion with the following definitions.
\begin{defi}\label{parity_assignment_def}
A \emph{parity assignment} $\pr$ for a two-player stochastic tree game arena $\arena\!=\!\langle (S,E), ( S_{1},S_{2},S_{N}, B ), \pi \rangle$ is a function $\pr\!:\! S\!\rightarrow\! \mathbb{N}$ whose image is finite. In other words $\pr$ assigns to each state $s\!\in\! S$ a natural number, also referred to as a \emph{priority},  taken from a finite pool of options $\{n_{0}, . . . , n_{k}\} \!=\! \pr(S)$. We denote with $\max(\pr)$, $\min(\pr)$ and $|\pr|$ the natural numbers $\max\{n_{0},...,n_{k}\}$, $\min\{n_{0}, . . . , n_{k}\}$ and $|\{n_{0}, . . . , n_{k}\}|$ respectively.
\end{defi}

The function $\pr$ induces a set of completed paths, denoted by $\mathcal{W}_{\pr}$, specified as follows.

\begin{defi}\label{WPR}
Let $\arena\!=\!\langle (S,E),(S_{1},S_{2},S_{N},B),\pi\rangle$ be a $2\frac{1}{2}$-player tree game arena and $\pr$  a parity assignment for it.
A completed path $\vec{s}$ belong to the \emph{parity set induced by $\pr$}, denoted by  $\mathcal{W}_{\pr}\!\subseteq\!\mathcal{P}$, if either:
\begin{enumerate}[(1)]
\item $\vec{s}$ is a finite terminated path, i.e., $\vec{s}\!\in\!\mathcal{P}^{t}$, and the priority assigned to the last state of $\vec{s}$ is odd, i.e., $\pr\big(\last(\vec{s})\big)\! \equiv\! 1 \!\pmod{2}$, or
\item $\vec{s}$ is infinite, i.e., $\vec{s}\!\in\!\mathcal{P}^{\omega}$ with $\vec{s}\!=\! \{s_{i}\}_{i\in\mathbb{N}}$, and the greatest priority assigned to infinitely many states $s_{i}$ in $\vec{s}$ is even, i.e., $\big(\displaystyle \limsup_{i\in \mathbb{N}} \pr(s_{i})\big)  \equiv\! 0 \!\pmod{2}$.
\end{enumerate}
It is well known (see, e.g., \cite{ChatPhD}) that $\mathcal{W}_{\pr}$ is a  $\mbox{\boldmath$\Delta$}^{0}_{3}$ set, hence a Borel set.
\end{defi}

\begin{defi}
A \emph{player assignment} $\pl$ for a two-player stochastic tree game arena $\arena\!=\!\langle (S,E), ( S_{1},S_{2},S_{N}, B ), \pi \rangle$ is a function $\pl\!:\! B\!\rightarrow\! \{1,2\}$. We often include the information provided by $\pl$ directly in the signature of $\arena$ by considering the partition $( S_{1},S_{2},S_{N}, B_{1},B_{2})$, where $B_{i}\!=\!\pl^{-1}(\{i\})$, for $i\!\in\!\{1,2\}$. 
\end{defi}

The function $\pl$ assigns a \emph{player identifier} to each state $s\!\in\! B$. This allows to consider each branching play $T$ in $\arena$ as  a parity game $G_{T}$  (induced by $\pr$ and $\pl$) played  by Player $1$ and Player $2$ on the tree $T$, where Player $1$ and Player $2$ controls the vertices $\vec{s}$ of $T$  such that $\pl(\last(\vec{s}))\!=\!1$ and  $\pl(\last(\vec{s}))\!=\!2$ respectively. All other vertices are either leaves, in which case the game $G_{T}$ ends, or have a unique child, towards which the game automatically progresses. The result of the game is a branch in $T$ or, equivalently, a completed path in $\arena$. Adopting standard terminology (see, e.g., \cite{Martin75}, \cite{Stirling96}), we say that Player $1$ (respectively Player $2$) wins the parity game $G_{T}$ if they have a winning strategy guaranteeing the outcome of the game to be in the set $\mathcal{W}_{\pr}$ (respectively, in $\mathcal{P}\setminus\mathcal{W}_{\pr}$). It is well known (see, e.g., \cite{Martin75}), that one of the two player has a winning strategy in $G_{T}$.

We are now ready to formally introduce the notion of $2\frac{1}{2}$-player meta-parity game.
\begin{defi}\label{def_metaparity}
Given a $2\frac{1}{2}$-player tree game arena, a priority assignment $\pr$ and a player assignment $\pl$ for it, the associated \emph{two-player stochastic meta-parity game} is defined as the $2\frac{1}{2}$-player tree game $\game\!=\!\langle \arena,\Phi_{\pr,\pl}\rangle$, where $\Phi_{\pr,\pl}\!\subseteq\!\bp$ is defined as follows:
\begin{center}
$\Phi_{\pr,\pl}= \{ T \ | \ T\in\bp$ and Player $1$ has a winning strategy in $G_{T}\}$.
\end{center} 
Note that, by previous observations, the set $\bp\setminus \Phi_{\pr,\pl}$ can be specified as follows:
\begin{center}
$\bp\setminus \Phi_{\pr,\pl}= \{ T \ | \ T\in\bp$ and Player $2$ has a winning strategy in $G_{T}\}$.
\end{center} 
We often just write $\Phi$ if the priority and player assignments  are clear from the context.
\end{defi}

\begin{rem}\label{remark_meta_games_without_branching_states}
Note that a $2\frac{1}{2}$-player meta-parity game $\game\!=\!\langle \arena,\Phi_{\pr,\pl}\rangle$ without branching states is just an ordinary $2\frac{1}{2}$-player game. The game $G_{T}$ associated with a branching play $T$ (which is just a completed path $\vec{s}$, see Remark \ref{remark_tree_games_without_branching_states}) is trivial, and belongs to   $\Phi_{\pr,\pl}$ if and only if $\vec{s}\!\in\!\mathcal{W}_{\pr}$. Thus $2\frac{1}{2}$-player meta-parity games generalize ordinary $2\frac{1}{2}$-player parity games in an obvious way.
\end{rem}

Note that Definition \ref{def_metaparity} is meaningful, in accordance with Definition \ref{tree_game_def}, only if the set $\Phi_{\pr,\pl}$ is a universally measurable set of branching plays. As we now discuss, this turns out to be a delicate point.

\begin{thm}\label{delta_complexity}
Given a $2\frac{1}{2}$-player tree game arena $\arena$, a priority assignment $\pr$ and a player assignment $\pl$ for it, the associated set $\Phi_{\pr,\pl}$ of branching plays is a $\mbox{\boldmath$\Delta$}^{1}_{2}$ set.
\end{thm}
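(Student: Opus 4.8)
The plan is to show that $\Phi_{\pr,\pl}$ is simultaneously $\mbox{\boldmath$\Sigma$}^{1}_{2}$ and $\mbox{\boldmath$\Pi$}^{1}_{2}$, whence $\mbox{\boldmath$\Delta$}^{1}_{2}$ by Definition \ref{projective_symbol_list}. The $\mbox{\boldmath$\Sigma$}^{1}_{2}$ membership will come from writing ``Player $1$ wins $G_{T}$'' in the form $\exists\sigma_{1}\,\forall\sigma_{2}\,[\,\cdots\,]$ with a Borel matrix; the $\mbox{\boldmath$\Pi$}^{1}_{2}$ membership will come from the determinacy of the inner parity games $G_{T}$ together with the symmetric analysis applied to Player $2$.

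First I would fix a coding of inner strategies that does not depend on the branching play $T$. Since every branching play is a tree whose nodes are finite paths drawn from the countable set $\mathcal{P}^{<\omega}$, I set $D_{i}=\{\vec{s}\in\mathcal{P}^{<\omega}\mid \last(\vec{s})\in B_{i}\}$ for $i\in\{1,2\}$ and let $\Sigma_{i}=\prod_{\vec{s}\in D_{i}}\big(E(\last(\vec{s}))\cup\{\bullet\}\big)$ be the space of inner strategies for Player $i$. As a countable product of countable discrete spaces, each $\Sigma_{i}$ is a $0$-dimensional Polish space. A global $\sigma_{i}\in\Sigma_{i}$ restricts to a genuine strategy on any $T$, and conversely any strategy on a fixed $T$ extends to a global one; hence Player $1$ has a winning strategy in $G_{T}$ if and only if $\exists\sigma_{1}\in\Sigma_{1}\,\forall\sigma_{2}\in\Sigma_{2}$ the branch of $T$ determined by $(\sigma_{1},\sigma_{2})$ lies in $\mathcal{W}_{\pr}$.

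Next I would verify that the outcome map $f\colon\bp\times\Sigma_{1}\times\Sigma_{2}\to\mathcal{P}$, sending $(T,\sigma_{1},\sigma_{2})$ to the completed path obtained by starting at $\rootnode(T)$ and following, at each node, the unique child dictated by $T$ at non-branching nodes and the strategy's choice at nodes of $B_{1}$ or $B_{2}$, is Borel measurable. The value of $f$ on a basic clopen set $O_{\vec{t}}$ depends only on a finite initial segment of the computation: at a non-branching node the unique child is pinned down by the clopen conditions ``$\vec{s}.s'\in T$'' furnished by Definition \ref{topology_bp}, and at a branching node it is read off directly from $\sigma_{1}$ or $\sigma_{2}$. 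Consequently $f^{-1}(O_{\vec{t}})$ is Borel, and since $\mathcal{W}_{\pr}$ is a $\mbox{\boldmath$\Delta$}^{0}_{3}$ (hence Borel) set by Definition \ref{WPR}, the matrix $\{(T,\sigma_{1},\sigma_{2})\mid f(T,\sigma_{1},\sigma_{2})\in\mathcal{W}_{\pr}\}=f^{-1}(\mathcal{W}_{\pr})$ is Borel. The inner co-projection $\forall\sigma_{2}$ turns this into a $\mbox{\boldmath$\Pi$}^{1}_{1}$ condition, and the outer projection $\exists\sigma_{1}$ then yields $\Phi_{\pr,\pl}\in\mbox{\boldmath$\Sigma$}^{1}_{2}$ by Definition \ref{projective_symbol_list}.

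Finally, since $\mathcal{W}_{\pr}$ is Borel, the inner game $G_{T}$ is determined (Martin \cite{Martin75}), so exactly one of the two players has a winning strategy; this is precisely the dichotomy already recorded in Definition \ref{def_metaparity}. Thus $\bp\setminus\Phi_{\pr,\pl}=\{T\mid \text{Player }2\text{ wins }G_{T}\}$, and running the identical argument with the roles of the players swapped and $\mathcal{W}_{\pr}$ replaced by the Borel set $\mathcal{P}\setminus\mathcal{W}_{\pr}$ shows this complement is also $\mbox{\boldmath$\Sigma$}^{1}_{2}$. Hence $\Phi_{\pr,\pl}\in\mbox{\boldmath$\Pi$}^{1}_{2}$, and combined with the previous paragraph $\Phi_{\pr,\pl}\in\mbox{\boldmath$\Delta$}^{1}_{2}$. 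I expect the only genuinely delicate step to be the uniform-in-$T$ Borel measurability of the outcome map $f$: the strategy spaces must be chosen independently of $T$ so that a single Borel matrix supports the two real quantifiers, and the argument that the unique child selected at a non-branching node is a Borel function of $T$ relies on the clopen description of $\bp$ from Definition \ref{topology_bp}.
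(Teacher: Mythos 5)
Your proposal is correct and follows essentially the same route as the paper's proof: a Polish space of global strategies indexed by finite paths ending in branching nodes, a Borel (in fact continuous) outcome map into $\mathcal{P}$ pulled back against the Borel set $\mathcal{W}_{\pr}$, the quantifier pattern $\exists\sigma_{1}\forall\sigma_{2}$ giving $\mbox{\boldmath$\Sigma$}^{1}_{2}$, and determinacy of the inner parity games plus the player/priority dualization giving the $\mbox{\boldmath$\Pi$}^{1}_{2}$ bound. Your one deviation is a harmless streamlining---exploiting that branching plays branch fully at $B$-nodes so that validity of strategies is built into the product space, where the paper instead restricts to a closed set $\mathcal{T}$ of valid triples---though you should take the factor at $\vec{s}$ to be $E(\last(\vec{s}))$ alone when it is nonempty (reserving $\bullet$ for terminal nodes), since admitting $\bullet$ everywhere reintroduces invalid strategies on which the outcome map is undefined.
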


\begin{proof}
Let us denote with $\mathcal{P}^{<\omega}_{B_i}$ the set of finite paths $\vec{s}\!\in\!\mathcal{P}^{<\omega}$ in $\arena$ such that $\last(\vec{s})\!\in\! B$ and $\pl(\last(\vec{s}))\!=\! i$, for $i\!\in\!\{1,2\}$. Let us consider the set $\Sigma_{i}$ of functions $\mathcal{P}_{B_i}^{<\omega}\!\rightarrow\! \mathcal{P}^{<\omega}\!\cup\!\{ \bullet \}$. This set contains all the strategies available to Player $i$ in every game $\game_{T}$, for $T\!\in\! \bp$, seen as functions $f\!\in\! \Sigma_{i}$ restricted to $T$.  We endow $\Sigma_{i}$ with the Baire space-like topology, where for every pair $(x,y)$, with $x\!\in\!\mathcal{P}^{<\omega}_{B_i}$ and $y\!\in\!\mathcal{P}^{<\omega}\!\cup\! \{\bullet\}$, the set $O_{x,y}$ of all functions $f\!\in\!\Sigma_{i}$ such that $f(x)\!=\!y$ is a basic open set. This is a $0$-dimensional Polish space. 

Let us now consider the subset of  $\bp\times \Sigma_{\uno} \times \Sigma_{\due}$, denoted by $\mathcal{T}$, consisting of all triples $(T,\sigma_{\uno},\sigma_{\due})$ such that the strategies $\sigma_{\uno}$ and $\sigma_{\due}$ are valid strategies in $\game_{T}$, i.e., functions from finite sequences of vertices in $T$ to vertices in $T$. It is easy to see that $\mathcal{T}$ is a closed subset of $\bp\times \Sigma_{\uno} \times \Sigma_{\due}$ (endowed with the product topology). Indeed, the set of triples which do not belong to $\mathcal{T}$ is open, because one can tell if one of the two strategies $\sigma_{\uno}$ and $\sigma_{\due}$ is not valid in the inner game $\game_{T}$, i.e., it makes choices which are not in $T$, just by looking at finite information about $T$, $\sigma_{\uno}$ and $\sigma_{\due}$. Hence $\mathcal{T}$ is a Polish space, as it is a closed subset of the Polish space $\bp\times \Sigma_{\uno} \times \Sigma_{\due}$.

Let us denote with $out\!:\! \mathcal{T} \rightarrow \mathcal{P}$ the function which maps a triple $(T,\sigma_{\uno},\sigma_{\due})$ to the induced play (i.e., a completed path in $\arena$) in $\game_{T}$. The function $out$ is clearly continuous.

Let us now consider the set $A\!\subseteq\! \mathcal{T}$ defined as the set of triples $(T,\sigma_{\uno}, \sigma_{\due})$  such that the game $G_{T}$ is won by Player $\due$ when the two players follow the (valid for $\game_{T}$) strategies $\sigma_{\uno}$ and $\sigma_{\due}$ respectively, i.e., the set formally defined as $A\!=\! out^{-1}(\mathcal{P}\setminus \mathcal{W}_{\pr})$.
Since $out$ is continuous and $\mathcal{W}_{\pr}$ is a Borel set, it follows that $A$ is a Borel set.

Let us now define the set $B\!\subseteq\! \bp\!\times\! \Sigma_{\uno}$ as $B\!=\!\{ (T,\sigma_{\uno}) \ | \ \exists \sigma_{\due}\!\in\! \Sigma_{\due}. (T,\sigma_{\uno},\sigma_{\due})\in A\}$. The set $B$ is the set of all pairs $(T,\sigma_{\uno})$,  such that Player $\due$ has a strategy $\sigma_{\due}$ in the game $\game_{T}$ winning against $\sigma_{\uno}$, i.e., such that the strategy profile $(\sigma_{\uno},\sigma_{\due})$ induces in $\game_{T}$ a completed path in $\mathcal{P}\setminus \mathcal{W}_{\pr}$. The set $B$ is a $\mbox{\boldmath$\Sigma$}^{1}_{1}$ set by construction.  Observe that its complement $\overline{B}$ is the set of all pairs $(T,\sigma_{\uno})$ such that Player $\due$ does not have a strategy $\sigma_{\due}$ for the game $\game_{T}$  winning against the strategy $\sigma_{\uno}$. Equivalently, by determinacy of $2$-player parity games \cite{Martin75}, the strategy  $\sigma_{\uno}$ is  a  winning strategy for Player $\uno$ in the game $\game_{T}$.  The set $\overline{B}$ is a $\mbox{\boldmath$\Pi$}^{1}_{1}$ set by construction. We can now define the set $\Phi_{\pr,\pl}\!\subseteq\!\bp$ of all branching plays $T$ where Player $\uno$ has a winning strategy in $\game_{T}$ as $\Phi_{\pr,\pl} = \{ T \ | \ \exists \sigma_{\uno}\!\in\! \Sigma_{\uno}. (T,\sigma_{\uno})\!\in\! \overline{B}\}$. It then follows that, by construction, $\Phi_{\pr,\pl}$ is a $\mbox{\boldmath$\Sigma$}^{1}_{2}$ set.

The desired result then follows  by observing that the complement set $\overline{\Phi_{\pr,\pl}}$ is also a $\mbox{\boldmath$\Sigma$}^{1}_{2}$ set. This is because $\overline{\Phi_{\pr,\pl}}$ is the winning set associated with the triplet ($\arena, \overline{\pr},\overline{\pl})$, where $\overline{\pr}$ is the dual parity assignment defined as $\overline{\pr}(s)=\pr(s)+1$, and $\overline{\pr}$ is the dual player assignment inverting the role of the two players, specified as $\overline{\pl}(b)\!=\!0$ if and only if $\pl(b)\!=\!1$.
\end{proof}

The following theorem asserts that the result of Theorem \ref{delta_complexity} is strict. Thus the technologies employed in this paper for dealing with the complexity of the winnings sets in $2\frac{1}{2}$-player meta-parity games are not trivially avoidable. Since the technicalities required for proving the result would not be particularly useful for the main theorem of this paper, we just provide a reference to a detailed proof.
\begin{thm}
There exists a $2\frac{1}{2}$-player meta-parity game $\game\!=\!\langle \arena,\Phi_{\pr,\pl}\rangle$, having a finite arena $\arena$, such that the winning set $\Phi_{\pr,\pl}$ is not analytic nor co-analytic, i.e., $\Phi_{\pr,\pl}\!\not\in\! \mbox{\boldmath$\Sigma$}^{1}_{1}\cup \mbox{\boldmath$\Pi$}^{1}_{1}$.
\end{thm}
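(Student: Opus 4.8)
The plan is to exhibit a single \emph{finite} arena $\arena$, together with a priority assignment $\pr$ and a player assignment $\pl$, whose associated winning set $\Phi_{\pr,\pl}$ simultaneously admits a continuous reduction from a $\mbox{\boldmath$\Sigma$}^{1}_{1}$-complete set and from a $\mbox{\boldmath$\Pi$}^{1}_{1}$-complete set; the conclusion then follows from $\mbox{\boldmath$\Sigma$}^{1}_{1}\!\neq\!\mbox{\boldmath$\Pi$}^{1}_{1}$. Concretely, I would equip $\arena$ with a single non-branching initial state $s_{0}$ (say $s_{0}\!\in\! S_{1}$, with $S_{N}\!=\!\emptyset$) whose two successors open two disjoint ``modes'' $A$ and $B$. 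Since $s_{0}\!\notin\! B$, every branching play branches uniquely at the root, so $\bp$ splits as a disjoint union of two clopen pieces $\bp_{A}$ and $\bp_{B}$, and correspondingly $\Phi_{\pr,\pl}\!=\!\Phi_{A}\cup\Phi_{B}$ with $\Phi_{A}\!\subseteq\!\bp_{A}$ and $\Phi_{B}\!\subseteq\!\bp_{B}$; the inner game $G_{T}$ simply passes through $s_{0}$ and is then played entirely inside one mode.

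The core of the construction is to make branching plays on each mode encode arbitrary trees, and to make the inner parity game compute (well/ill-)foundedness. I recall the standard facts (see, e.g.\ \cite{Kechris}) that the set $\mathrm{WF}$ of well-founded trees on $\mathbb{N}^{<\omega}$ is $\mbox{\boldmath$\Pi$}^{1}_{1}$-complete and its complement $\mathrm{IF}$ is $\mbox{\boldmath$\Sigma$}^{1}_{1}$-complete, and that continuous preimages preserve both classes. The delicate point, and the reason the proof is non-trivial, is that a finite arena forces each branching state to have only finitely many successors, whereas for finitely-branching trees well-foundedness is merely Borel (K\"{o}nig's Lemma); hence $\mathbb{N}$-branching trees cannot be encoded directly. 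I would instead simulate countable branching by a finite \emph{spine gadget}: a branching state $\sigma$ with successors ``descend'' ($\delta$) and ``continue spine'' (an edge back to $\sigma$), so that selecting the $k$-th child of a $\tau$-node amounts to taking $k$ spine steps followed by one descent. A membership node $d$ (a non-branching state with successors ``continue'' and ``leaf'' $\ell$) is set, in the branching play $T_{\tau}$ associated to a tree $\tau$, to point to the continuing subgadget exactly when the corresponding node lies in $\tau$; the index $(w,k)$ of each gadget instance is recorded in the \emph{path}, keeping the arena finite while making $\tau\!\mapsto\! T_{\tau}$ a continuous map into $\bp$.

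For mode $B$ I would put every spine state in $B_{2}$ and choose $\pr$ so that visiting descend-states infinitely often is the only outcome lying \emph{outside} $\mathcal{W}_{\pr}$, e.g.\ $\pr(\delta)\!=\!1$, $\pr(\sigma)\!=\!\pr(d)\!=\!0$, $\pr(\ell)\!=\!1$: then Player $2$ wins $G_{T_{\tau}}$ iff he can descend infinitely often, i.e.\ iff $\tau$ is ill-founded, so $T_{\tau}\!\in\!\Phi_{B}$ iff $\tau$ is well-founded, yielding a continuous reduction of $\mathrm{WF}$ to $\Phi_{\pr,\pl}$. For mode $A$ I would dualise: spine states in $B_{1}$ and the parities swapped ($\pr(\delta)\!=\!2$, $\pr(\sigma)\!=\!\pr(d)\!=\!1$, $\pr(\ell)\!=\!0$), so that Player $1$ wins $G_{T_{\tau}}$ iff he can descend infinitely often, i.e.\ iff $\tau$ is ill-founded, yielding a continuous reduction of $\mathrm{IF}$. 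Finally, if $\Phi_{\pr,\pl}$ were $\mbox{\boldmath$\Pi$}^{1}_{1}$ then the mode-$A$ reduction would make $\mathrm{IF}$ a $\mbox{\boldmath$\Pi$}^{1}_{1}$ set, and if it were $\mbox{\boldmath$\Sigma$}^{1}_{1}$ then the mode-$B$ reduction would make $\mathrm{WF}$ a $\mbox{\boldmath$\Sigma$}^{1}_{1}$ set; both conclusions contradict $\mbox{\boldmath$\Sigma$}^{1}_{1}\!\neq\!\mbox{\boldmath$\Pi$}^{1}_{1}$, so $\Phi_{\pr,\pl}\!\notin\!\mbox{\boldmath$\Sigma$}^{1}_{1}\cup\mbox{\boldmath$\Pi$}^{1}_{1}$.

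The main obstacle I expect is precisely the faithfulness of the spine simulation: one must verify that the only branching plays whose inner game produces an outcome avoiding an honest infinite descent are those in which the navigating player eventually continues along a spine forever, that these ``right-spine'' outcomes are correctly classified by $\pr$ (so the cheating player loses), that reaching a leaf $\ell$ corresponds exactly to selecting a child absent from $\tau$, and that $\tau\!\mapsto\! T_{\tau}$ is genuinely continuous and a reduction in each mode. Once the gadget, the player assignment, and these priorities are pinned down, the completeness-versus-preimage argument closing the proof is immediate.
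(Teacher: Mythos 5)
Your proposal is correct and takes essentially the same route as the paper, whose own proof is exactly the two-step sketch you give: construct an explicit finite arena and show that both a $\mbox{\boldmath$\Sigma$}^{1}_{1}$-complete and a $\mbox{\boldmath$\Pi$}^{1}_{1}$-complete set are Wadge-reducible (continuously) to $\Phi_{\pr,\pl}$, with the detailed gadget verification deferred to Theorem 6.4.3 of the author's thesis. Your spine-gadget simulation of countable branching (correctly identified as necessary, since well-foundedness of finitely branching trees is only Borel) together with the two dual priority assignments soundly realizes the reductions of $\mathrm{WF}$ and $\mathrm{IF}$, and your priorities check out against the paper's Definition \ref{WPR}, so the only remaining work is the routine faithfulness verification you already flag.
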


\begin{proof}
One can construct an explicit example of finite $2\frac{1}{2}$-player meta-parity game and show that both a $\mbox{\boldmath$\Sigma$}^{1}_{1}$-complete set  and a $\mbox{\boldmath$\Pi$}^{1}_{1}$-complete set are Wadge-reducible to the winning set $\Phi_{\pr,\pl}$. We refer to Theorem 6.4.3 in \cite{MioThesis} for a detailed proof. 
\end{proof}

It follows that, absent any further evidence, the winning set of a  $2\frac{1}{2}$-player meta-parity game might be not provably universally measurable in $\textnormal{ZFC}$ set theory. However, as stated in Theorem \ref{consequences_martin}, it is provably universally measurable in $\textnormal{ZFC}+\textnormal{MA}_{\aleph_{1}}$ set theory.  This is one of the uses we make of Martin's axiom at $\aleph_{1}$ but, as we shall note later  in Section \ref{remarks_martin}, not the only one.

The following theorem, which will be useful later, exposes an important property of $2\frac{1}{2}$-player meta-parity games and sheds some light on the relationship between this class of games and the logic pL$\mu^{\odot}$. It constitutes the expected generalization of the corresponding property of ordinary $2\frac{1}{2}$-player parity games (see, e.g., Proposition 4.19 in \cite{MIO10}). 
\begin{thm}\label{fixed_point_proposition_1}
Let $\game\!=\!\langle \arena, \Phi_{\pr,\pl}\rangle$ be a two-player stochastic  parity game with arena $\arena\!=\! \langle (S,E),(S_{1},S_{2},S_{N}, B_{1},B_{2}),\pi\rangle$, where $B_{i}\!=\!\pl^{-1}(\{i\})$. The functions $\val_{\downarrow}(\game)$ and  $\val_{\uparrow}(\game)$, of type $S\!\rightarrow\![0,1]$, are fixed points of the  functional $\mathcal{F}\!:\![0,1]^{S}\!\rightarrow\![0,1]^{S}$ defined as follows:
\begin{center}
$\displaystyle \mathcal{F}(f)(s)= \left\{      \begin{array}{l  l}      \pr(s) \pmod 2  & $if $E(s)\!=\!\emptyset$, i.e, if $s$ is a terminal state$ \\ 
							               	\displaystyle 	\bigsqcup_{t\in E(s)}f(t) & $if$ \  s\!\in\! S_{1}\\
									\displaystyle 	\bigsqcap_{t\in E(s)}f(t) & $if$ \  s\!\in\! S_{2}\\
									\displaystyle 	\sum_{t\in E(s)} \pi(s)(t) \cdot f(t) & $if$ \  s\!\in\! S_{N}\\
									\displaystyle 	\coprod_{t\in E(s)}  f(t) & $if$ \  s\!\in\! B_{1}\\
									\displaystyle 	\prod_{t\in E(s)}  f(t) & $if$ \  s\!\in\! B_{2}\\
                      	      		     \end{array}      \right.$ 
\end{center}
\end{thm}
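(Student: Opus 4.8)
The plan is to fix a state $s$ and verify the equation $\val^{s}_{\downarrow}(\game)=\mathcal{F}(\val_{\downarrow}(\game))(s)$, and symmetrically $\val^{s}_{\uparrow}(\game)=\mathcal{F}(\val_{\uparrow}(\game))(s)$, by a case analysis on the type of $s$ in the partition $(S_1,S_2,S_N,B_1,B_2)$ together with the terminal case $E(s)=\emptyset$. The two value functions will be handled uniformly: the argument for $\val_{\uparrow}$ is obtained from that for $\val_{\downarrow}$ by swapping the roles of the players, i.e. interchanging $\bigsqcup_{\sigma_1}$ and $\bigsqcap_{\sigma_2}$ throughout. The engine of every case is a \emph{decomposition lemma}: for a profile $\langle\sigma_1,\sigma_2\rangle$, the Markov branching play $M^{s}_{\sigma_1,\sigma_2}$ consists of the root $s$ together with, for each relevant successor $t\!\in\!E(s)$, the sub-play $M^{t}_{\sigma_1^{t},\sigma_2^{t}}$ obtained by restricting $\sigma_1,\sigma_2$ to the (pairwise disjoint) sets of finite paths through $t$. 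Because these path-sets are disjoint, the restricted strategies $\sigma_i^{t}$ range independently, and it is exactly this independence that will license interchanging the strategy quantifiers with the operations appearing in $\mathcal{F}$.

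First I would dispatch the cases already present for ordinary $2\frac12$-player parity games (cf. Proposition 4.19 of \cite{MIO10}). For terminal $s$ the only branching play is $\{s\}$, the inner game $G_{\{s\}}$ is trivial and won by Player $1$ iff $\pr(s)$ is odd, so $\expected(M^{s})$ equals $\pr(s)\bmod 2$ irrespective of strategies. For $s\!\in\!S_1$ (resp.\ $S_2$) the node branches uniquely; since in $G_T$ a node in $S_1\cup S_2\cup S_N$ is merely a pass-through and a single occurrence of $\pr(s)$ affects neither the terminated-path clause nor the $\limsup$, we get $\expected(M^{s}_{\sigma_1,\sigma_2})=\expected(M^{t}_{\sigma_1^{t},\sigma_2^{t}})$ for the child $t$ chosen at the root, whence $\bigsqcup_{\sigma_1}\bigsqcap_{\sigma_2}$ yields $\bigsqcup_{t}\val^{t}_{\downarrow}$ (resp.\ $\bigsqcap_{t}\val^{t}_{\downarrow}$); the $S_2$ case uses the elementary identity $\bigsqcup_{(x_t)}\bigsqcap_t g_t(x_t)=\bigsqcap_t\bigsqcup_{x_t} g_t(x_t)$ for independently ranging variables (and dually $S_1$ for $\val_{\uparrow}$). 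For $s\!\in\!S_N$ the root branches fully but a branching play selects one child $t$ with $\mathbb{P}_{M}$-weight $\pi(s)(t)$, giving $\expected(M^{s}_{\sigma_1,\sigma_2})=\sum_{t}\pi(s)(t)\,\expected(M^{t}_{\sigma_1^{t},\sigma_2^{t}})$, and the convex combination commutes with the independent strategy quantifiers to produce $\sum_{t}\pi(s)(t)\,\val^{t}_{\downarrow}$.

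The genuinely new cases are $s\!\in\!B_1$ and $s\!\in\!B_2$, where both $M^{s}_{\sigma_1,\sigma_2}$ and every branching play $T\!\subseteq\!M^{s}_{\sigma_1,\sigma_2}$ branch \emph{fully}, so $T$ simultaneously contains a sub-branching-play $T^{t}$ for every $t\!\in\!E(s)$. Two facts combine. Combinatorially, in $G_T$ the root $s$ is controlled by Player $\pl(s)$, who chooses which child to descend into; hence for $s\!\in\!B_1$ one has $T\!\in\!\Phi\iff\exists t.\,T^{t}\!\in\!\Phi^{t}$ and for $s\!\in\!B_2$ one has $T\!\in\!\Phi\iff\forall t.\,T^{t}\!\in\!\Phi^{t}$, where $\Phi^{t}$ is the winning set of the subgame rooted at $t$. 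Measure-theoretically, the definition of $\mathbb{P}_{M}$ factors as a product over the Nature-edges of the disjoint subtrees $M^{t}$, so $\mathbb{P}_{M}$ is the product of the measures $\mathbb{P}_{M^{t}}$ and the events $\{T^{t}\!\in\!\Phi^{t}\}$ are mutually independent. Using universal measurability of each $\Phi^{t}$ (guaranteed in $\textnormal{ZFC}+\textnormal{MA}_{\aleph_1}$ by Theorem \ref{delta_complexity} and Theorem \ref{consequences_martin}) and countable independence, I would derive
\begin{center}
$\expected(M^{s}_{\sigma_1,\sigma_2}) \;=\; \mathbb{P}_{M}(\Phi) \;=\; \coprod_{t\in E(s)} \mathbb{P}_{M^{t}}(\Phi^{t}) \;=\; \coprod_{t\in E(s)} \expected(M^{t}_{\sigma_1^{t},\sigma_2^{t}})$
\end{center}
for $s\!\in\!B_1$ (the coproduct being exactly $1-\prod_t(1-\cdot)$), and the dual product formula for $s\!\in\!B_2$.

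Finally I would interchange $\bigsqcup_{\sigma_1}\bigsqcap_{\sigma_2}$ with the infinitary coproduct (resp.\ product). Since $\coprod$ and $\prod$ are monotone and continuous in each argument and the components $\sigma_i^{t}$ range independently over disjoint subtrees, minimizing a monotone coproduct over the independent $\sigma_2^{t}$ reduces to minimizing each component, and likewise for the outer supremum, so that $\bigsqcup_{(\sigma_1^{t})}\bigsqcap_{(\sigma_2^{t})}\coprod_t g_t=\coprod_t\bigsqcup_{\sigma_1^{t}}\bigsqcap_{\sigma_2^{t}} g_t=\coprod_{t}\val^{t}_{\downarrow}(\game)$, passing to $\varepsilon$-optimal components where suprema or infima are not attained. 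The same analysis with the quantifier order reversed settles $\val_{\uparrow}$. The hard part will be precisely these branching cases: justifying the product factorization of $\mathbb{P}_{M}$ together with the countable independence of the events $\{T^{t}\!\in\!\Phi^{t}\}$ (which rests essentially on the universal measurability supplied by Martin's Axiom), and then pushing the strategy quantifiers through the \emph{infinitary} (co)products with the requisite monotonicity and continuity.
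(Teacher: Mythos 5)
Your proposal is correct and takes essentially the same approach as the paper's proof: the decomposition of $M^{s}_{\sigma_{1},\sigma_{2}}$ into independent subplays with restricted strategies, the strategy-stealing characterization of the inner game at a branching root combined with the product-measure factorization of $\mathbb{P}_{M}$ (which the paper makes precise via the homeomorphism $m\!:\!\prod_{i}\mathcal{BP}_{i}\!\rightarrow\!\mathcal{BP}_{s}$ and regularity on basic open sets) to obtain $\expected(M^{s}_{\sigma_{1},\sigma_{2}})=\coprod_{t\in E(s)}\expected(M^{t}_{\sigma^{t}_{1},\sigma^{t}_{2}})$, and the passage of the strategy quantifiers through the infinitary (co)products via $\varepsilon$-optimal component strategies. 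One small caution: the infinitary (co)product is not jointly continuous on $[0,1]^{\mathbb{N}}$, so the final quantifier interchange should rest on the summable error bounds given by the function $\#$ of Definition \ref{approx_function} (as the paper does) rather than on an appeal to continuity, which is exactly what your choice of $\varepsilon$-optimal components with index-dependent tolerances implicitly supplies.
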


\begin{proof}
The proof is carried out following the same methodology of, e.g., Proposition 4.19 in \cite{MIO10}. The interesting cases are associated with the analysis of the branching nodes $s\!\in\!B_{1}$ and $s\!\in\! B_{2}$. In what follows we just show that, for every $s\!\in\! B_{1}$, the following equality holds:
\begin{center}
 $ \ \ \ \val_{\downarrow}(\game)(s)\!\geq\! \coprod_{t\in E(s)}\val_{\downarrow}(\game)(t)$.
 \end{center} 
The reverse inequality  $\val_{\downarrow}(\game)(s)\!\leq\!\coprod_{t\in E(s)}\val_{\downarrow}(\game)(t)$, as well as all other cases  for $s\!\in\! \{S_{1},S_{2},S_{N}B_{2}\}$ can be proved in a similar way. We refer to Theorem 5.2.10 of \cite{MioThesis} for a detailed proof covering all cases.

By  Definition \ref{deterministic_value} of $\val_{\downarrow}(\game)$, we need to prove that the equality  $\bigsqcup_{\sigma_{1}} \bigsqcap_{\sigma_{2}} \expected(M^{s}_{\sigma_{1},\sigma_{2}})   \!\geq\!\coprod_{t\in E(s)}\big( \bigsqcup_{\tau_{1}} \bigsqcap_{\tau_{2}} \expected(M^{t}_{\tau_{1},\tau_{2}}) \big) $ holds. Let $E(s)\!=\!\{t_{i}\}_{i\in I}$. At the state $s$ the game is split in $I$-many subplays continuing their execution from the states $t_{i}$. Let $\tau^{i}_{1}$ be a $\varepsilon_{i}$-optimal strategy (see Definition \ref{epsilon_optimal}) for Player $1$,  with $\varepsilon_{i}\!>\!0$.

Define the strategy $\sigma_{1}$ for Player $1$, when the game starts at $s$ and the $I$-many subplays are generated, to behave in the subplay continuing its execution from $t_{i}$ as the strategy  $\tau^{i}_{1}$. Given any strategy $\sigma_{2}$ for Player $2$, the Markov branching play $M^{s}_{\sigma_{1},\sigma_{2}}$ can be depicted\footnote{The edge connecting $s$ with $t_{i}$ has not been dashed  to highlight that $s$ is a branching node.}  as follows:
\begin{center}
\pstree[ treemode=U,levelsep=5ex]{\Tr{$s$}}{
	\pstree[levelsep=5ex,  treemode=U,linestyle=dashed]{\Tr{$t_{i}$}     \trput{$\dots$} }{
		\pstree[linestyle=none,arrows=-,levelsep=3ex]{\Tfan[fansize=10ex]}{\TR{ $M^{t_{i}}_{\tau^{i}_{1},\tau^{i}_{2}} $}}
	}
		\pstree[levelsep=5ex,  treemode=U,linestyle=dashed]{\Tr{$t_{j}$}    \trput{$\dots$}}{
		\pstree[linestyle=none,arrows=-,levelsep=3ex]{\Tfan[fansize=10ex]}{\TR{ $M^{t_{j}}_{\tau^{j}_{1},\tau^{j}_{2}} $}}
	}
}
\end{center}
where, for each $i\!\in\! I$, the strategy $\tau^{i}_{2}$ is specified as $\tau^{i}_{2}(\vec{t}_{i})\!=\! \sigma_{2}(s.\vec{t}_{i})$ for all finite paths $\vec{t}_{i}$ with $\first(\vec{t}_{i})\!=\!t_{i}$. 
Let us denote, to improve readability, with $M$ and $M_{i}$, for $i\!\in\!I$, the Markov branching plays $M^{s}_{\sigma_{1},\sigma_{2}}$ and $M^{t_{i}}_{\tau^{i}_{1},\tau^{i}_{2}} $, respectively. We are now going to show that the following equality holds:
\begin{equation}\label{eq_aux_1_fixedpoint_theorem}
\mathbb{P}_{M}(\Phi_{\pr,\pl})\!=\! \displaystyle \coprod_{i\in I} \mathbb{P}_{M_{i}}(\Phi_{\pr,\pl})
\end{equation}
holds. This will conclude the proof. Indeed note that, by construction of $\sigma^{i}_{1}$, the inequality $ \mathbb{P}_{M_{i}}(\Phi_{\pr,\pl})\!\geq\! \val_{\downarrow}(\game)(t_{i})-\varepsilon_{i}$ holds. Thus, the strategy $\sigma_{1}$ guarantees, by appropriate choices of values $\varepsilon_{i}$, for $i\!\in\! I$, a value closed to $\displaystyle \coprod_{i\in I}\val_{\downarrow}(\game)(t_{i})$ as desired.

For what concerns $\mathbb{P}_{M}$, we can restrict attention to the set of branching plays $\bp_{\!s}$, where $\bp_{\!s}$ denotes the set of branching plays in $\arena$ rooted at $s$, since the set of all other branching plays in $\game$ gets assigned probability $0$ by Definition \ref{measure_definition} of $\mathbb{P}_{M}$. Similarly, when considering $\mathbb{P}_{M_{i}}$, we can restrict to the set $\bp_{\!i}$ of branching plays rooted at $t_{i}$. We can depict the branching plays in $\bp_{\!s}$ and the branching plays in $\bp_{\!i}$ as follows:
\begin{center}
\pstree[ treemode=U,levelsep=5ex]{\Tr{$s$}}{
	\pstree[levelsep=5ex,  treemode=U]{\Tr{$t_{i}$}     \trput{$\dots$} }{
		\pstree[linestyle=none,arrows=-,levelsep=2ex]{\Tfan[fansize=10ex]}{\TR{ $T_{i} $}}
	}
		\pstree[levelsep=5ex,  treemode=U]{\Tr{$t_{j}$}    \trput{$\dots$}}{
		\pstree[linestyle=none,arrows=-,levelsep=2ex]{\Tfan[fansize=10ex]}{\TR{ $T_{j} $}}
	}
}$ \ \ \ \ \ \ \ \ $
	\pstree[levelsep=5ex,  treemode=U]{\Tr{$t_{i}$}}{
		\pstree[linestyle=none,arrows=-,levelsep=2ex]{\Tfan[fansize=10ex]}{\TR{ $T_{i} $}}
	}
\end{center} 
where we use $T_{i}$ to range over the set of branching plays in $\game$ rooted at $t_{i}$. We denote with $s[T_{i}]_{i\in I}$ the branching play on the left. Let $\prod_{i}\bp_{\!i}$ be endowed with the product topology. Define $m\!:\! \prod_{i}\bp_{\!i}\!\rightarrow\!\bp_{\!s}$ as $m(\{T_{i}\}_{i\in I})\!=\!s[T_{i}]_{i\in I}$. It is easy to verify that $m$ is a homeomorphism. Consider the product measure  $\times_{i\in I}\mathbb{P}_{M_{i}}$ on the space $\prod_{i}\bp_{i}$. We now show that  the equality
\begin{equation}\label{eq_aux_2_fixedpoint_theorem}
\times_{i\in I}\mathbb{P}_{M_{i}}(X)\!=\!  \mathbb{P}_{M}(m(X))
\end{equation}
holds for every measurable $X\!\subseteq\! \prod_{i}\bp_{i}$. By regularity  of measures in Polish spaces (see, e.g., \cite{Kechris}), we just need to prove that for each basic open set $O\!\subseteq\!\prod_{i}\bp_{i}$ the equality $\times_{i\in I}\mathbb{P}_{M^{i}}(O)\!=\!  \mathbb{P}_{M}(m(O))$ holds. The basic sets $O$ in the product topology are of the form $O_{F_{0}}\!\times\! \dots\! \times O_{F_{k}}\times \prod_{i>k}\bp_{i}$ with $O_{F_{n}}\!\subseteq\!\bp_{n}$, for some $k\!\in\!\mathbb{N}$ and $0\leq n\leq k$. As usual, $O_{F_{n}}$ denotes the basic open set of branching plays containing the finite tree $F_{n}$. Equality \ref{eq_aux_2_fixedpoint_theorem} then follows by definition of $m$ and Definition \ref{measure_definition} of the probability measures $\mathbb{P}_{M}$ and $\mathbb{P}_{M_{i}}$. 

The validity of Equation \ref{eq_aux_1_fixedpoint_theorem} is then derived as follows:
\begin{center}
\begin{tabular}{l l l}
$\mathbb{P}_{M}(\Phi_{\pr,\pl})$ & $=_{A}$ & $\mathbb{P}_{M}(\Phi_{\pr,\pl}\cap \bp_{\!s})$\\
$$ & $=_B$ & $\mathbb{P}_{M}\big( m( \overline{\prod_{i\in I}(\overline{\Phi_{\pr,\pl} \cap \bp_{\!i}})}  )\big)$\\
$$ & $=_{C}$ & $1- \mathbb{P}_{M}\big( m( \prod_{i\in I}(\overline{\Phi_{\pr,\pl} \cap \bp_{\!i}})  )\big)$\\
$$ & $=_{D}$ & $1- \displaystyle \times_{i\in I}\mathbb{P}_{M_{i}}(\overline{\Phi_{\pr,\pl} \cap \bp_{\!i}})$\\
$$ & $=_{E}$ & $1- \displaystyle \prod_{i\in I} \big(1- \mathbb{P}_{M_{i}}(\Phi_{\pr,\pl} \cap \bp_{\!i})\big)$\\
$$ & $=_{F}$ & $1- \displaystyle \prod_{i\in I} \big(1- \mathbb{P}_{M_{i}}(\Phi_{\pr,\pl})\big)$\\
$$ & $=_{G}$ & $\displaystyle \coprod_{i\in I}  \mathbb{P}_{M_{i}}(\Phi_{\pr,\pl})$.
\end{tabular}
\end{center}
where the overlined sets denote the expected complements. Steps (A) and (F) hold by previous observations. To justify step (B), observe (using a standard strategy stealing argument) that a branching play $s[T_{i}]_{i\in I}$ is winning for Player $1$ (i.e., it is in $\Phi_{\pr,\pl}$) if and only if at least one of its subtrees $T_{i}$ is winning for Player $1$. This is because the state $s\!\in\!B_{1}$ is under the control of Player $1$ in the inner game. Step $C$ is valid because $m$ is a homeomorphism, thus a bijection. Step $(D)$ holds by Equation \ref{eq_aux_2_fixedpoint_theorem}. Step $E$ holds by definition of product measure. Lastly, step $(G)$ holds by De Morgan dualities of the operations of product and coproduct.
\end{proof}

We conclude this section by remarking that the notion of $2\frac{1}{2}$-player meta-parity game can be further generalized, allowing the inner games to be $2$-player games with general Borel winning sets. We refer to \cite[\S 5]{MioThesis} for an analysis of this sort of games.

%%%%%%%%%%%%%%%%%%%%%%%%% MODEL CHECKING GAMES %%%%%%%%%%%%%%%%%%%%%%%%%%%
\section{Game semantics of pL$\muprod$}
\label{model_checking_games}
In this section we define the game semantics of the logic pL$\mu^{\odot}$. As for L$\mu$ \cite{Stirling96} and pL$\mu$ \cite{MM07, MIO10}, given a PLTS  $\langle P, \{ \freccia{a} \}_{a\in L}\rangle$ and an interpretation of the variables $\rho$  a game $\game^{F}_{\rho}$ is constructed for every pL$\mu^{\odot}$ formula $F$. The game semantics of a formula $F$ at a  state $p$ is defined as the value of the game at a designated state $\langle p, F\rangle$.  The logics L$\mu$ and pL$\mu$ are interpreted using ordinary $2$-player parity games and $2\frac{1}{2}$-player parity games, respectively. As anticipated earlier, we shall interpret pL$\mu^{\odot}$ with the novel class of $2\frac{1}{2}$-player meta-parity games. Following the approach of \cite{Stirling96}, we first identify a class of pL$\mu^{\odot}$ formulas which is easier to work with and allow the simplification of some definitions.

\begin{defi}[\cite{Stirling96}]
\label{normal_form}
We say that a pL$\mu^{\odot}$ formula $F$ is in \emph{normal form}, if every occurrence of a $\mu$ or $\nu$ binder binds a distinct variable, and no 
variable appears both free and bound. Every formula can be put in normal form by standard $\alpha$-renaming of the bound variables.
\end{defi}

\begin{defi}[\cite{Stirling96}] Given a pL$\mu^{\odot}$ formula $F$ in normal form, we say that a variable $X$ \emph{subsumes} a variable $Y$ if $X$ and $Y$ are bound in $F$ by the subformulas $\star_{1}X.G$ and $\star_{2}Y.H$ respectively, and $\star_{2}Y.H\!\in\! Sub(G)$, for $\star_{1},\star_{2}\!\in\!\{\mu,\nu\}$.
\end{defi}

\begin{defi}
We say that a pL$\mu^{\odot}$ formula $F$ is in \emph{product normal form} if 
\begin{enumerate}[(1)]
\item it is in normal form, and
\item it does not contain subformulas of the form $G\cdot G$ or $G \odot G$.  
\end{enumerate}
Every formula can be put in product normal form by (recursively) replacing every subformula $G\star G$ with, e.g., the semantically equivalent formula $G\star (G\vee G)$, for $\star\!\in\!\{ \cdot,\odot \}$.
\end{defi}

In what follows we shall restrict our attention, without loss of generality, to formulas in product normal form. While the advantage of restricting to normal formulas is clear \cite{Stirling96}, the product normal form is useful for the following reason. As anticipated in the introduction, we shall interpret game-states of the form $\langle p, F \cdot G\rangle$ and $\langle p, F \odot G\rangle$ as branching states having the states $\langle p, F\rangle$ and $\langle p, G\rangle$ as children. When $F\!=\!G$, the set of children becomes a singleton, and this does not reflect the intended game interpretation. The product normal form is one of the simplest way  to avoid this kind of situations. As an alternative solution, one might consider \emph{multisets} of successor states, rather than sets, in the definition of $2\frac{1}{2}$-player meta-parity games.

We are now ready to specify how, given a PLTS  $\lts\!=\!\langle P, \{ \freccia{a} \}_{a\in L}\rangle$, the $2\frac{1}{2}$-player meta-parity game $\game^{F}_{\rho}\!=\!\langle \arena, \Phi_{\pr,\pl}\rangle$, associated with a pL$\mu^{\odot}$ formula $F$ and a $[0,1]$-interpretation $\rho$ of the variables, is constructed.

The game arena $\arena=\langle (S,E), (S_{1},S_{2},S_{N},B),\pi  \rangle$ is defined as follows. The set of game states $S$ is defined as $S\!\bydef\! \big(P\!\times\! Sub(F)\big) \cup \big(\mathcal{D}(\lts)\!\times\! Sub(F)\big)\cup \{\top,\bot\}$, where $\mathcal{D}(\lts)$ denotes the (countable) set of probability distribution in $\lts$ (see Definition \ref{PLTS}),  and $\{ \top , \bot\}$ are two special states representing immediate win and loss for Player $1$, respectively. The relation $E$ is defined as follows: $E (\langle d,G \rangle)\!\bydef\!   \{ \langle p, G \rangle \ | \ p\!\in\! \supp(d) \}$ for every $d\!\in\! \mathcal{D}(\lts)$. The set $E(\langle p, G\rangle)$ is defined, by case analysis on the outermost connective of $G$, as follows: 
\begin{enumerate}[(1)]
\item if $G\!=\! X$, with $X$ free in $F$, then $E(\langle p, G\rangle)\!\bydef\! \{ \bot,\top\}$.
\item if $G\!=\! X$, with $X$ bound in $F$ by the subformula $\star X. H$, with $\star \in \{ \mu , \nu \}$, then $E(\langle p, G\rangle)\!\bydef\! \{\langle p, H\rangle\}$.
\item if $G\!=\! \star X.H$, with $\star\! \in\! \{ \mu, \nu \}$, then $E(\langle p,G\rangle)\!\bydef\! \{\langle p, H\rangle\}$.
\item if $G\!=\! \diam{a}H$,  then  $E(\langle p, G\rangle)\!\bydef\!  \{ \langle d, H \rangle\ |\ p\freccia{a}d \}$.
\item if $G\!=\! \quadrato{a}H$,  then $E(\langle p, G\rangle)\!\bydef\!  \{ \langle d, H \rangle\ |\ p\freccia{a}d \}$.
\item if $G\!=\!  H\star H^{\prime}$ with $\star\in \{ \vee,\wedge, \odot, \cdot\}$ then $E(\langle p, G\rangle) \!=\!  \{ \langle p, H \rangle, \langle p,H^{\prime }\rangle \}$
\end{enumerate}
Lastly we define $\top$ and $\bot$ to be terminal states, i.e., $E(\bot)\!=\! E(\top)\!=\! \emptyset$. The partition $( S_{1},S_{2},S_{N}, B )$  of $S$ is defined as follows: every state $\langle p, G \rangle$ with $G$'s main connective in $\{ \diam{a}, \vee, \mu X\}$ or with $G=X$ where $X$ is a $\mu$-variable, is in $S_{1}$. Dually, every state $\langle p, G\rangle$ with $G$'s main connective in $\{ \quadrato{a}, \wedge, \nu X\}$ or with $G=X$ where $X$ is a $\nu$-variable, is in $S_{2}$. Every state of the form $\langle d, G \rangle$ or $\langle p, X\rangle$, with $X$ free in $F$, is in $S_{N}$. Every state $\langle p, G \rangle$ whose $G$'s main connective is either $\cdot$ or $\odot$ is in $B$.  Lastly we define the terminal states $\bot$ and $\top$ to be in $S_{1}$ and $S_{2}$ respectively. The function $\pi\!:\! S_{N}\!\rightarrow \!\mathcal{D}(S)$ assigns a probability distribution to every state under the control of Nature (thus specifying its indended probabilistic behavior) and it is defined as $\pi (\langle d, G\rangle) (\langle p, G\rangle)\!=\!d(p)$ on all states of the form $\langle d, G\rangle$. All other states in $S_{N}$ are of the form $\langle p,X\rangle$, with $X$ free in $F$. The function $\pi$ is defined on these states as follows:
\begin{center}
 $\pi(\langle p,X\rangle)(s)\bydef \left\{     \begin{array}{l  l}  	 \rho(X)(p) & $if $ \  s=\top \\ 
							               			1- \rho(X)(p) & $if $ \  s=\bot \\ 
							               			 0 & $otherwise $  \\  
                      	      		     \end{array}      \right.$
\end{center} 
The priority assignment $\pr\!:\! S\! \rightarrow\! \mathbb{N}$ is defined as usual in $\mu$-calculus games: an odd priority is assigned to the states $\langle p,X\rangle$ with $X$ a $\mu$-variable  and, dually, an even priority is assigned to the states $\langle p,X\rangle$ with $X$ a $\nu$-variable, in such a way that if $Z$ subsumes $Y$ in $F$ then $\pr(\langle p,Z\rangle) > \pr(\langle p, Y\rangle)$. Moreover, for every terminal state $s\!\in\!S$,  we define $\pr(s)\!=\!0$ if $s\!\in\! S_{1}$, and $\pr(s)\!=\!1$ if  $s\!\in\! S_{2}$.  This implements the policy that a player who gets stuck at a terminal state loses (see Theorem \ref{fixed_point_proposition_1}). All other states get assigned, by convention, priority $0$. Lastly, the player assignment $\pl\!:\!  B\!\rightarrow\!\{1,2\}$ is defined as $\pl(\langle p, G_{1}\!\odot\! G_{2}\rangle)\!=\!1$ and $\pl(\langle p, G_{1}\!\cdot\! G_{2}\rangle)\!=\!2$, for every $p\!\in\! P$ and $G_{1},G_{2}\!\in\! Sub(F)$.

\begin{rem}
We now list some useful observations about the above defined game $\game^{F}_{\rho}$.
\begin{enumerate}[(1)]
\item If no (co)product operators occur in $F$, then $B\!=\!\emptyset$, and the game $\game^{F}_{\rho}$ is equivalent to the pL$\mu$ games of \cite{MM07,MIO10} (see Remark \ref{remark_meta_games_without_branching_states}). Thus the game semantics for pL$\mu^{\odot}$ generalizes the game semantics of pL$\mu$, as previous claimed.
\item If the PLTS $\lts$ is finite then the game arena $\arena$ of $\game^{F}_{\rho}$ is finite too.
\item Note how the only role of two game states $\{\top,\bot\}$ is to interpret the game states $\langle p, X\rangle$ with $X$ a free variable. By application of Theorem \ref{fixed_point_proposition_1}, it follows that the game values at $\top$, $\bot$ and $\langle p, X\rangle$ are $1$, $0$, and $\rho(X)(p)$ respectively, as expected.
This solution avoid the  otherwise necessary introduction of $[0,1]$-valued payoff functions in the definition of $2\frac{1}{2}$-player meta-parity games.
\end{enumerate}
\end{rem}

We are now ready to state our main result.
\begin{thm}[\martin]\label{main_theorem}Given a PLTS $\lts\!=\!\langle P, \{ \freccia{a} \}_{a\in L}\rangle$, for every state $p\!\in\! P$, interpretation of the variables $\rho$ and pL$\muprod$ formula $F$, the following equalities hold: 
\begin{center}
$\val{\downarrow}(\game^{F}_{\rho})(\langle p, F\rangle)=\val{\uparrow}(\game^{F}_{\rho})(\langle p, F\rangle)=\sem{F}_{\rho}(p)$.
\end{center}
\end{thm}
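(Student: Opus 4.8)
The plan is to establish, for every formula $F$ and interpretation $\rho$, the single inequality $\val_\downarrow(\game^{F}_{\rho})(\langle p,F\rangle)\geq\sem{F}_\rho(p)$, and to obtain everything else by duality. Applying this inequality to the syntactic De Morgan dual $F^{\mathrm d}$ (obtained by exchanging $\mu/\nu$, $\vee/\wedge$, $\cdot/\odot$ and $\diam{a}/\quadrato{a}$, the dualities underlying Remark \ref{remark_negation}) under the complementary interpretation $\bar\rho(X)=1-\rho(X)$, for which $\sem{F^{\mathrm d}}_{\bar\rho}=1-\sem{F}_\rho$, and using that $\game^{F^{\mathrm d}}_{\bar\rho}$ is the dual game of $\game^{F}_{\rho}$ (players swapped and priorities shifted by one, exactly as in the last paragraph of the proof of Theorem \ref{delta_complexity}, so that $\val_\downarrow$ of the dual equals $1-\val_\uparrow$ of the original), yields the reverse inequality $\val_\uparrow(\game^{F}_{\rho})(\langle p,F\rangle)\leq\sem{F}_\rho(p)$. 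Since $\val_\downarrow\leq\val_\uparrow$ always holds, the three quantities then coincide and determinacy of $\game^{F}_{\rho}$ at $\langle p,F\rangle$ falls out as a byproduct rather than being assumed. It thus suffices to produce, for each $\varepsilon>0$, a Player $1$ strategy securing at least $\sem{F}_\rho(p)-\varepsilon$.

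The engine is the unfolding method of \cite{FGK2010}, organized as a structural induction on $F$. When the principal connective of $F$ is not a fixed point (a variable, a modality, or one of $\vee,\wedge,\cdot,\odot$), Theorem \ref{fixed_point_proposition_1} shows that $\val_\downarrow(\game^{F}_{\rho})$ satisfies at the root exactly the defining equation of $\sem{F}_\rho(p)$, and the arguments of that equation are the values of the games below the proper subformulas, which the induction hypothesis identifies with their denotations (the free-variable case being handled by the $\{\top,\bot\}$ gadget, which forces the value $\rho(X)(p)$). The substantial case is $F=\mu X.G$ (the case $\nu X.G$ being dual). Here Knaster--Tarski gives $\sem{\mu X.G}_\rho(p)=\bigsqcup_\alpha f^{\alpha}(p)$ with $f^{\alpha+1}=\sem{G}_{\rho[f^{\alpha}/X]}$, and since $G$ is a proper subformula, the induction hypothesis applied to $\game^{G}_{\rho[f^{\alpha}/X]}$ (where $X$ is now free) gives $\val_\downarrow(\game^{G}_{\rho[f^{\alpha}/X]})(\langle p,G\rangle)=\sem{G}_{\rho[f^{\alpha}/X]}(p)$. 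It then remains to bridge the looping game $\game^{\mu X.G}_\rho$ and this transfinite chain of non-looping games. For this I would use the transfinite inductive characterization of the winning set $\Phi_{\pr,\pl}$ of Section \ref{technical_section}: because $X$ is a $\mu$-variable (odd priority), a branching play is won by Player $1$ only when the unfolding of $X$ is well-founded along the branches that matter, which stratifies $\Phi_{\pr,\pl}=\bigcup_{\alpha<\omega_1}\Phi^{\alpha}$ by unfolding rank. I would then assemble an $\varepsilon$-optimal Player $1$ strategy in the full game from $\varepsilon_i$-optimal strategies in the bounded games of smaller rank, choosing the per-subgame budgets as $\varepsilon_i=\varepsilon/\#(i)$ via the fast-growing function $\#$ of Definition \ref{approx_function} so that, by the Proposition on approximating infinite (co)products, the error accumulated over the infinitely many independent subgames spawned at branching nodes stays below $\varepsilon$.

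Two features make this harder than the pure pL$\mu$ case of \cite{MIO10}. First, the denotational semantics is not continuous in the free variable $X$ (Proposition \ref{not_continuous_proposition}), so one cannot simply commute the supremum $\bigsqcup_\alpha$ past the remaining connectives of $G$ on the game side; this is precisely the gap repaired by the notion of robust Markov branching play of Section \ref{robust_markov_plays_section}, and verifying that winning can always be witnessed robustly --- so that the value computed over robust plays recovers the continuity that denotation lacks --- is the step I expect to be the main obstacle. Second, the winning sets are genuinely $\mbox{\boldmath$\Delta$}^{1}_{2}$ (Theorem \ref{delta_complexity}) and provably no simpler, so each $\expected(M)=\mathbb{P}_M(\Phi_{\pr,\pl})$ is only defined once universal measurability is available; here $\textnormal{MA}_{\aleph_1}$ enters twice through Theorem \ref{consequences_martin}: its clause (1) makes every value $\expected(M)$ well defined, and its $\omega_1$-continuity clause (3) is what licenses the passage $\mathbb{P}_M(\bigcup_{\alpha<\omega_1}\Phi^{\alpha})=\bigsqcup_{\alpha<\omega_1}\mathbb{P}_M(\Phi^{\alpha})$ that converts the rank stratification into a convergent supremum of bounded-game values matching $\bigsqcup_\alpha f^{\alpha}$. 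Combining the local equations of Theorem \ref{fixed_point_proposition_1}, the robust-play continuity, and this $\omega_1$-continuous limit yields $\val_\downarrow(\game^{F}_{\rho})(\langle p,F\rangle)=\sem{F}_\rho(p)$, and the duality of the first paragraph supplies the upper value, completing the proof.
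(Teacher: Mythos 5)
The reduction to the single inequality $\val_{\downarrow}(\game^{F}_{\rho})(\langle p,F\rangle)\geq\sem{F}_{\rho}(p)$ contains a structural flaw that leaves the hardest case of the induction unproved. The De Morgan duality you invoke converts the lower-value inequality for $F^{\mathrm d}$ into the \emph{upper}-value inequality for $F$; it does not convert the $\mu$-case of your single inequality into the $\nu$-case. Concretely, the lower bound for $\nu X.G$ is, under your duality, equivalent to the upper bound for $\mu X.G^{\mathrm d}$ --- which in your plan is exactly what is supposed to be \emph{derived} from the lower bound for $\nu$-formulas, so the parenthetical ``the case $\nu X.G$ being dual'' is circular. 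Moreover, in the single-inequality scheme the distribution of difficulty is the opposite of what you describe: for $F=\mu X.G$ the lower bound is the easy direction --- by the induction hypothesis and Theorem \ref{fixed_point_proposition_1}, $\val_{\downarrow}(\game^{\mu X.G}_{\rho})$ is a fixed point of $\lambda f.\sem{G}_{\rho[f/X]}$ and hence dominates the least fixed point, with no strategy assembly needed at all; and even in your explicit rank-by-rank construction, $\omega_{1}$-continuity plays no role there, since $\mathbb{P}_{M}(\Phi)\geq\mathbb{P}_{M}(\mathbb{W}_{\alpha})$ already follows from monotonicity of the measure once Lemma \ref{W_is_um} supplies measurability of the approximants. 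The place where the heavy machinery is indispensable is the lower bound for $\nu X.G$: there $\sem{\nu X.G}_{\rho}$ is a \emph{greatest} fixed point, the winning set is the intersection $\bigcap_{\alpha<\omega_{1}}\mathbb{W}^{\alpha}$ of a decreasing $\omega_{1}$-chain (Proposition \ref{dual_results}), and one must exhibit a \emph{single} Player $1$ strategy whose guarantee is then verified at every countable rank simultaneously --- a restarting strategy with budgets $\frac{\varepsilon}{2}\cdot\frac{1}{\#(e(\vec{s}))}$, analyzed via Lemma \ref{comparing_lemma}, robustness, and transfinite induction, followed by $\omega_{1}$-continuity applied to the decreasing chain. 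That argument, which is the mirror image of the paper's Player $2$ strategy $\sigma^{\varepsilon}_{2}$ in the $\mu$ upper bound (the step at Equation \ref{step_martin_setup}), is nowhere supplied in your proposal: you attach the stratification machinery to the $\mu$ lower bound, where it is dispensable, and dismiss by an unavailable duality the one case that genuinely needs it.

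The repair is either to abandon the one-inequality reduction and prove both bounds inside the $\mu$-case, as the paper does (easy bound by Knaster--Tarski through Theorem \ref{fixed_point_proposition_1}, hard bound by the restarting Player $2$ strategy and transfinite induction on $\mathbb{W}_{\alpha}$ up to $\omega_{1}$), the $\nu$-case then being obtained by mirroring the entire \emph{two-sided} argument through Proposition \ref{dual_results}; or to keep your scheme and carry out the restarting Player $1$ construction directly in the $\nu$ lower bound. Note also that robustness is not a property of ``winning witnesses'' to be verified: in the paper it is a second invariant of the structural induction --- \emph{every} Markov branching play of $\game^{F}_{\rho}$ is robust in every free variable (Definition \ref{robust}) --- proved case by case alongside the value equality (including for the fixed-point cases themselves, by a further transfinite induction), and consumed in the hard case exactly when the true hole-values $\rho(X)(p_{i})$ are replaced by the rank-$\alpha$ approximants. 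Your sketch cites the notion and correctly locates the discontinuity problem it repairs, but does not register that it must be carried through the induction as a standing hypothesis rather than established once.
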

Note that Theorem \ref{main_theorem} asserts that pL$\mu^{\odot}$ games are determined. In light of this result, we can just write $\val(\game^{F}_{\rho})$ for the unique value function of the game $\game^{F}_{\rho}$. The \emph{game semantics} of $F$ at $p$ under the interpretation $\rho$ is then defined as $\val(\game^{F}_{\rho})(\langle p, F\rangle)$, and it coincides with the denotational semantics $\sem{F}_{\rho}(p)$, as desired.

The game semantics for pL$\mu^{\odot}$ offers a clear and simple operational interpretation for the meaning of the qualitative threshold modalities $\mathbb{P}_{>0}$ and $\mathbb{P}_{=1}$ (see Definition \ref{threshold_modalities_def}). Let us consider, for example, the  game $\game^{\mathbb{P}_{>0}F}_{\rho}$ associated with a PLTS $\lts\!=\!\langle P, \{ \freccia{a}\}_{a\in L}\rangle$, a $[0,1]$-interpretation $\rho$ and  the pL$\mu^{\odot}$ formula $\mathbb{P}_{>0}F\!\bydef\! \mu X. (F \odot X)$. The game   $\game^{\mathbb{P}_{>0}F}_{\rho}$, at the state $\langle p, \mu X.(F\odot X)\rangle$, for some $p\!\in\! P$, can be depicted as follows:
\begin{center}
$$
\SelectTips{cm}{}
	\xymatrix @=20pt {
		\node{\langle p, F\rangle} & & \node{\langle p, X\rangle}  \ar@{->}@/_25pt/[dl] \\
		& \node{\scriptstyle{\langle p, F\odot X\rangle} }
				\ar@{=>}@/_0pt/[ul]
				\ar@{=>}@/_0pt/[ur]
		\\
		&  \node{\scriptstyle{\langle p, \mu X.(F\odot X)\rangle} } \ar@{->}@/_0pt/[u] \\
	}
$$
\end{center}
After an initial unfolding step from $\langle p, \mu X.(F\odot X)\rangle$ to $\langle p, F\odot X\rangle$, the game is split in two concurrent sub-games, one continuing its execution from the state $\langle p, F\rangle$  (this sub-game can be considered an instance of the game $\game^F_\rho$ starting at $\langle p, F\rangle$) and the other from the state $\langle p, X\rangle$.  In order to win  the game $\game^{\mathbb{P}_{>0}F}_{\rho}$, Player $1$ has to win in at least one of  the two generated sub-games, thus either in the instance of $\game^F_\rho$ or in the sub-games continuing at $\langle p, X\rangle$. This second sub-game, however, after an unfolding step, progresses to the game state $\langle p, F\odot X\rangle$, where the protocol is repeated generating yet another two sub-games. The infinite execution of the game leads to the generation of  infinitely many instances of the game $\game^F_\rho$. A branching play $T$ in $\game^{ \mathbb{P}_{>0}F}_{\rho}$ can be depicted as follows: 
\begin{center}
$\ \ \ \ \ \ \ \ \ $
\pstree[ treemode=U,levelsep=5ex ]{\Tr{$\langle p,\mu X.( F \odot X)\rangle$}}{
\pstree[ treemode=U,levelsep=5ex ]{\Tr{$\langle p, F \odot X\rangle$}}{
 \pstree{\Tr{$\langle p, F\rangle$}}{
  		\TR{ $T_{1} $}
 }
 \pstree{\Tr{$\langle p, X\rangle$}}{
 	\pstree[ treemode=U,levelsep=5ex ]{\Tr{$\langle p, F \odot X\rangle$}}{
	 \pstree{\Tr{$\langle p, F\rangle$}}{
  		\TR{ $T_{2} $}
 }
 	\pstree[ treemode=U,levelsep=8ex,linestyle=dashed ]{\Tr{$\langle p, X \rangle$}}{
        \pstree[linestyle=none,arrows=-,levelsep=2ex]{\Tfan[fansize=10ex]}{\TR{ $\ $}}
	}
	}
}
}
}
\end{center}
where $T_{1}$, $T_{2}$, $\dots$, represent the branching plays corresponding to the plays in each generated instance of the game $\game^F_\rho$. Since the variable $X$ unfolded infinitely often in the rightmost path in $T$ is bound by a least fixed point in $\mu X.(F\odot X)$, and since the $\odot$ nodes are Player $1$ choices in the inner game $\game_{T}$, we have that $T$ is a winning branching play for Player $1$ if and only if there exists some $n\!\in\!\mathbb{N}$ such that $T_{n}$, the outcome of the $n$-th generated instance of $\game(F,\rho)$, is winning for Player $1$.  Thus, the game $\game^{\mathbb{P}_{>0}F}_{\rho}$, at the state $\langle p, \mu X.F\odot X\rangle$ can be simply described as follows: generate an infinite number of instances of the game $\game^F_\rho$ at the state $\langle p, F\rangle$; Player $1$ wins if at least one of the infinitely many generated instances of $\game^F_\rho$ ends up in a winning branching play and Player $2$ wins otherwise. It is then quite clear that if $\val(\game^{\mathbb{P}_{>0}F}_{\rho})(\langle p, \mathbb{P}_{>0}F \rangle)\!>\!0$ (or, equivalently, $\sem{F}_{\rho}(p)\!>\! 0$ by  Theorem \ref{main_theorem}), then the probability that at least one (and in fact countably many) of the infinitely many instances of $\game^F_\rho$ will result in a win for Player $1$, is $1$. Similarly, if $\sem{F}_{\rho}(p)\!=\!0$, then  the probability that at least one of the instances of $\game^F_\rho$ will result in a win for Player $1$, is $0$.  

The game semantics for pL$\mu^{\odot}$ thus offers a straightforward interpretation for the probabilistic qualitative modality $\mathbb{P}_{>0}$ exploiting the simple idea that an event (which we can, at some extent, see as  \emph{a} pL$\mu^{\odot}$ property) has probability greater than zero if and only if, when repeated infinitely many time, it almost surely occurs at least once. An analogous straightforward interpretation can be given to the other qualitative threshold modality  $\mathbb{P}_{=1}F\!\bydef\! \nu X. (F\cdot X)$: generate an infinite number of instances of the game $\game^F_\rho$ at the state $\langle p, F\rangle$; Player $1$ wins if all of them  end up in a winning branching play for Player $1$, and Player $2$ wins otherwise.

%%%%%%%%%%%%%%%%%%%%%%%%%%%%%
%%%%%%%%%%%%%%%%%%%%%%%%%%%%%%%
%%%%%%%%%%%%%%%%%%%%%%%%%%%%%%%%%%%
%%%%%%%%%%%%%%%%%%%%%%%%%%%%%%%%%%%%%%%%%%%%

\section{Inductive characterization of the winning set of $\game^{\mu X.F}_{\rho}$}
\label{technical_section}
In this section we provide a transfinite inductive characterization of the  set $\Phi^{\mu X.F}$ of winning branching plays of the game $\game^{\mu X.F}_{\rho}$, for an arbitrary pL$\mu^{\odot}$ formula $F$ and interpretation $\rho$. This result will be  used in the proof of our main result, in Section \ref{proof_section}. The inductive characterization of $\Phi^{\mu X.F}$ will be obtained exploiting the similarities between the pL$\mu^{\odot}$ games $\game^{\mu X.F}_{\rho}$ and the simpler (in the complexity of the formula) game $\game^{F}_{\rho}$. The game arenas of the two games, denoted here by $\arena^{\mu X.F}$ and $\arena^{F}$, are almost identical  as they differ only in the following aspects:
\begin{enumerate}[(1)]
\item The set of game states  of $\arena^{\mu X.F}$ is the set of states of $\arena^{F}$  plus the set of states of the form $\langle p, \mu X.F\rangle$. The latter states, however, play almost no role in the game because they have only one successor, namely $\langle p, F\rangle$, and are not reachable by any other state.
\item More importantly, the states of the form $\langle p, X\rangle$, which are present in both game arenas, are  Player $1$ states in $G^{\mu X.F}_{\rho}$ (with $\langle p, F\rangle$  as unique successor), and  Nature states in $\game^{F}_{\rho}$ (with two\footnote{\label{nota_successors}Note that the set of successor state of $\langle p, X\rangle)$ is defined to  be $\{\top,\bot\}$, even when $\rho(X)(p)\!\in\{0,1\}$.} \emph{terminal} successors, $\top$ and $\bot$, reachable with probability $\rho(X)(p)$ and $1\!-\! \rho(X)(p)$, respectively).
\end{enumerate}
Moreover observe that the player assignments of the two games, denoted here by $\pl^{F}$ and $\pl^{\mu X.F}$, are identical, and that the priority assignments, $\pr^{F}$ and $ \pr^{\mu X.F}$, differ only on the game-states $s$ of the form $\langle p, X\rangle$:  $\pr^{F}(s)\!=\!0$ and $\pr^{\mu X.F}(s)\!=\!p$ for some odd (maximal) priority $p\!=\!\max (\pr^{\mu X.F})$ (see Definition \ref{parity_assignment_def}).

A branching play $T$ in the game $\game^{F}_{\rho}$, rooted at the game state $\langle p, F\rangle$, can be depicted as in figure \ref{fig1}. The triangle represents\footnote{The picture is quite simplistic. For example there are, in general, paths in $T$ never reaching a state of the form $\langle q, X\rangle$, that nonetheless branch away from the paths $\vec{s}_{i}$, for $i\!\in\!I$, somewhere between the root $\langle p, F\rangle$ and the last state $\langle p_{i},X\rangle$, whereas the picture depicts all such branches as branching away immediately at the root $\langle p, F\rangle$.} the set of paths in $T$ never reaching a state of the form $\langle q, X\rangle$, for $q\in P$, and the other edges represents the (possibly empty) collection of paths $\{\vec{s}_{i}\}_{i\in I}$ in $T$, for some countable index set $I$, reaching a state of the form $\langle p_{i}, X\rangle$ which is (necessarily) followed by a terminal state $b_{i}\in \{\top,\bot\}$.  Similarly a  branching play $T$ in $\game_{\rho}^{\mu X.F}$, rooted at $\langle p, F\rangle$, can be depicted as in figure \ref{fig2}. We extract the common part between the branching plays of $\game^{F}_{\rho}$ and $\game^{\mu X.F}_{\rho}$ by the following definition.

\begin{defi}[Branching pre-play]
Let $T$ be a branching play in $\game^{F}_{\rho}$ or in $\game^{\mu X.F}_{\rho}$, and $\{\vec{s}_{i}\}_{i\in I}$ be the $I$-indexed set of paths in $T$ reaching a state of the form $\langle p_{i}, X\rangle$, as described above. The \emph{branching pre-play} $T[x_{i}]_{i\in I}$, which can be depicted as in figure \ref{fig3}, is the tree obtained from the branching play $T$ by pruning its subtrees rooted at $\vec{s}_{i}$, for $i\!\in\!I$. \end{defi}
\begin{figure}[t]
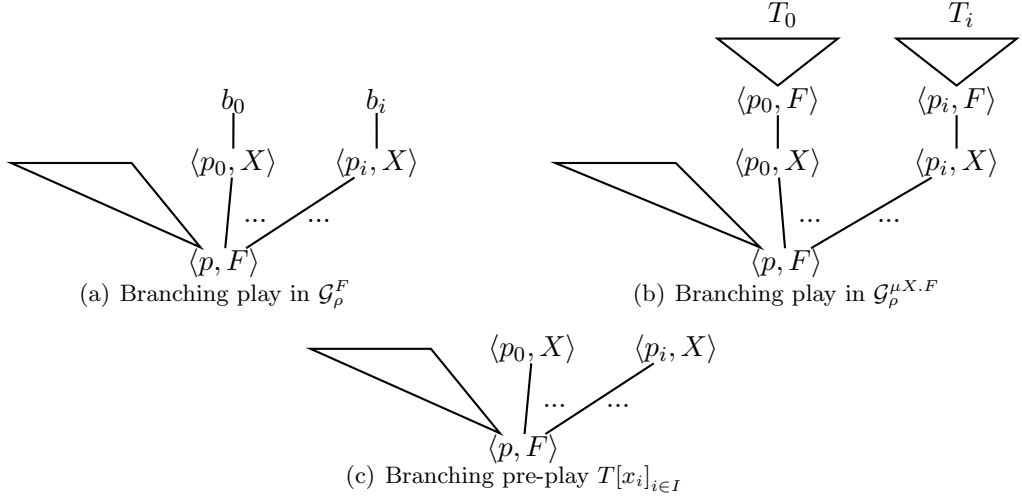

\centering
\subfigure[Branching play in  $\game^{F}_{\rho}$]{\label{fig1}
		\pstree[ treemode=U,levelsep=8ex ]{\Tr{$\langle p, F\rangle$}}{
        		\pstree[linestyle=none,arrows=-,levelsep=2ex]{\Tfan[fansize=10ex]}{\TR{ $\ $}}
		\pstree[levelsep=5ex]{\Tr{$\langle p_{0},X\rangle$} \trput{$...$}}{ \Tr{$ b_{0} $} } 
		\pstree[levelsep=5ex]{\Tr{$\langle p_{i},X\rangle$} \trput{$...$}}{ \Tr{$ b_{i} $} } 
		}
}\qquad\qquad 
\subfigure[Branching play in  $\game^{\mu X.F}_{\rho}$]{\label{fig2}
	\pstree[ treemode=U,levelsep=8ex ]{\Tr{$\langle p, F\rangle$}}{
        \pstree[linestyle=none,arrows=-,levelsep=2ex]{\Tfan[fansize=10ex]}{\TR{ $\ $}}
  
\pstree[levelsep=5ex]{\Tr{$\langle p_{0},X\rangle$} \trput{$...$}}{
	\pstree[levelsep=5ex]{\Tr{$\langle p_{0},F\rangle$}}{
		\pstree[linestyle=none,arrows=-,levelsep=2ex]{\Tfan[fansize=10ex]}{\TR{ $T_{0} $}}
	
	}
}
\pstree[levelsep=5ex]{\Tr{$\langle p_{i},X\rangle$} \trput{$...$}}{
	\pstree[levelsep=5ex]{\Tr{$\langle p_{i},F\rangle$}}{
		\pstree[linestyle=none,arrows=-,levelsep=2ex]{\Tfan[fansize=10ex]}{\TR{ $T_{i} $}}
	}
}
}		
}
\subfigure[Branching pre-play $T{[}x_{i}{]}_{i\in I}$]{\label{fig3}
	\pstree[ treemode=U,levelsep=8ex, ]{\Tr{$\langle p, F\rangle$}}{
        \pstree[linestyle=none,arrows=-,levelsep=2ex]{\Tfan[fansize=10ex]}{\TR{ $\ $}}

        \Tr{$\langle p_{0},X\rangle$}
        \trput{$...$}
	\Tr{$\langle p_{i}, X\rangle$}		
        \trput{$...$}
	}
}
\caption{Branching plays and pre-plays}
\end{figure} 

The notation adopted for branching pre-plays is motivated by the use we make of them. We consider a branching pre-play $T[x_{i}]_{i\in I}$ as a context on which we can plug in other branching plays: at the hole indexed by $i$, for $i\!\in\! I$, any branching play rooted at $\langle p_{i}, X\rangle$ can be plugged. If we desire to obtain a branching play for the game $\game^{F}_{\rho}$, we shall fill the branching pre-play $T[x_{i}]_{i\in I}$ with branching plays $T_{i}$ rooted at $\langle p_{i},X\rangle$ in $\arena^{F}$. These are  trees ending immediately either in  the leaf $\top$ or in the leaf $\bot$. We denote with $T[b_{i}]_{i\in I}$ the branching play in $\game^{F}_{\rho}$ obtained by filling the hole indexed by $i$ with the branching play rooted at $\langle p_{i},X\rangle$ and having $b_{i}$ as leaf, for $b_{i}\!\in\!\!\{\top,\bot\}$ (see Figure \ref{fig1}). Similarly given a  $I$-indexed family $\{T_{i}\}_{i\in I}$ of branching plays in $\game^{\mu X.F}$, where each $T_{i}$ is rooted in $\langle p_{i},X\rangle$,  we denote with $T[T_{i}]_{i\in I}$ the branching play in $\game^{\mu X.F}_{\rho}$ obtained by filling the hole indexed by $i$ with the branching play  $T_{i}$ (see Figure \ref{fig2}).

Clearly every branching play $T$ in $\game^{F}_{\rho}$, rooted at $\langle p, F\rangle$, is uniquely of the form $T[b_{i}]_{i\in I}$ for an appropriate sequence $\{b_{i}\}_{i\in I}$.  Similarly every branching play $T$  rooted at $\langle p, F\rangle$ in $\game^{\mu X.F}_{\rho}$ is of the form $T[T_{i}]_{i\in I}$ for an appropriate sequence $\{T_{i}\}_{i\in I}$. We now exploit these observations to define the following function from branching plays in $\game^{\mu X.F}_{\rho}$ to branching plays in $\game^{F}_{\rho}$.

\begin{defi}\label{function_m_X}
Let $X\!\subseteq\! \bp^{\mu X.F}$ be a set of branching plays in the game $\game^{\mu X.F}_{\rho}$. We define the function $m_{X}\!:\! \bp^{\mu X.F}\!\rightarrow\! \bp^{F}$, from branching plays in $\game^{\mu X.F}_{\rho}$ to branching plays in $\game^{F}_{\rho}$, as follows:
\begin{center}
$m_{X}( T[T_{i}]_{i\in I}) =  T[T_{i} \underline{\in} X ]_{i\in I}\ \ \ \ \ $ where $T_{i} \underline{\in} X \bydef  \left\{     \begin{array}{l  l}  	 \top & $if $ \  T_{i}\in X \\ 
							               			\bot & $otherwise $  \\  
                      	      		     \end{array}      \right.$

\end{center}
\end{defi}
Thus, the function $m_{X}$ maps a branching play (uniquely expressible as $T[T_{i}]_{i\in I}$, as observed above) to the corresponding branching play $T[b_{i}]_{i\in I}$ in $\game^{F}_{\rho}$, obtained by filling the $i$-th hole of $T[x_{i}]_{i\in I}$, with the branching play having $\top$ as leaf if and only if $T_{i}$ belongs to the set $X$.

 \begin{lem}\label{m_X_univ_meas}
 Let $X\!\subseteq\! \bp^{\mu X.F}$ be an open (Borel, universally measurable) set of branching plays in the game $\game^{\mu X.F}_{\rho}$. Then the function $m_{X}$ is continuous (Borel measurable, universally measurable).
 \end{lem}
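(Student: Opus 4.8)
The plan is to prove the three claims simultaneously by analyzing the structure of $m_X$ and reducing it to the topology of the relevant product spaces. The key observation is that a branching play $T = T[T_i]_{i \in I}$ in $\game^{\mu X.F}_\rho$ decomposes canonically into its pre-play skeleton $T[x_i]_{i\in I}$ together with the family of subplays $\{T_i\}_{i\in I}$ rooted at the states $\langle p_i, X\rangle$. The value $m_X(T) = T[T_i \,\underline{\in}\, X]_{i\in I}$ depends only on (a) the pre-play skeleton, which is preserved intact, and (b) for each $i \in I$, the single bit of information $[T_i \in X]$. So morally $m_X$ factors through the map that replaces each subplay $T_i$ by a Boolean, and the target leaf $b_i \in \{\top, \bot\}$ is determined by membership of $T_i$ in $X$.

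First I would make the decomposition precise at the level of basic open sets. Using the topology on $\bp$ from Definition \ref{topology_bp}, a basic clopen set $O_G \subseteq \bp^F$ is determined by a finite tree prefix $G$ in $\arena^F$. To compute $m_X^{-1}(O_G)$ I would observe that the finite prefix $G$ constrains only finitely much of the skeleton and finitely many of the leaf choices $b_i$; a branching play $T[T_i]_{i\in I}$ maps into $O_G$ iff its pre-play skeleton extends the skeletal part of $G$ \emph{and}, for each of the finitely many holes $i$ that $G$ reaches a leaf $b_i$ of, we have $T_i \in X$ (when $b_i = \top$) or $T_i \notin X$ (when $b_i = \bot$). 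Since the skeletal constraint is itself a clopen condition on $T$, the preimage $m_X^{-1}(O_G)$ is a finite Boolean combination of the clopen skeletal set with finitely many sets of the form $\{T : T_i \in X\}$ and their complements.

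The heart of the argument is therefore to control the sets $\{T \in \bp^{\mu X.F} : T_i \in X\}$, i.e.\ the preimage of $X$ under the continuous projection $\pi_i : T \mapsto T_i$ that extracts the $i$-th subplay (continuity of $\pi_i$ being routine from the product-style topology, since a finite prefix of $T_i$ is determined by a finite prefix of $T$). If $X$ is open then $\pi_i^{-1}(X)$ is open; if $X$ is Borel then $\pi_i^{-1}(X)$ is Borel; and if $X$ is universally measurable then $\pi_i^{-1}(X)$ is universally measurable by Theorem \ref{properties_univ_meas}(2), since $\pi_i$ is continuous hence universally measurable. Assembling the finite Boolean combination, $m_X^{-1}(O_G)$ lands in the open sets, the Borel sets, or $\textnormal{UM}(\bp^{\mu X.F})$ respectively; and since the $O_G$ generate the topology (and each of the three target classes is closed under the relevant operations — the open case needs $m_X^{-1}(O_G)$ open, giving continuity, while Borel and universal measurability are $\sigma$-algebras generated by the $O_G$), the three conclusions follow.

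The main obstacle I expect is the bookkeeping in the open case. For continuity one genuinely needs $m_X^{-1}(O_G)$ to be \emph{open}, not merely Borel, and the Boolean combination above contains a complemented set $\{T : T_i \notin X\}$ whenever some reached leaf is $b_i = \bot$. The complement of an open set is not open, so this naive combination does not immediately give openness. The resolution is that in the target $\bp^F$ the leaf states $\top$ and $\bot$ are \emph{terminal}, so specifying $b_i = \bot$ in a finite prefix $G$ is a positive, finitely-witnessed condition: the skeleton of $T$ already forces the hole $i$ to exist and be reached, and $T_i \notin X$ must be realized by $\pi_i^{-1}(\bp^{\mu X.F} \setminus X)$, which is open precisely because $X$ being open makes its complement closed — so I must instead exploit that the finitely many holes $G$ constrains are handled by intersecting with clopen skeletal data, and that a finite prefix can only ever \emph{positively} assert membership of finitely many subplays in an open $X$ via a finite subprefix of each $T_i$. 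I would therefore argue directly that $m_X^{-1}(O_G)$ equals a union over all finite prefixes witnessing the required memberships, each of which is clopen in $\bp^{\mu X.F}$, yielding an open preimage. The Borel and universally measurable cases, being closed under complementation, are comparatively immediate once the projections $\pi_i$ are shown continuous.
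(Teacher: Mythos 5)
Your Borel and universally measurable cases are correct and are, in substance, the paper's own argument: the paper factors $m_{X}$ through the leaf-assignment map $\lambda T.(T\,\underline{\in}\,X)$ into $\{\top,\bot\}$ with the discrete topology and then appeals to ``standard arguments'', which, when unwound, are exactly your computation --- reduce to the basic clopen sets $O_{G}$ of $\bp^{F}$, observe that $m_{X}^{-1}(O_{G})$ is a finite Boolean combination of a clopen skeletal condition with the sets $\pi_{i}^{-1}(X)$ and $\pi_{i}^{-1}(\bp^{\mu X.F}\setminus X)$ for the finitely many holes whose leaf $G$ specifies, note that each extraction map $\pi_{i}$ is continuous on the clopen set of plays whose skeleton contains the hole (a condition the skeletal part of $G$ already enforces), and close off with the good-sets principle, invoking Theorem \ref{properties_univ_meas}(2) in the universally measurable case. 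One wording quibble: $\textnormal{UM}(\bp^{\mu X.F})$ is not \emph{generated} by the $O_{G}$; what you need, and clearly intend, is that the family of sets $S$ for which $m_{X}^{-1}(S)$ is universally measurable is a $\sigma$-algebra containing a generating family of the target Borel $\sigma$-algebra.

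The open case, however, contains a genuine gap, and your proposed repair does not close it. The step asserting that $\pi_{i}^{-1}(\bp^{\mu X.F}\setminus X)$ ``is open precisely because $X$ being open makes its complement closed'' is a non sequitur, and your final claim that $m_{X}^{-1}(O_{G})$ is a union of clopen finite-prefix witnesses fails at the $\bot$-leaves: non-membership in an open set is a closed condition and is not witnessed by any finite prefix of $T_{i}$. In fact the claim you are trying to rescue is false for merely open $X$: fix a skeleton with a hole $i$, choose subplays $T_{i}^{(n)}\in X$ converging to some $T_{i}^{*}\notin X$ (possible whenever $X$ is open but not closed at that fiber), and keep everything else fixed. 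Then $T^{(n)}\rightarrow T^{*}$ in $\bp^{\mu X.F}$, but $m_{X}(T^{(n)})$ carries leaf $\top$ at hole $i$ while $m_{X}(T^{*})$ carries $\bot$, so the images disagree on a fixed basic clopen set of $\bp^{F}$ and $m_{X}(T^{(n)})\not\rightarrow m_{X}(T^{*})$. Continuity of $m_{X}$ thus requires $X$ to be \emph{clopen}, and the paper's one-line proof quietly requires the same, since a map into discrete $\{\top,\bot\}$ is continuous if and only if the preimage of $\top$ is clopen; the parenthetical ``open'' in the statement should really read ``clopen''. The slip is harmless downstream --- Lemma \ref{W_is_um} and Proposition \ref{dual_results} use only the universally measurable case --- but the honest fix is to strengthen the hypothesis in the continuity clause, not to manufacture an argument for a claim that is false as stated.
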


 \begin{proof}
It is clear that the function $\lambda T. ( T \underline{\in } X)$ of Definition \ref{function_m_X} is continuous (Borel, universally measurable) precisely when $X$ is open (Borel, universally measurable), where $\{\top,\bot\}$ is endowed with the discrete topology.
The proof is then carried out with standard arguments (see, e.g., Theorem 6.2.18 in \cite{MioThesis}). Here we omit the routine details.
 \end{proof}
 
We are now ready to define the operator of which $\Phi^{\mu X.F}$, the winning set of the game $\game^{\mu X.F}_{\rho}$, is the least fixed point.

\begin{defi}\label{def_mathbb_W}
The operator $\mathbb{W}\!: \mathcal{\bp}^{\mu X.F}\!\rightarrow\! \mathcal{\bp}^{\mu X.F}$ is defined as follows:
\begin{center}
$\mathbb{W}(X)\bydef m_{X}^{-1}(\Phi^{F})= \{ T[ T_{i}]_{i\in I} \ | \  T[ T_{i} \underline{\in} X ]_{i\in I} \in  \Phi^{F}\}$
\end{center}
where $\Phi^{F}$ is the winning set of the game $\game^{F}_{\rho}$.
\end{defi}

We now prove a few important properties of the operator $\mathbb{W}$.

\begin{lem}\label{winning_preserve_lemma}
If $X$ is a set of branching plays winning for Player $1$ in $\game^{\mu X.F}_{\rho}$, i.e., if $X\!\subseteq\!\Phi^{\mu X.F}$, then $\mathbb{W}(X)\!\subseteq\!\Phi^{\mu X.F}$ too.
\end{lem}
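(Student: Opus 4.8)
The plan is to prove the inclusion by a strategy-composition argument carried out at the level of the \emph{inner} parity games. Fix a branching play $T[T_i]_{i\in I}\in\mathbb{W}(X)$. By Definition \ref{def_mathbb_W} this means $m_X(T[T_i]_{i\in I})=T[b_i]_{i\in I}\in\Phi^F$, where $b_i=\top$ precisely when $T_i\in X$; hence Player $1$ has a winning strategy $\sigma$ in the inner game $G_{T[b_i]_{i\in I}}$ associated with this collapsed branching play of $\game^F_\rho$. Moreover, by the hypothesis $X\subseteq\Phi^{\mu X.F}$, for every index $i$ with $T_i\in X$ Player $1$ also has a winning strategy $\tau_i$ in the inner game $G_{T_i}$. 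The aim is to glue $\sigma$ together with the strategies $\tau_i$ into a single winning strategy $\sigma'$ for Player $1$ in $G_{T[T_i]_{i\in I}}$, which by Definition \ref{def_metaparity} witnesses $T[T_i]_{i\in I}\in\Phi^{\mu X.F}$.

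The crux of the argument is a reachability observation that guarantees the required $\tau_i$ are always available. Recall that $\arena^F$ and $\arena^{\mu X.F}$ coincide on the common branching pre-play $T[x_i]_{i\in I}$, and that $\pr^F$ and $\pr^{\mu X.F}$ agree everywhere except on the states $\langle p_i,X\rangle$. In the collapsed game each such state has the terminal $b_i$ as its unique child, and since $\pr^F(\bot)\!=\!0$ is even, any completed path ending in $\bot$ lies outside $\mathcal{W}_{\pr}$ and is lost by Player $1$. Because $\sigma$ is winning, no $\sigma$-consistent play can therefore reach a state $\langle p_i,X\rangle$ with $b_i=\bot$: every $\langle p_i,X\rangle$ that is reachable while Player $1$ obeys $\sigma$ has $b_i=\top$, i.e.\ $T_i\in X$, so that $\tau_i$ exists.

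With this in hand I would define $\sigma'$ as follows: on partial plays confined to the common pre-play $T[x_i]_{i\in I}$, let $\sigma'$ copy the moves prescribed by $\sigma$ (this is legitimate, as the pre-play, its branching nodes and their owners are literally the same in both games); once a play reaches some leaf $\langle p_i,X\rangle$ of the pre-play and enters the subtree $T_i$, let $\sigma'$ switch to $\tau_i$. To see that $\sigma'$ is winning, fix any Player $2$ strategy and inspect the resulting completed path $\vec{s}$. If $\vec{s}$ never leaves the pre-play, it is simultaneously a $\sigma$-consistent play of the collapsed game with exactly the same priority sequence (the only states where $\pr^F$ and $\pr^{\mu X.F}$ differ are never visited), so $\vec{s}\in\mathcal{W}_{\pr}$ since $\sigma$ is winning. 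Otherwise $\vec{s}$ enters a unique subtree $T_i$; by the reachability observation $T_i\in X$, and beyond the pre-play prefix $\vec{s}$ is a $\tau_i$-consistent play of $G_{T_i}$.

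The remaining point, routine but worth stating, is that prefixing a $\tau_i$-winning tail by the finite pre-play segment does not alter the parity verdict, since a finite initial segment changes neither the $\limsup$ of priorities of an infinite path nor the last priority of a terminated one; the tail itself, rooted at $\langle p_i,X\rangle$, is exactly a play of $G_{T_i}$, whose winning condition already accounts for all priorities occurring in $T_i$. Hence in both cases $\vec{s}\in\mathcal{W}_{\pr}$, so $\sigma'$ is a winning strategy for Player $1$ in $G_{T[T_i]_{i\in I}}$ and therefore $T[T_i]_{i\in I}\in\Phi^{\mu X.F}$, establishing $\mathbb{W}(X)\subseteq\Phi^{\mu X.F}$. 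The main obstacle is precisely the reachability observation of the second paragraph: without it the composed strategy might call for a $\tau_i$ at an index with $T_i\notin X$, where no winning inner strategy need exist.
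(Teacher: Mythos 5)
Your proposal is correct and takes essentially the same route as the paper's proof: follow the winning strategy $\sigma$ of the collapsed inner game $G_{T[b_i]_{i\in I}}$ on the common branching pre-play, note that any $\sigma$-consistent play reaching a hole $\langle p_i,X\rangle$ must have $b_i=\top$ (hence $T_i\in X\subseteq\Phi^{\mu X.F}$, so a winning $\tau_i$ for $G_{T_i}$ exists), switch to $\tau_i$, and conclude because the resulting play has a winning tail. Your explicit reachability justification via $\pr^F(\bot)=0$ and the remark that a finite prefix does not affect the parity verdict simply make precise steps the paper leaves implicit.
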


\begin{proof}
Fix some $X\!\subseteq\!\Phi^{\mu X.F}$ and consider an arbitrary branching play $T_{1}=T[T_{i}]\!\in\!\mathbb{W}(X)$. It follows by definition of $\mathbb{W}$ that $T_{2}=T[T_{i}\underline{\in} X]_{i\in I}\!\in\! \Phi^{F}$, i.e., $T_{2}$ is a winning branching play in the game $\game^{F}_{\rho}$. Equivalently, Player $1$ has a winning strategy $\sigma$ in the inner game $G_{T_{2}}$ (see Section \ref{section_meta_parity}). We now prove that $T_{1}\!\in\!\Phi^{\mu X.F}$, i.e., that Player $1$ has a winning strategy $\tau$ in the inner game $G_{T_{1}}$ (see Section \ref{section_meta_parity}) by a strategy stealing argument, exploiting the common structure (the branching pre-play $T[x_{i}]_{i\in I}$) of the two branching plays $T_{1}$ and $T_{2}$. The strategy $\tau$ behaves as the strategy $\sigma$ until a hole $\vec{s}_{i}$, for $i\!\in\! I$, is reached. Thus if no hole is ever reached, the plays in the two games are identical, hence Player $1$ wins following $\tau$, as desired. If a hole $\vec{s}_{i}\!\in\! I$ is reached, then it is necessarily the case that $T_{i}\!\in\! X$, because otherwise the play in $G_{T_{2}}$ would end in the losing state $\bot\!=\! T_{i}\underline{\in}X$ while playing in accordance with the winning strategy $\sigma$. A contradiction. We then define the strategy $\tau$ to play the rest of the game as a winning strategy $\tau_{i}$ for the inner game $\game_{T_{i}}$. Note that such a strategy exist because $T_{i}\!\in\! X\subseteq \Phi^{\mu X.F}$. It then follows that Player $1$, playing in accordance with $\tau$, always produces a play with a winning tail. Thus, $\tau$ is winning, as desired.
\end{proof}

\begin{lem}
The operator $\mathbb{W}$ is monotone, i.e., $\mathbb{W}(X)\!\subseteq\!\mathbb{W}(Y)$ holds for every  $X\!\subseteq\! Y$.
\end{lem}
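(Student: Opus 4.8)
The plan is to reduce the statement to a monotonicity property of the winning set $\Phi^{F}$ of the simpler game $\game^{F}_{\rho}$, and then to establish that property by a strategy-stealing argument of exactly the kind already used in Lemma \ref{winning_preserve_lemma}.

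First I would unfold the definitions. By Definition \ref{def_mathbb_W} we have $\mathbb{W}(X) = m_{X}^{-1}(\Phi^{F})$, so fix $X \subseteq Y$ and pick an arbitrary branching play $T[T_{i}]_{i\in I} \in \mathbb{W}(X)$; I must show $T[T_{i}]_{i\in I} \in \mathbb{W}(Y)$, i.e.\ that $m_{Y}(T[T_{i}]_{i\in I}) = T[T_{i} \underline{\in} Y]_{i\in I}$ lies in $\Phi^{F}$, knowing that $m_{X}(T[T_{i}]_{i\in I}) = T[T_{i} \underline{\in} X]_{i\in I}$ does. The crucial observation is that, because $X \subseteq Y$, the leaf assignment produced by $Y$ dominates the one produced by $X$: for each $i$, if $T_{i} \underline{\in} X = \top$ then $T_{i} \in X \subseteq Y$, whence $T_{i} \underline{\in} Y = \top$. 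Thus the two branching plays $T[T_{i} \underline{\in} X]_{i\in I}$ and $T[T_{i} \underline{\in} Y]_{i\in I}$ of $\game^{F}_{\rho}$ share the same branching pre-play $T[x_{i}]_{i\in I}$ and differ only in that some holes carrying the losing leaf $\bot$ in the former carry the winning leaf $\top$ in the latter. The statement therefore reduces to the claim that $\Phi^{F}$ is upward closed under replacing $\bot$-leaves by $\top$-leaves.

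To prove this claim I would argue as in Lemma \ref{winning_preserve_lemma}. Writing $T[b_{i}]_{i\in I}$ for $T[T_{i} \underline{\in} X]_{i\in I}$ and $T[b_{i}']_{i\in I}$ for $T[T_{i} \underline{\in} Y]_{i\in I}$, with $b_{i}' = \top$ whenever $b_{i} = \top$, I fix a winning strategy $\sigma$ for Player $1$ in the inner game $G_{T[b_{i}]}$ (which exists since $T[b_{i}]_{i\in I} \in \Phi^{F}$). Since the branching nodes, and hence the choices available to both players in the inner game, are determined by the common pre-play $T[x_{i}]_{i\in I}$, the same $\sigma$ is a legitimate strategy in $G_{T[b_{i}']}$. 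I then examine an arbitrary play consistent with $\sigma$ in $G_{T[b_{i}']}$ by cases: if it never reaches a hole it is a completed path lying entirely in the shared structure, so it is won by Player $1$ exactly as in $G_{T[b_{i}]}$; if it reaches a hole $i$, then that hole is reachable under $\sigma$ in $G_{T[b_{i}]}$ as well, and since $\sigma$ is winning there the corresponding play cannot terminate in the losing leaf $\bot$, forcing $b_{i} = \top$ and hence $b_{i}' = \top$, so the play ends at $\top$ and is won. In either case $\sigma$ wins, so $T[b_{i}']_{i\in I} \in \Phi^{F}$, completing the argument.

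The main obstacle is the hole-reaching case: one must be sure that the set of holes reachable under $\sigma$ is the same in the two inner games (which follows from their shared branching structure) and that every such reachable hole is necessarily a $\top$-leaf in $T[b_{i}]$. The latter uses determinacy of $2$-player parity games \cite{Martin75}: were some reachable hole a $\bot$-leaf, Player $2$ could steer the play into it and defeat $\sigma$, contradicting that $\sigma$ is winning. Once this is in place the remaining bookkeeping is routine and mirrors the proof of Lemma \ref{winning_preserve_lemma}.
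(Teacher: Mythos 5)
Your proof is correct and takes essentially the same route as the paper's: both reduce the claim to the observation that, since $X\subseteq Y$, the inner games over the common pre-play $T[x_{i}]_{i\in I}$ differ only in that some losing $\bot$-leaves become winning $\top$-leaves, so Player $1$'s winning strategy transfers unchanged (the paper compresses your case analysis into ``the desired result then trivially follows''). One minor remark: your appeal to determinacy is superfluous --- if a hole reachable under $\sigma$ carried $\bot$, there would already be a play consistent with $\sigma$ that is lost, directly contradicting that $\sigma$ is winning.
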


\begin{proof}
Fix $X\!\subseteq\! Y\!\subseteq \!\bp^{\mu X.F}$. Assume $T_{1}\!=\!T[T_{i}]_{i\in I}\!\in\! \mathbb{W}(X)$, i.e.,  $T[ T_{i} \underline{\in} X ]_{i\in I} \in  \Phi^{F}$. We need to prove that $T_{1}\!\in\! \mathbb{W}(Y)$ too, i.e., that $T_{2}=T[ T_{i} \underline{\in} Y ]_{i\in I} \in  \Phi^{F}$. Equivalently, we need to show that if Player $1$ has a winning strategy in the inner game $G_{T_{1}}$ associated with $T_{1}$, then they have a winning strategy in $G_{T_{2}}$ too. The two parity games  $G_{T_{1}}$ and $G_{T_{2}}$ are almost identical, except that a play ending in one of the holes $\vec{s}_{i}$, for $i\!\in\! I$,  might be losing for Player $1$ in $G_{T_{1}}$ (when the game ends in the leaf $\bot\!=\! T_{i}\underline{\in}X)$ but winning in $G_{T_{2}}$ (when  $T_{i}\underline{\in}Y\!=\!\top)$. The desired result then trivially follows. 
\end{proof}

As a consequence, by application of the Knaster--Tarski theorem, the operator $\mathbb{W}$ has a least fixed point $\lfp(\mathbb{W})$. We are now ready to prove the main result of this section.

\begin{thm}
\label{inductive_ch}
The following equality holds: $\Phi^{\mu X.F}= \lfp( \mathbb{W})$.
\end{thm}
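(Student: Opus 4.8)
The plan is to establish the two inclusions separately. The inclusion $\lfp(\mathbb{W})\subseteq\Phi^{\mu X.F}$ is immediate: by the Knaster--Tarski theorem $\lfp(\mathbb{W})$ is the least \emph{pre}-fixed point of $\mathbb{W}$, and Lemma~\ref{winning_preserve_lemma} applied to the set $\Phi^{\mu X.F}$ itself yields $\mathbb{W}(\Phi^{\mu X.F})\subseteq\Phi^{\mu X.F}$, exhibiting $\Phi^{\mu X.F}$ as a pre-fixed point. All the substance lies in the reverse inclusion $\Phi^{\mu X.F}\subseteq\lfp(\mathbb{W})$, which I would obtain by a transfinite induction on a rank extracted from a winning strategy.

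The key observation is that, by the construction of $\pr^{\mu X.F}$, the states $\langle q,X\rangle$ carry the \emph{unique maximal} priority, which is odd since $X$ is a $\mu$-variable. Hence, if $\tau$ is any winning strategy for Player~$1$ in the inner game $G_{T}$ of a branching play $T\in\Phi^{\mu X.F}$, no $\tau$-consistent play may cross infinitely many $X$-nodes, for otherwise its $\limsup$ of priorities would be the odd maximum and the play would be lost. The $X$-nodes reachable along $\tau$-consistent plays therefore form a \emph{well-founded} forest, to which I assign the ordinal rank $\mathrm{rk}_{\tau}(T)$ equal to its height. Since in $G_T$ each node has a unique history from the root, for each reachable hole $\vec{s}_i$ (a node $\langle p_i,X\rangle$) the strategy $\tau$ restricts to a well-defined strategy $\tau_i$ on the subtree $T_i$ rooted there; because $G_{T_i}$ is exactly the continuation of $G_T$ below $\vec{s}_i$ and a single $X$-crossing does not affect the tail parity condition, $\tau_i$ is winning, so $T_i\in\Phi^{\mu X.F}$ with $\mathrm{rk}_{\tau_i}(T_i)<\mathrm{rk}_{\tau}(T)$.

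I would then prove by transfinite induction on $\mathrm{rk}_{\tau}(T)$ that every such $T$ lies in $\lfp(\mathbb{W})$. Writing $T=T[T_i]_{i\in I}$ and abbreviating $Z=\lfp(\mathbb{W})$, the induction hypothesis gives $T_i\in Z$ for every reachable hole $\vec{s}_i$. Consider the branching play $T[T_i\underline{\in}Z]_{i\in I}$ of $\game^{F}_{\rho}$ from Definition~\ref{def_mathbb_W}: each reachable hole is replaced by the leaf $\top$, while the replacements at unreachable holes are irrelevant. Reading $\tau$ off on the common pre-play $T[x_i]_{i\in I}$ gives a strategy that wins the inner game associated with $T[T_i\underline{\in}Z]_{i\in I}$, since every $\tau$-consistent play either stays in the pre-play forever (winning by the very parity condition it satisfies in $G_T$, the two priority assignments agreeing off the $X$-nodes) or reaches a reachable hole and hence the winning leaf $\top$. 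Thus $T[T_i\underline{\in}Z]_{i\in I}\in\Phi^{F}$, i.e.\ $T\in m_Z^{-1}(\Phi^{F})=\mathbb{W}(Z)=Z$, using that $Z$ is a fixed point of $\mathbb{W}$. The base case $\mathrm{rk}_{\tau}(T)=0$ is the instance with no reachable hole and is handled uniformly by the same argument.

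I expect the main obstacle to be making the reduction to well-founded induction fully rigorous: namely, that a winning strategy cannot cross infinitely many maximal-priority $X$-nodes, that its restriction $\tau_i$ below a reachable hole is a genuine winning strategy in $G_{T_i}$ of strictly smaller rank, and that substituting $\top$ for the reachable winning subtrees preserves the win in the \emph{simpler} game $\game^{F}_{\rho}$. These are precisely the points at which the interplay between the \emph{inner} parity games and the priority assignment $\pr^{\mu X.F}$ must be controlled carefully.
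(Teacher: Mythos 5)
Your proposal is correct in substance, but it proves the hard inclusion by a genuinely different route than the paper. The easy inclusion $\lfp(\mathbb{W})\subseteq\Phi^{\mu X.F}$ is handled identically (Lemma~\ref{winning_preserve_lemma} exhibits $\Phi^{\mu X.F}$ as a pre-fixed point). For the converse, the paper argues \emph{contrapositively}: from $T=T[T_i]_{i\in I}\notin\lfp(\mathbb{W})$ it extracts a winning Player~$2$ strategy in the inner game of $R=T[T_i\underline{\in}\lfp(\mathbb{W})]_{i\in I}$, transplants it onto $T$, and at each hole reached it observes $T_i\notin\lfp(\mathbb{W})$ and \emph{restarts} the construction recursively; plays with finitely many restarts win by a winning tail, and plays with infinitely many restarts win because the $X$-states carry the maximal odd priority. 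You instead argue \emph{directly} from a winning Player~$1$ strategy $\tau$ on $T\in\Phi^{\mu X.F}$: the same parity fact now yields well-foundedness of the forest of $\tau$-reachable $X$-nodes, and a transfinite induction on its rank pushes $T$ into $\lfp(\mathbb{W})$. The two arguments are dual, and both hinge on exactly the same two ingredients --- the maximal odd priority of $\langle p,X\rangle$ and strategy transplantation across the common pre-play $T[x_i]_{i\in I}$ --- but what they buy differs: the paper's version needs no ordinal bookkeeping (the possibly non-well-founded recursion of restarts is licensed because it only has to be evaluated along individual plays, where parity settles the infinite case), whereas your version additionally extracts quantitative information, since the rank of $T$ bounds the approximant stage $\alpha$ with $T\in\mathbb{W}_\alpha$, and countability of the forest re-derives the $\omega_1$ closure bound of Lemma~\ref{iteration_lemma}.

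Two small points you should pin down to make the induction airtight. First, well-foundedness needs the observation that $\tau$-consistency is determined by finite prefixes, so an infinite chain of nested reachable $X$-nodes would assemble into a single $\tau$-consistent play crossing infinitely many of them --- your ``therefore'' silently uses this limit-closure. Second, with rank defined as the height of the forest \emph{including} the root of $T_i$ (which is itself an $X$-node), the comparison $\mathrm{rk}_{\tau_i}(T_i)<\mathrm{rk}_\tau(T)$ can degenerate to equality when $\vec{s}_i$ realizes the supremum; you must count only $X$-nodes crossed \emph{strictly after} the start of the play (equivalently, define $\mathrm{rk}_\tau(T)=\sup_i\bigl(\mathrm{rk}_{\tau_i}(T_i)+1\bigr)$ over reachable holes, well-defined by well-foundedness) to get the strict decrease your induction requires. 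With those conventions fixed, your treatment of unreachable holes (the transplanted strategy never visits them, so their $\top/\bot$ replacement under $T[T_i\underline{\in}Z]_{i\in I}$ is irrelevant) and of plays remaining forever in the pre-play (where $\pr^{F}$ and $\pr^{\mu X.F}$ agree) is exactly right, and the step $T\in m_Z^{-1}(\Phi^F)=\mathbb{W}(Z)=Z$ closes the argument correctly.
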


\begin{proof}
We already know, by application of Lemma \ref{winning_preserve_lemma},  that $\lfp(\mathbb{W})\!\subseteq\!\Phi^{\mu X.F}$. We now prove that the equality holds by showing that, for every $T\not\in \lfp(\mathbb{W})$, the branching play $T$ does not belong to  $\Phi^{\mu X.F}$. Fix an arbitrary $T\!\not\in\!  \lfp(\mathbb{W})$. We show that $T\!\not\in\! \Phi^{\mu X.F}$ by constructing a winning strategy $\sigma^{T}_{2}$ for Player $2$ in the inner game $G_{T}$.
By assumption we have that $T=T[T_{i}]_{i\in I}\!\not\in\! \lfp(\mathbb{W})$ or, equivalently, $T[T_{i}\underline{\in}\lfp(\mathbb{W})]_{i\in I}\!\not\in\! \Phi^{F}$. For notational convenience, let us denote with $R$ the branching play $T[T_{i}\underline{\in}\lfp(\mathbb{W})]_{i\in I}$ in $\game^{F}_{\rho}$. Let $\tau^{R}_{2}$ be a strategy winning for Player $2$ in the inner game $G_{R}$. As already done in the proof of Lemma \ref{winning_preserve_lemma}, we define $\sigma^{T}_{2}$ exploiting the common structure (the branching pre-play $T[x_{i}]_{i\in I}$) of $T$ and $R$. The strategy $\sigma^{T}_{2}$  behaves as the strategy $\tau^{R}_{2}$ until a hole $\vec{s}_{i}$, for $i\!\in\! I$, is reached. Thus if no hole is ever reached, the plays in the two games are identical, hence Player $2$ wins following $\sigma^{T}_{2}$, as desired. If a hole $\vec{s}_{i}\!\in\! I$ is reached, then it is necessarily the case that $T_{i}\!\not\in\! \lfp(\mathbb{W})$, because otherwise the play in $G_{R}$ would end in the  state $\top\!=\! T_{i}\underline{\in}\lfp(\mathbb{W})$, which is winning for Player $1$, while playing in accordance with the winning strategy $\sigma^{R}_{2}$. A contradiction. We then define the strategy $\sigma^{T}_{2}$ to play the rest of the game (forgetting the previous history) as the strategy $\sigma^{T_{i}}_{2}$, constructed as for $\sigma^{T}_{2}$, but with respect to the branching play $T_{i}$. We shall call this a \emph{re-starting} point of the play.
We now prove that the strategy $\sigma^{T}_{2}$ is winning for Player $2$ as desired. There are two cases to consider. A play in the game $\game_{T}$ played in accordance with $\sigma^{T}_{2}$ can have,
\begin{enumerate}[(1)]
\item either finitely many restarting points, or
\item infinitely many restarting points, i.e., infinitely many occurrences of states of the form $\langle p_{i},X\rangle$, for $i\!\in\! I$.
\end{enumerate}
In the first case,  following earlier observations, the resulting path has a winning tail and thus is winning. In the second case, since the states of the form $\langle p_{i},X\rangle$ are assigned maximal odd priority in $\game^{\mu X.F}_{\rho}$, the play is winning for Player $2$ (see Definition \ref{WPR}).
\end{proof}

The following results will be used for dealing with the measure-theoretic complications associated with the complexity of the winning set $\Phi^{\mu X.F}$.

\begin{lem}\label{iteration_lemma}
The least fixed point of $\mathbb{W}$ is reached in at most $\omega_{1}$ iterations, i.e., the equality $\lfp(\mathbb{W})\!=\!\bigcup_{\alpha<\omega_{1}} \!\mathbb{W}_{\alpha}$ holds, where $\mathbb{W}_{\alpha}\!=\!\bigcup_{\beta<\alpha}\mathbb{W}(\mathbb{W}_{\beta})$.
\end{lem}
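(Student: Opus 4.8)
The plan is to show that the canonical $\subseteq$-increasing chain $\{\mathbb{W}_{\alpha}\}$ supplied by the Knaster--Tarski theorem stabilises by stage $\omega_{1}$, and for this it suffices to prove that $\mathbb{W}_{\omega_{1}}$ is a fixed point of $\mathbb{W}$. First I would record that, because $\mathbb{W}$ is monotone, the chain is $\subseteq$-increasing, so $\bigcup_{\alpha<\omega_{1}}\mathbb{W}_{\alpha}=\mathbb{W}_{\omega_{1}}$ and, again by monotonicity, $\mathbb{W}_{\omega_{1}}=\bigcup_{\beta<\omega_{1}}\mathbb{W}(\mathbb{W}_{\beta})\subseteq\mathbb{W}(\mathbb{W}_{\omega_{1}})$. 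Hence the only nontrivial inclusion to establish is the reverse one, $\mathbb{W}(\mathbb{W}_{\omega_{1}})\subseteq\mathbb{W}_{\omega_{1}}$. Once this holds, $\mathbb{W}_{\omega_{1}}$ is a fixed point lying below every fixed point of $\mathbb{W}$ (the chain sits below $\lfp(\mathbb{W})$ by transfinite induction), so $\mathbb{W}_{\omega_{1}}=\lfp(\mathbb{W})$ by Knaster--Tarski, which is exactly the claimed equality.

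The heart of the argument is a continuity property of $\mathbb{W}$ at $\omega_{1}$. By Definition \ref{def_mathbb_W}, whether a branching play $T=T[T_{i}]_{i\in I}$ belongs to $\mathbb{W}(X)$ depends on $X$ only through the membership facts $\{T_{i}\in X\}_{i\in I}$, since $m_{X}(T[T_{i}]_{i\in I})=T[T_{i}\underline{\in}X]_{i\in I}$. The key structural input is that the index set $I$ is \emph{countable}: the arena $\arena^{\mu X.F}$ has a countable state set, and branching plays branch at most binarily (only at $\cdot/\odot$-states, and uniquely everywhere else), so a branching play is a finitely-branching tree with only countably many nodes; the family $\{\vec{s}_{i}\}_{i\in I}$ of paths reaching a state $\langle p_{i},X\rangle$ is therefore countable.

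Granting this, I would conclude as follows. Fix $T=T[T_{i}]_{i\in I}\in\mathbb{W}(\mathbb{W}_{\omega_{1}})$, so that $T[T_{i}\underline{\in}\mathbb{W}_{\omega_{1}}]_{i\in I}\in\Phi^{F}$. For each $i\in I$ with $T_{i}\in\mathbb{W}_{\omega_{1}}$, use $\mathbb{W}_{\omega_{1}}=\bigcup_{\alpha<\omega_{1}}\mathbb{W}_{\alpha}$ to pick some $\alpha_{i}<\omega_{1}$ with $T_{i}\in\mathbb{W}_{\alpha_{i}}$. Since $I$ is countable and $\omega_{1}$ is regular, $\alpha:=\sup_{i}\alpha_{i}<\omega_{1}$. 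Then for every $i\in I$ the flag $T_{i}\underline{\in}\mathbb{W}_{\alpha}$ agrees with $T_{i}\underline{\in}\mathbb{W}_{\omega_{1}}$: if $T_{i}\in\mathbb{W}_{\omega_{1}}$ then $T_{i}\in\mathbb{W}_{\alpha_{i}}\subseteq\mathbb{W}_{\alpha}$, and if $T_{i}\notin\mathbb{W}_{\omega_{1}}$ then $T_{i}\notin\mathbb{W}_{\alpha}$ since $\mathbb{W}_{\alpha}\subseteq\mathbb{W}_{\omega_{1}}$. Consequently $T[T_{i}\underline{\in}\mathbb{W}_{\alpha}]_{i\in I}=T[T_{i}\underline{\in}\mathbb{W}_{\omega_{1}}]_{i\in I}\in\Phi^{F}$, i.e. $T\in\mathbb{W}(\mathbb{W}_{\alpha})\subseteq\mathbb{W}_{\alpha+1}\subseteq\mathbb{W}_{\omega_{1}}$, as required.

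I expect the main (and essentially only) obstacle to be correctly isolating the two facts that pin the bound at exactly $\omega_{1}$: the countability of the hole-index set $I$ for each branching play, and the regularity of $\omega_{1}$, so that a countable supremum of ordinals below $\omega_{1}$ remains below $\omega_{1}$. Everything else is bookkeeping with the increasing chain and the defining formula for $\mathbb{W}$.
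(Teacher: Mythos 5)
Your proposal is correct and takes essentially the same route as the paper's own proof: both reduce the lemma to showing $\mathbb{W}(\mathbb{W}_{\omega_{1}})\subseteq\mathbb{W}_{\omega_{1}}$ (the paper phrases this as $\mathbb{W}_{\omega_{1}+1}\subseteq\mathbb{W}_{\omega_{1}}$), and both rest on the identical key step — the hole-index set $I$ of a branching play is countable, so the supremum of the countably many ordinals $\alpha_{i}<\omega_{1}$ witnessing $T_{i}\in\mathbb{W}_{\alpha_{i}}$ is still countable (regularity of $\omega_{1}$), giving a single stage $\alpha<\omega_{1}$ at which $T[T_{i}\,\underline{\in}\,\mathbb{W}_{\alpha}]_{i\in I}$ and $T[T_{i}\,\underline{\in}\,\mathbb{W}_{\omega_{1}}]_{i\in I}$ coincide. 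The only differences are presentational: you make explicit the monotonicity bookkeeping and the reason $I$ is countable, which the paper leaves implicit.
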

\begin{proof}
Assume $T[T_{i}]_{i\in I}\!\in\! \mathbb{W}_{\omega_{1}+1}$ or, equivalently, $T[T_{i}\underline{\in}   \mathbb{W}_{\omega_{1}}]\!\in\!\Phi^{F}$. 
Let $J=\{ i \ |  \ T_{i}\!\in\! \mathbb{W}_{\omega_{1}} \}$.
%Let $J\!\subseteq \!I$ such that $T_{i}\!\in\! \mathbb{W}_{\omega_{1}}$ if and only if $i\!\in\! J$. 
For each $j\!\in\! J$, let $\beta_{j}$ be the least ordinal such that $T_{j}\!\in\! \mathbb{W}_{\beta_{j}}$. Note that $\beta_{j}$ is a countable ordinal, since $\mathbb{W}_{\omega_{1}}= \bigcup_{\alpha<\omega_{1}}\mathbb{W}_{\alpha}$, and $\omega_{1}$ is the least uncountable ordinal. Let $\beta=\bigsqcup_{j\in J}\beta_{j}$ be the supremum ordinal of all $\beta_{j}$. Note that $\beta$ is a  countable ordinal  since $I$ is countable and $J\!\subseteq\!I$. It then follows, by Definition \ref{def_mathbb_W}, that $T[T_{i}\underline{\in}   \mathbb{W}_{\beta}]$ and $T[T_{i}\underline{\in}   \mathbb{W}_{\omega_{1}}]$ are identical branching plays. Thus $T[T_{i}]_{i\in I}\!\in\! \mathbb{W}_{\beta+1}\subseteq \mathbb{W}_{\omega_{1}}$. Hence $\mathbb{W}_{\omega_{1}+1}\subseteq \mathbb{W}_{\omega_{1}}$ as desired.
\end{proof}

The result of Lemma \ref{iteration_lemma} can be shown to be strict. One can indeed construct a pL$\mu^{\odot}$ game $\game^{F}_{\rho}$ such that $\mathbb{W}_{\alpha}\!\subsetneq\!\Phi^{\mu X.F}$ for every countable ordinal $\alpha$. We refer to Lemma 6.3.8 in \cite{MioThesis} for a proof of this fact.

We now show that each set in the chain $\{\mathbb{W}_{\alpha}\}_{\alpha<\omega_{1}}$, having $\Phi^{\mu X.F}$ as limit, is provably universally measurable in $\textnormal{ZFC}+\textnormal{MA}_{\aleph_{1}}$ set theory.

\begin{lem}[\martin]\label{W_is_um}
If $X\!\subseteq\bp^{\mu X.F}$ is universally measurable then so is $\mathbb{W}(X)$, i.e., the function $\mathbb{W}$ maps universally measurable sets to universally measurable sets.
\end{lem}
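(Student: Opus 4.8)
The plan is to present $\mathbb{W}(X)$ as the preimage of a universally measurable set under a universally measurable map, and hence to conclude its universal measurability from the preservation property of Theorem \ref{properties_univ_meas}(2). All three ingredients are already in place: Lemma \ref{m_X_univ_meas} controls the map $m_{X}$, Theorem \ref{delta_complexity} bounds the complexity of the target set, and Theorem \ref{consequences_martin}(1) supplies the crucial measurability consequence of Martin's Axiom. So the argument is essentially a chaining of these stated facts, with no new construction required.

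Concretely, I would proceed in two short steps. First, since by hypothesis $X$ is universally measurable, Lemma \ref{m_X_univ_meas} applies directly and yields that the function $m_{X}\!:\!\bp^{\mu X.F}\!\rightarrow\!\bp^{F}$ is universally measurable. Second, I would address the target set $\Phi^{F}$: as $\Phi^{F}$ is exactly the winning set of the two-player stochastic meta-parity game $\game^{F}_{\rho}$, Theorem \ref{delta_complexity} gives $\Phi^{F}\!\in\!\mbox{\boldmath$\Delta$}^{1}_{2}(\bp^{F})$, and in $\textnormal{ZFC}+\textnormal{MA}_{\aleph_{1}}$ the inclusion $\mbox{\boldmath$\Delta$}^{1}_{2}(\bp^{F})\!\subseteq\!\textnormal{UM}(\bp^{F})$ of Theorem \ref{consequences_martin}(1) shows that $\Phi^{F}$ is universally measurable.

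To finish, I would recall from Definition \ref{def_mathbb_W} that $\mathbb{W}(X)\!=\!m_{X}^{-1}(\Phi^{F})$ and invoke Theorem \ref{properties_univ_meas}(2), with $\bp^{F}$ carrying its Borel $\sigma$-algebra, to conclude that the preimage $m_{X}^{-1}(\Phi^{F})$ is a universally measurable subset of $\bp^{\mu X.F}$. The single genuine obstacle is the universal measurability of $\Phi^{F}$ itself: the theorem following Theorem \ref{delta_complexity} shows that $\Phi^{F}$ need not even be analytic or co-analytic, so its measurability is not available in $\textnormal{ZFC}$ alone, and it is precisely here that $\textnormal{MA}_{\aleph_{1}}$ is indispensable. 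Everything else in the proof is routine bookkeeping with the definitions of $m_{X}$ and $\mathbb{W}$.
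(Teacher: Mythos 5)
Your proposal is correct and follows exactly the paper's own argument: express $\mathbb{W}(X)$ as $m_{X}^{-1}(\Phi^{F})$, obtain universal measurability of $\Phi^{F}$ from Theorem \ref{delta_complexity} together with Theorem \ref{consequences_martin}(1) under $\textnormal{MA}_{\aleph_{1}}$, and conclude via Lemma \ref{m_X_univ_meas} and the preimage-preservation property of Theorem \ref{properties_univ_meas}(2). Your additional remark correctly identifies the measurability of $\Phi^{F}$ as the sole point where $\textnormal{MA}_{\aleph_{1}}$ is needed, which matches the paper's discussion.
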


\begin{proof}
By Definition \ref{def_mathbb_W}, we need to show that $\mathbb{W}(X)\bydef m_{X}^{-1}(\Phi^{F})$ is universally measurable. By application of Theorem \ref{delta_complexity} and Theorem \ref{consequences_martin}, the winning set $\Phi^{F}$ is provably universally measurable in $\textnormal{ZFC}+\textnormal{MA}_{\aleph_{1}}$. The desired result then follows by application of Lemma \ref{m_X_univ_meas} and Theorem \ref{properties_univ_meas}(b).
\end{proof}

The techniques adopted in this section can be trivially adapted to get the expected dual results which we simply summarize in the following proposition.
\begin{prop}\label{dual_results}
The following assertion holds for every pL$\mu^{\odot}$ formula $\nu X.F$ and interpretation $\rho$:
\begin{enumerate}[\em(1)]
\item the winning set $\Phi^{\nu X.F}$ is the greatest fixed point of the operator $\mathbb{W}$,
\item $\gfp(\mathbb{W})=\bigcap_{\alpha<\omega_{1}}\mathbb{W}^{\alpha}$, where $\mathbb{W}^{\alpha}=\bigcap_{\beta<\alpha}\mathbb{W}(\mathbb{W}^{\beta})$,
\item (\martin) for every countable ordinal $\alpha$, the set $\mathbb{W}^{\alpha}$ is universally measurable.
\end{enumerate}
\end{prop}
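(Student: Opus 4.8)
The plan is to obtain all three assertions by dualising, essentially verbatim, the development just carried out for $\mu X.F$. Observe first that the operator $\mathbb{W}$ of Definition \ref{def_mathbb_W} and the function $m_{X}$ of Definition \ref{function_m_X} depend only on the (shared) tree structure of the arenas and on the winning set $\Phi^{F}$ of the inner game $\game^{F}_{\rho}$; they are therefore literally the same in the $\nu X.F$ setting. The only difference between $\game^{\mu X.F}_{\rho}$ and $\game^{\nu X.F}_{\rho}$ is that the states of the form $\langle p,X\rangle$ are now assigned the maximal \emph{even} priority rather than the maximal \emph{odd} one. Consequently the monotonicity of $\mathbb{W}$ and Lemma \ref{W_is_um} apply unchanged, and the task reduces to re-running the fixed-point arguments with the two players (equivalently, the two parities) interchanged.

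For assertion (1) I would prove the two inclusions separately. The inclusion $\Phi^{\nu X.F}\subseteq\gfp(\mathbb{W})$ follows from the dual of Lemma \ref{winning_preserve_lemma}, namely: if $\Phi^{\nu X.F}\subseteq X$ then $\Phi^{\nu X.F}\subseteq\mathbb{W}(X)$. This is proved by the same strategy-stealing argument: given $T=T[T_{i}]_{i\in I}\in\Phi^{\nu X.F}$ with a Player $1$ winning strategy $\sigma$ in $G_{T}$, one steals $\sigma$ in the game associated with $T[T_{i}\underline{\in}X]_{i\in I}$; whenever a hole $\vec{s}_{i}$ is reached, the winning play of $\sigma$ witnesses $T_{i}\in\Phi^{\nu X.F}\subseteq X$, so the corresponding leaf is $\top$ and Player $1$ wins. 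Transfinite induction from the top element $\bp^{\nu X.F}$, using the Knaster--Tarski theorem and monotonicity, then yields $\Phi^{\nu X.F}\subseteq\mathbb{W}^{\alpha}$ for every $\alpha$, hence $\Phi^{\nu X.F}\subseteq\gfp(\mathbb{W})$. The reverse inclusion $\gfp(\mathbb{W})\subseteq\Phi^{\nu X.F}$ dualises Theorem \ref{inductive_ch}: for $T\in\gfp(\mathbb{W})$ one builds a Player $1$ winning strategy in $G_{T}$ by stealing a winning strategy for the play $T[T_{i}\underline{\in}\gfp(\mathbb{W})]_{i\in I}\in\Phi^{F}$ and re-starting it at each hole.

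I expect the single genuinely non-routine point to be the parity bookkeeping in this last argument, which is exactly where the $\mu$/$\nu$ asymmetry is forced. A play consistent with the stolen strategy either hits finitely many re-starting points, in which case it has a winning tail and is won by Player $1$ as before, or it hits infinitely many re-starting points. In the $\mu$ case the latter was a \emph{loss} for Player $1$, because the states $\langle p_{i},X\rangle$ carry maximal odd priority; here they carry maximal \emph{even} priority, so by Definition \ref{WPR} such a play is a \emph{win} for Player $1$. This sign flip is precisely what turns the least fixed point into the greatest, and verifying it carefully (together with the fact that the re-starting construction is well defined on the common branching pre-play $T[x_{i}]_{i\in I}$) is the crux of the proof.

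Assertions (2) and (3) are then immediate dualisations of Lemma \ref{iteration_lemma} and Lemma \ref{W_is_um}. For (2) one shows $\mathbb{W}^{\omega_{1}}\subseteq\mathbb{W}^{\omega_{1}+1}$: given $T[T_{i}]_{i\in I}\in\mathbb{W}^{\omega_{1}}$, each index $i$ with $T_{i}\notin\mathbb{W}^{\omega_{1}}$ leaves the decreasing chain at some least countable stage $\gamma_{i}<\omega_{1}$, and since $I$ is countable and $\omega_{1}$ is regular, $\gamma=\bigsqcup_{i}\gamma_{i}<\omega_{1}$; then $T[T_{i}\underline{\in}\mathbb{W}^{\gamma}]_{i\in I}$ and $T[T_{i}\underline{\in}\mathbb{W}^{\omega_{1}}]_{i\in I}$ coincide, so membership is already decided at the countable stage $\gamma+1$. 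For (3), $\mathbb{W}^{0}=\bp^{\nu X.F}$ is measurable, and since for countable $\alpha$ the set $\mathbb{W}^{\alpha}=\bigcap_{\beta<\alpha}\mathbb{W}(\mathbb{W}^{\beta})$ is a \emph{countable} intersection, transfinite induction using Lemma \ref{W_is_um} together with the fact that $\textnormal{UM}(\bp^{\nu X.F})$ is a $\sigma$-algebra (closed under countable intersections, see Theorem \ref{properties_univ_meas}) gives universal measurability of every $\mathbb{W}^{\alpha}$ with $\alpha<\omega_{1}$ in $\textnormal{ZFC}+\textnormal{MA}_{\aleph_{1}}$.
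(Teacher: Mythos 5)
Your proposal is correct and is essentially the paper's own proof: the paper dispatches Proposition \ref{dual_results} with the single remark that the techniques of Section \ref{technical_section} ``can be trivially adapted,'' and what you have written is precisely that adaptation, executed in detail --- the dual strategy-stealing arguments for Lemma \ref{winning_preserve_lemma} and Theorem \ref{inductive_ch}, the parity flip at re-starting points (maximal even priority on $\langle p,X\rangle$, so infinitely many restarts now favour Player $1$, which is indeed where least becomes greatest), the regularity-of-$\omega_{1}$ stabilization argument dualising Lemma \ref{iteration_lemma}, and closure of $\textnormal{UM}$ under countable intersections together with Lemma \ref{W_is_um} for measurability. The only detail you gloss over is immaterial: besides the priority change, the states $\langle p,X\rangle$ move from $S_{1}$ to $S_{2}$ in $\game^{\nu X.F}_{\rho}$, but since they have a unique successor the ownership plays no role.
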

%%%%%%%%%%%%%%%%%%%%%%%%%%%%%%%% PROOF SECTION %%%%%%%%%%%%%%%%%%%%%%%%%%%%%%%

\section{Robust Markov Branching Plays}\label{robust_markov_plays_section}
In this section we identify a property of Markov branching plays in pL$\mu^{\odot}$ games which will be used in the proof of Theorem \ref{main_theorem} in the next section. 

Given a pL$\muprod$ game $G^{F}_{\rho}$, with $F$ a pL$\mu^{\odot}$ formula (having a free variable $X$) and $\rho$ a $[0,1]$-interpretation of the variables, respectively, we extend the graphical notation, introduced in  Section \ref{technical_section} for branching plays, to Markov branching plays in the expected way. Thus, given a Markov branching play $M$ in $G^{F}_{\rho}$ rooted at $\langle p, F\rangle$ (for some state $p$), we depict $M$  as in Figure \ref{figB1-A} and denote it by $M[\lambda_{i}]_{i\in I}$, with $\lambda_{i}\!\in\![0,1]$ being the probability labeling the edge connecting $\langle p_{i},X\rangle$ with $\top$.
\begin{figure}
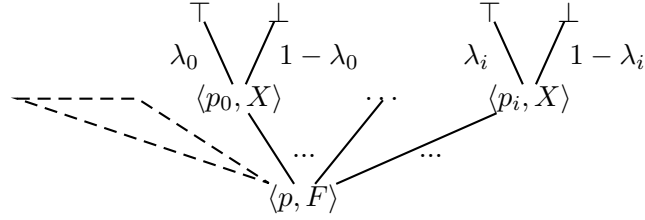

	\pstree[ treemode=U,levelsep=8ex ]{\Tr{$\langle p, F\rangle$}}{
        \pstree[linestyle=none,arrows=-,levelsep=2ex]{\Tfan[fansize=10ex, linestyle=dashed]}{\TR{ $\ $}}
  
\pstree[levelsep=7ex]{\Tr{$\langle p_{0},X\rangle$} \trput{$...$}}{
	\Tr{$\top$}\tlput{$\lambda_{0}$}
	\Tr{$\bot$}\trput{$1-\lambda_{0}$}
}
\pstree[levelsep=7ex, linestyle=none]{\Tr{$\dots $}}{
	\Tr{$\ $}
	\Tr{$\ $}
}
\pstree[levelsep=7ex]{\Tr{$\langle p_{i},X\rangle$} \trput{$...$}}{
	\Tr{$\top$}\tlput{$\lambda_{i}$}
	\Tr{$\bot$}\trput{$1-\lambda_{i}$}
}
}	
\caption{Markov branching play $M{[}\lambda_{i}{]_{i \in I}}$}
\label{figB1-A}
\end{figure}
Note that, by definition of the game $\game^{F}_{\rho}$, the value $\lambda_{i}$ coincides with $\rho(X)(p_{i})$, for $i\!\in\!I$. However, it is convenient to consider, as a technical tool, Markov branching plays of the form $M[\lambda_{i}]_{i\in I}$ having $\lambda_{i}$,  for $i\!\in\!I$, of an arbitrary value, even though these plays never really correspond to achievable plays in $\game^{F}_{\rho}$. The associated probability measure $\mathbb{P}_{M[\lambda_{i}]_{i\in I}}$ over branching plays in $\game^{F}_{\rho}$ is defined as expected.

\begin{defi}[Robust Markov branching plays]
\label{robust}
Fix a pL$\muprod$ formula $F$ (with a free variable $X$) and a $[0,1]$-interpretation $\rho$. We say that a Markov branching play $M[\lambda_{i}]_{i\in I}$ in $\game^{F}_{\rho}$, for $I\subseteq \mathbb{N}$, is \emph{robust in the variable $X$}, or just \emph{robust} when $X$ is clear from the context, if for every $\varepsilon>0$ the following properties hold,
\begin{enumerate}[(1)]
\item $\expected(M[\gamma_{i}]) \geq \expected(M[\lambda_{i}]) - \sum_{i \in I} \frac{\varepsilon}{2^{i+1}}$, and
\item $\expected(M[\delta_{i}]) \leq \expected(M[\lambda_{i}]) + \sum_{i \in I} \frac{\varepsilon}{2^{i+1}}$, 
\end{enumerate}
for every sequences $\{\gamma_{i}\}_{i\in I}$ and $\{\delta_{i}\}_{i \in I}$ of reals in $[0,1]$ such that, for every $i\!\in\! I$, the inequalities $\gamma_{i}\geq \lambda_{i}-\frac{\varepsilon}{\#(i)}$ and $\delta_{i}\leq \lambda_{i}+\frac{\varepsilon}{\#(i)}$ hold (see Definition \ref{approx_function} of $\#\!:\!\mathbb{N}\!\rightarrow\!\mathbb{N}$).
\end{defi}

The notion of robustness captures a useful substitutivity property. If, in a Markov branching play $M[\lambda_{i}]_{i\in I}$, the probability $\lambda_{i}$ of reaching from the state $\langle p_{i},X\rangle$ the winning (for Player $1$) state $\top$ is replaced with a smaller but close enough value $\gamma_{i}$, then the resulting Markov branching play $M[\gamma_{i}]_{i\in I}$ has an expected value close to that of $M[\lambda_{i}]_{i\in I}$ too. 

Note how, in Definition \ref{robust}, the constraint on the distance between $\gamma_{i}$ ($\delta_{i})$ and $\lambda_{i}$ crucially depends on the index $i\!\in\! I$. Definition \ref{robust} has been identified to meet the nature of Markov branching plays in pL$\mu^{\odot}$ games and, as we shall see in the next section, every Markov branching play in a pL$\mu^{\odot}$ game is indeed robust in every free variable. However, it is useful to observe that the dependence on $i\!\in\! I$ for the constraint on the distance between $\gamma_{i}$ ($\delta_{i}$) and $\lambda_{i}$ is necessary. Indeed, the simpler property
\begin{equation}\label{naive_candidate}
(2^{\prime}) \ \ \ \  \ \expected(M[\delta_{i}]) \leq \expected(M[\lambda_{i}]) + c_{1} \ \ \ \textnormal{ where } \forall i . \big( \delta_{i} < \lambda_{i} + c_{2}\big)
\end{equation}
for some constant values $c_{1},c_{2}\!\in\! (0,1)$, is not satisfied (in general) by pL$\mu^{\odot}$ Markov branching plays. In other words, Markov branching plays in pL$\mu^{\odot}$ are not stable (in their expected value) if the values $\lambda_{i}$ are altered uniformly, i.e., by a fixed $c_{2}\!>\!0$. Consider, for example, the Markov branching play, having countably many holes, depicted as in Figure \ref{figura_illustrative}. Assume that $\lambda_{i}\!=\!0$, for every $i\!\in\!\mathbb{N}$ and fix some $c_{2}\!>\!0$.  Then it is simple to verify that $\expected(M[\lambda_{i}])\!=\!0$ and $\expected(M[\delta_{i}])\!=\!1$ (where $\delta_{i}\!=\!c_{2}$, for all $i\!\in\! I$), contradicting ($2^{\prime}$) above. This phenomenon reflects the discontinuity of the denotational interpretation of pL$\mu^{\odot}$ formulas on the free variables (see Proposition \ref{not_continuous_proposition}). Indeed note that the play of Figure \ref{figura_illustrative} is a Markov branching play of a pL$\mu^{\odot}$ game associated with a formula of the form $\mu X.(Y\odot X)\bydef \mathbb{P}_{>0}Y$, similar to the one discussed in the proof of Proposition \ref{not_continuous_proposition}.

\begin{figure}
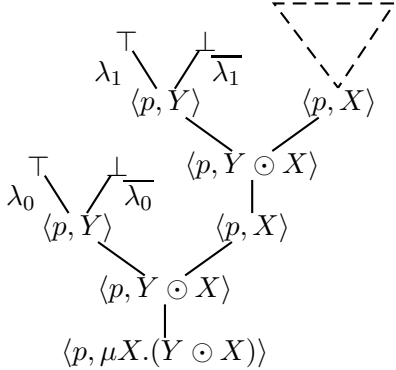

\pstree[ treemode=U,levelsep=5ex ]{\Tr{$\langle p, \mu X.(Y \odot X)\rangle$}}{
\pstree[ treemode=U,levelsep=5ex ]{\Tr{$\langle p, Y \odot X\rangle$}}{
 \pstree{\Tr{$\langle p, Y\rangle$}}{
	\Tr{$\top$}\tlput{$\lambda_{0}$}
	\Tr{$\bot$}\trput{$\overline{\lambda_{0}}$}
 }
 \pstree{\Tr{$\langle p, X\rangle$}}{
 	\pstree[ treemode=U,levelsep=5ex ]{\Tr{$\langle p, Y \odot X\rangle$}}{
	 \pstree{\Tr{$\langle p, Y\rangle$}}{
	\Tr{$\top$}\tlput{$\lambda_{1}$}
	\Tr{$\bot$}\trput{$\overline{\lambda_{1}}$}
 }\pstree[ treemode=U,levelsep=8ex,linestyle=dashed ]{\Tr{$\langle p, X \rangle$}}{
        \pstree[linestyle=none,arrows=-,levelsep=2ex]{\Tfan[fansize=10ex]}{\TR{ $\ $}}{}
	}
	}
}
}
}
\caption{Illustrative example. The symbol $\overline{\lambda_{i}}$ denotes the value $1-\lambda_{i}$.}
\label{figura_illustrative}
\end{figure}

We now establish a useful property relating expected values of Markov branching plays in $\game^{G}_{\rho}$ and in the game $\game^{\mu X.G}_{\rho}$. Again, following established notation, we denote with $M[M_{i}]_{i\in I}$ the Markov branching play depicted as in Figure \ref{figB1}. The \emph{Markov branching pre-play} $M[x_{i}]_{i\in I}$ captures the common structure of the Markov branching plays $M[M_{i}]_{i\in I}$ and $M[\lambda_{i}]_{i\in I}$ in $\game^{\mu X.F}_{\rho}$ and $\game^{F}_{\rho}$, respectively.

\begin{lem}\label{comparing_lemma}
Let $\mu X.F$ be a pL$\mu^{\odot}$ formula and $\rho$ a $[0,1]$-interpretation of the free variables. Let $M[M_{i}]_{i\in I}$ be a Markov branching play in $\game^{\mu X.F}_{\rho}$. For an ordinal $\beta$, let $\gamma^{\beta}_{i}\!=\! \mathbb{P}_{M_{i}}(\bigcup_{\alpha<\beta}\mathbb{W}_{\alpha})$ be the probability of the event  $\bigcup_{\alpha<\beta}\mathbb{W}_{\alpha}$ (see Lemma \ref{iteration_lemma}) associated with the sub-Markov branching play $M_{i}$, for $i\!\in\!I$. Then the  equality 
\begin{center}
$\mathbb{P}_{M[M_{i}]_{i\in I}}(\mathbb{W}_{\beta}) =  \mathbb{P}_{M[\gamma^{\beta}_{i}]_{i\in I}}(\Phi^{F})$ 
\end{center}
holds, where $\Phi^{F}$ is the winning set of the game $\game^{F}_{\rho}$.
\end{lem}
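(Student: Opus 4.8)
The plan is to reduce the statement to a single ``one--level'' identity, settled by the same product--measure decomposition used in Theorem \ref{fixed_point_proposition_1}, and then to climb the ordinal hierarchy, invoking robustness precisely at the limit stages.

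First I would isolate the following one--level fact: for every universally measurable $X\subseteq\bp^{\mu X.F}$,
\begin{equation*}
\mathbb{P}_{M[M_i]_{i\in I}}\big(\mathbb{W}(X)\big)=\mathbb{P}_{M[\,\mathbb{P}_{M_i}(X)\,]_{i\in I}}(\Phi^{F}).
\end{equation*}
By Definition \ref{def_mathbb_W}, $\mathbb{W}(X)=m_X^{-1}(\Phi^F)$, so a branching play drawn according to $\mathbb{P}_{M[M_i]}$ lies in $\mathbb{W}(X)$ iff, after marking the $i$-th hole by $\top$ exactly when the independently drawn subplay $T_i$ lies in $X$, the resulting $\game^F_\rho$--branching play is won by Player~$1$. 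As in Theorem \ref{fixed_point_proposition_1}, the measure $\mathbb{P}_{M[M_i]}$ factors (through the evident homeomorphism between branching plays rooted at the holes and the product of the subplay spaces) as the product of the pre--play measure and the $\mathbb{P}_{M_i}$; hence the marking indicators $[\,T_i\in X\,]$ are independent Bernoulli variables with parameters $\mathbb{P}_{M_i}(X)$, independent of the pre--play. This is exactly the distribution induced by $M[\,\mathbb{P}_{M_i}(X)\,]_{i\in I}$ in $\game^F_\rho$, which yields the displayed equality. Measurability of $X$ is what makes the indicators honest random variables, and the instances $X=\mathbb{W}_\alpha$ we need are universally measurable by Lemma \ref{W_is_um} and Lemma \ref{iteration_lemma}.

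Next I would split on the type of $\beta$. For a successor $\beta=\gamma+1$ one has $\mathbb{W}_{\gamma+1}=\mathbb{W}(\mathbb{W}_\gamma)$ and $\gamma^{\gamma+1}_i=\mathbb{P}_{M_i}(\bigcup_{\alpha\le\gamma}\mathbb{W}_\alpha)=\mathbb{P}_{M_i}(\mathbb{W}_\gamma)$ (the chain being increasing by monotonicity of $\mathbb{W}$), so the lemma is an immediate instance of the one--level fact with $X=\mathbb{W}_\gamma$, and no robustness is needed. For a limit $\beta$, writing $\mathbb{W}_\beta=\bigcup_{\alpha<\beta}\mathbb{W}_{\alpha+1}$ as an increasing union and applying continuity of measure from below ($\omega_1$--continuity, Theorem \ref{consequences_martin}(3), when $\mathrm{cf}(\beta)=\omega_1$) gives
\begin{equation*}
\mathbb{P}_{M[M_i]}(\mathbb{W}_\beta)=\bigsqcup_{\alpha<\beta}\mathbb{P}_{M[M_i]}(\mathbb{W}_{\alpha+1})=\bigsqcup_{\alpha<\beta}\mathbb{P}_{M[\lambda^\alpha_i]}(\Phi^F),\qquad \lambda^\alpha_i:=\mathbb{P}_{M_i}(\mathbb{W}_\alpha).
\end{equation*}
Since $\gamma^\beta_i=\mathbb{P}_{M_i}(\bigcup_{\alpha<\beta}\mathbb{W}_\alpha)=\bigsqcup_{\alpha<\beta}\lambda^\alpha_i$ (continuity of measure again), the entire statement collapses to the continuity identity $\bigsqcup_{\alpha<\beta}\expected(M[\lambda^\alpha_i])=\expected(M[\gamma^\beta_i])$ for the increasing net $\lambda^\alpha_i\uparrow\gamma^\beta_i$.

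The main obstacle is exactly this last continuity. The inequality $\leq$ is free, from monotonicity of $\expected(M[\vec\lambda])$ in each $\lambda_i$ (additional $\top$'s can only help Player~$1$ in the inner parity game, as in the monotonicity of $\mathbb{W}$). The reverse inequality is where the denotational discontinuity of the (co)product connectives in the free variables (Proposition \ref{not_continuous_proposition}, and the example of Figure \ref{figura_illustrative}) forbids any uniform estimate, and is the whole reason robustness was introduced: given $\varepsilon>0$, for each of the countably many holes $i$ I would choose $\alpha_i<\beta$ with $\lambda^{\alpha_i}_i\ge\gamma^\beta_i-\frac{\varepsilon}{\#(i)}$ and invoke clause~$(1)$ of Definition \ref{robust} with this diagonal sequence to get $\expected(M[\lambda^{\alpha_i}_i])\ge\expected(M[\gamma^\beta_i])-\varepsilon$. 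The delicate point, which the index--dependent bound $\varepsilon/\#(i)$ is designed to absorb, is to dominate the diagonal value $\expected(M[\lambda^{\alpha_i}_i])$ by the supremum $\bigsqcup_{\alpha<\beta}\expected(M[\lambda^\alpha_i])$: when $\mathrm{cf}(\beta)>\omega$ one takes $\alpha^*=\bigsqcup_i\alpha_i<\beta$ and monotonicity finishes, whereas when $\mathrm{cf}(\beta)=\omega$ one must re--read the diagonal value through the one--level fact and the well--foundedness forced by the maximal odd priority on the states $\langle p_i,X\rangle$ (Definition \ref{WPR}), which guarantees that every Player~$1$ winning strategy in the inner game relies on a well--founded, hence countably rank--bounded, family of holes. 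I expect verifying this last domination at countable cofinality to be the technically hardest step.
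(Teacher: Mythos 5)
Your one-level identity is precisely the transport equation (Equation~\ref{equation_lemma_abcde}) to which the paper reduces the entire lemma, and your product-measure/Bernoulli-marking derivation of it is the same argument in substance as the paper's (regularity of Borel measures plus induction on the finite trees generating the basic clopen sets, as in Theorem~\ref{fixed_point_proposition_1}). The divergence is everything after that. The paper performs no successor/limit case split and no transfinite climb inside this lemma: it records that $\mathbb{W}_{\beta}=\bigcup_{\alpha<\beta}\mathbb{W}(\mathbb{W}_{\alpha})=m^{-1}_{\bigcup_{\alpha<\beta}\mathbb{W}_{\alpha}}(\Phi^{F})$, by the definition of the approximants (Lemma~\ref{iteration_lemma}) and Definition~\ref{def_mathbb_W}, and then applies the transport equation exactly once, with $X=\Phi^{F}$ and the single universally measurable set $\bigcup_{\alpha<\beta}\mathbb{W}_{\alpha}$ in the subscript of $m^{-1}$ --- whose $\mathbb{P}_{M_{i}}$-measure is by definition $\gamma^{\beta}_{i}$. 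In particular, robustness is nowhere used in the paper's proof of this lemma; it enters only later, in the main induction of Section~\ref{proof_section} (the step yielding Equation~\ref{quasi_finito_2}, where this lemma is combined with $\frac{\varepsilon}{2}$-optimal strategies), so your framing of this lemma as ``the whole reason robustness was introduced'' misplaces that machinery.

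The concrete gap is your limit case, and it is not a peripheral one: every limit ordinal below $\omega_{1}$ has cofinality $\omega$, so the case you defer as the ``technically hardest step'' is the only nontrivial limit case in the range where the lemma is applied. Your sketch for it does not close. Robustness gives $\expected(M[\lambda^{\alpha_{i}}_{i}]_{i\in I})\geq\expected(M[\gamma^{\beta}_{i}]_{i\in I})-\varepsilon$ for a diagonal choice of $\alpha_{i}<\beta$, but this diagonal vector is dominated coordinatewise by no single $M[\lambda^{\alpha}_{i}]_{i\in I}$ with $\alpha<\beta$ when the $\alpha_{i}$ are cofinal in $\beta$; and the appeal to the maximal odd priority cannot restore a bound, because the relevant holes form a countable family (moreover, at the $\cdot$-nodes it is the inner Player~$2$, not Player~$1$, who decides which holes are tested, so one cannot restrict to holes ``relied on'' by a single winning strategy), and a countable set of ordinals cofinal in $\beta$ has supremum exactly $\beta$: well-foundedness yields no $\alpha^{*}<\beta$. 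The repair is to abandon the decomposition $\mathbb{P}_{M[M_{i}]}(\mathbb{W}_{\beta})=\bigsqcup_{\alpha<\beta}\mathbb{P}_{M[M_{i}]}(\mathbb{W}_{\alpha+1})$ altogether and argue as the paper does: identify the event $\mathbb{W}_{\beta}$ itself with $m^{-1}_{Y}(\Phi^{F})$ for $Y=\bigcup_{\alpha<\beta}\mathbb{W}_{\alpha}$, so that the union over $\alpha<\beta$ is absorbed inside the subscript of $m^{-1}$ at the level of events, before any measure is taken; the interchange of supremum and expected value that your route forces --- and for which, as Figure~\ref{figura_illustrative} shows, no uniform continuity is available --- then never arises.
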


\begin{proof}
Consider the function $m_{X}$, as specified in Definition \ref{function_m_X}. Recall that, by definition, the equalities 
\begin{center}
$\mathbb{W}_{\beta} =   \bigcup_{\alpha<\beta}\mathbb{W}(\mathbb{W}_{\alpha})=m^{-1}_{\bigcup_{\alpha<\beta}\mathbb{W}_{\alpha}}(\Phi^{F})$
\end{center}
 hold. The proof is completed by showing that the following property holds:
\begin{equation}\label{equation_lemma_abcde}
\mathbb{P}_{M[M_{i}]_{i\in I}}( m_{\bigcup_{\alpha<\beta}\mathbb{W}_{\alpha}}^{-1}(X) ) = \mathbb{P}_{M[\gamma_{i}]_{i\in I}}(X),
\end{equation}
for every Borel measurable set $X\!\subseteq\! \bp^{F}$, where $\bp^{F}$ denotes the set of branching plays in $\game^{F}_{\rho}$. Indeed the desired result follows from Equation \ref{equation_lemma_abcde} by taking $X\!=\!\Phi^{F}$. Since probability measures in Polish spaces are regular, we can restrict $X$ to range over basic open sets. Equation \ref{equation_lemma_abcde} can then be proved, with routine techniques, by induction of the complexity of basic open sets $O_{T}$ (see Definition \ref{topology_bp}), i.e., on the size of the finite tree $T$.
\end{proof}

Note that, as observed earlier, the Markov branching play $M[\gamma_{i}]_{i\in I}$ considered in Lemma \ref{comparing_lemma}, might not be a real play in $\game^{F}_{\rho}$, i.e., one induced by a strategy profile. This is the case when $\gamma_{i}\!\neq\! \rho(X)(p_{i})$ for some $i\!\in\! I $.

\begin{figure}[t]
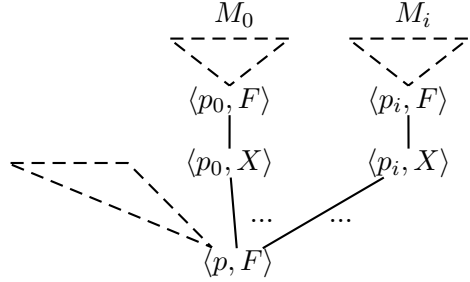

\pstree[ treemode=U,levelsep=8ex ]{\Tr{$\langle p, F\rangle$}}{
        \pstree[linestyle=none,arrows=-,levelsep=2ex]{\Tfan[fansize=10ex, linestyle=dashed]}{\TR{ $\ $}}
  
\pstree[levelsep=5ex]{\Tr{$\langle p_{0},X\rangle$} \trput{$...$}}{
	\pstree[levelsep=5ex]{\Tr{$\langle p_{0},F\rangle$}}{
		\pstree[linestyle=none,arrows=-,levelsep=2ex]{\Tfan[fansize=10ex, linestyle=dashed]}{\TR{ $M_{0} $}}
	
	}
}
\pstree[levelsep=5ex]{\Tr{$\langle p_{i},X\rangle$} \trput{$...$}}{
	\pstree[levelsep=5ex]{\Tr{$\langle p_{i},F\rangle$}}{
		\pstree[linestyle=none,arrows=-,levelsep=2ex]{\Tfan[fansize=10ex, linestyle=dashed]}{\TR{ $M_{i} $}}
	}
}
}
\caption{Markov Branching play $M{[}M_{i}{]_{i \in I}}$}
\label{figB1}
\end{figure}

%%%%%%%%%%%%%%%%%%%%%%%%%%%%%%
%%%%%%%%%%%%%%%%%%%%%%%%%%%%%%%%
%%%%%%%%%%%%%%%%%%%%%%%%%%%%%%%%%%%
%%%%%%%%%%%%%%%%%%%%%%%%%%%%%%%%%%%%%

\section{Proof of Equivalence of Game and Denotational Semantics}\label{proof_section}

This section is devoted to the proof of Theorem \ref{main_theorem}. The proof technique we adopt is based on the \emph{unfolding method} of \cite{FGK2010}.  The unfolding method can be roughly described as a technique for proving \emph{properties} of (some sort of) two-player parity games by induction on the number of priorities used in the game. Usually, the first step is to prove that the property under consideration holds for all  parity games with just one priority. Then the the general result for games with $n+1$ priorities follows by some argument making use of the inductive hypothesis. In our setting we are interested in two-player meta-parity games of the form $\game^{F}_{\rho}$, and the property we want to prove is that the lower and upper values of these games coincide with the denotational value of $F$ under the interpretation $\rho$.  We prove this by induction of the structure of $F$ rather than on the number of priorities used in the game $\game^{F}_{\rho}$. This allows a more transparent and arguably simpler proof. The structure of our proof closely resembles the one of \cite{MIO10}, where the equivalence of the game and denotational semantics for the logic pL$\mu$ is proved using the unfolding method. In the present context,  the proof is a significantly more technical undertaking due to the complexity of two-player meta-parity games: the $\mbox{\boldmath$\Delta$}^{1}_{2}$-complexity of the winning sets (Theorem \ref{delta_complexity}), their transfinite inductive characterization up to the first uncountable ordinal $\omega_{1}$ (Theorem \ref{iteration_lemma}) and the discontinuity in the free variables of the denotational semantics (see Theorem \ref{not_continuous_proposition} and Section \ref{robust_markov_plays_section}). In what follows, we shall focus primarily on the novel aspects of the proof, referring to \cite{MIO10} for a detailed analysis of the simpler cases that easily generalize to the present setting.\\

We prove, by induction on the structure of the formulas that, for every PLTS $\mathcal{L}\!=\!\langle P, \{ \freccia{a} \}_{a\in L}\rangle$, pL$\mu^{\odot}$ formula $F$ and $[0,1]$-interpretation $\rho$ of the variables, the following assertions hold:
\begin{equation}\label{main_equation_to_prove_1}
\sem{F}_{\rho}(p)=  \val_{\downarrow}( \game^{F}_{\rho})(\langle p, F\rangle) = \val_{\uparrow}( \game^{F}_{\rho})(\langle p, F\rangle),
\end{equation}
and
\begin{equation}\label{main_equation_to_prove_2} M  \textnormal{ is robust in the variable }X \textnormal{ (see Definition \ref{robust})} 
\end{equation}
for every free variable $X$ in $F$, and Markov branching play $M$ rooted at $\langle p, F\rangle$ in $\game^{F}_{\rho}$. \\

\textbf{Base case:} $\mathbf{F\!=\!X}$, for some variable $X\!\in\!\mathit{Var}$. \\
It follows immediately by application of Theorem \ref{fixed_point_proposition_1} that point \ref{main_equation_to_prove_1} holds. For what concerns point  \ref{main_equation_to_prove_2}, a Markov branching play $M$ rooted at $\langle p, X\rangle$ in $\game^{G}_{\rho}$ is of the following form:
\begin{center}
\pstree[treemode=U,levelsep=7ex]{\Tr{$\langle p_{0},X\rangle$} \trput{$...$}}{
	\Tr{$\top$}\tlput{$\lambda_{0}$}
	\Tr{$\bot$}\trput{$1-\lambda_{0}$}
}
\end{center}
where $\lambda_{0}\!=\!\rho(X)(p)$. Thus, $M$ has only one hole (i.e., $M=M[\lambda_{0}]$) and $\expected(M)\!=\!\lambda_{0}$. The Markov branching play $M[\gamma]$, can then be depicted as above, replacing $\lambda_{0}$ with $\gamma$, for every $\gamma\!\in\![0,1]$. Therefore, $\expected(M[\gamma])\!=\!\gamma$ and point \ref{main_equation_to_prove_2} is trivially satisfied as desired.\\

\textbf{Inductive cases} $\mathbf{F\!=\! G_{1} \star G_{2}}$ or $\mathbf{F\!=\! \circ G}$, for $\star\!\in\!\{\vee,\wedge,\cdot,\odot\}$ and $\circ\!\in\!\{\diam{a},\quadrato{a}\}$.\\
For all these cases, the proof of point \ref{main_equation_to_prove_1} follows easily by application of Theorem \ref{fixed_point_proposition_1}. The result can be proved following the same lines of the proof of (the corresponding inductive cases of) Theorem 5.1 in \cite{MIO10}, thus we omit the routine details.

We now prove that point \ref{main_equation_to_prove_2} holds for $F=G_{1}\cdot G_{2}$. The other cases can be proved in a similar way. Let us consider an arbitrary Markov branching play $M$ in $\game^{F}_{\rho}$ rooted at $\langle p, G_{1}\cdot G_{2}\rangle$. Then, $M$ can be depicted as follows, where $M_{1}$ and $M_{2}$ denote the two sub-Markov branching plays rooted at $\langle p, G_{1}\rangle$ and $\langle p, G_{2}\rangle$.
\begin{center}
\pstree[treemode=U,levelsep=7ex]{\Tr{$\langle p,G_{1}\cdot G_{2}\rangle$}}{
	\pstree[levelsep=5ex]{\Tr{$\langle p,G_{1}\rangle$}}{
		\pstree[linestyle=none,arrows=-,levelsep=2ex]{\Tfan[fansize=10ex, linestyle=dashed]}{\TR{ $M_{1} $}}
	}
	\pstree[levelsep=5ex]{\Tr{$\langle p,G_{2}\rangle$}}{
		\pstree[linestyle=none,arrows=-,levelsep=2ex]{\Tfan[fansize=10ex, linestyle=dashed]}{\TR{ $M_{2} $}}
	}
}
\end{center}
Note that, by definition of the game $\game^{F}_{\rho}$, the sub-Markov branching play $M_{i}$ is also a Markov branching play rooted at $\langle p, G_{i}\rangle$ in the game $\game^{G_{i}}_{\rho}$, for $i\!\in\!\{1,2\}$. Also note that $X$ is free in both $G_{1}$ and $G_{2}$, since it is free in $G$ by assumption. Let $M[x_{i}]_{i\in I}$ be the Markov branching pre-play obtained by pruning $M$ at the states of the form $\langle p_{i}, X\rangle$, as described in Section \ref{robust_markov_plays_section}. Let $\lambda_{i}$, for $i\!\in\! I$, be the probability labeling the edge connecting the $i$-th hole in $M$ (i.e., the state $\langle p_{i},X\rangle$), to the state $\top$. By Definition \ref{robust}, we need to prove that
\begin{enumerate}[(1)]
\item $\expected(M[\gamma_{i}]_{i\in I}) \geq \expected(M[\lambda_{i}]_{i\in I}) - \sum_{i \in I} \frac{\varepsilon}{2^{i+1}}$, and
\item $\expected(M[\delta_{i}]_{i\in I}) \leq \expected(M[\lambda_{i}]_{i\in I}) + \sum_{i \in I} \frac{\varepsilon}{2^{i+1}}$, 
\end{enumerate}
hold, for every sequence $\{\gamma_{i}\}_{i\in I}$ and $\{\delta_{i}\}_{i \in I}$ of reals in $[0,1]$ such that, for every $i\!\in\! I\!\subseteq\!\mathbb{N}$, the inequalities $\gamma_{i}\geq \lambda_{i}-\frac{\varepsilon}{\#(i)}$ and $\delta_{i}\leq \lambda_{i}+\frac{\varepsilon}{\#(i)}$ hold. We just show how to prove the first inequality because the second one can be proved in a similar way. 

Let $I_{1}$ and $I_{2}$ be the partition of the index set $I$ specified as follows. The index $i\!\in\! I$ of a  path $\vec{s}_{i}$ in  $M[x_{i}]_{i\in I}$  (connecting $\langle p, G_{1}\odot G_{2}\rangle$ with the hole $\langle p_{i},X\rangle$) is in $I_{1}$ if $\vec{s}_{i}$ lies in $M_{1}$, i.e., its second state is $\langle p, G_{1}\rangle$. Similarly, the index $i$ belongs to $I_{2}$ if $\vec{s}_{i}$ lies in $M_{2}$. Note that $I_{1}$ indexes all the holes of the sub-Markov branching play $M_{1}$ and, similarly, $I_{2}$ indexes the holes of $M_{2}$. Thus, let us denote with $M_{1}[x_{i}]_{i\in I_{1}}$ the Markov branching pre-play associated with $M_{1}$, and similarly for $M_{2}[x_{j}]_{j\in I_{2}}$ and $M_{2}$. Then, by the inductive hypothesis on the complexity of $G_{1}$ and $G_{2}$, we know that $M_{1}$ and $M_{2}$ are robust in $X$, i.e., the following assertions, with respect to inequality ($1$) above,
\begin{enumerate}[a)]
\item $\expected(M_{1}[\gamma_{i}]_{i\in I_{1}}) \geq \expected(M_{1}[\lambda_{i}]_{i\in I_{1}}) - \sum_{i \in I_{1}} \frac{\varepsilon}{2^{i+1}}$, and
\item $\expected(M_{2}[\gamma_{j}]_{j\in I_{2}}) \geq \expected(M_{2}[\lambda_{j}]_{j\in I_{2}}) - \sum_{j \in I_{2}} \frac{\varepsilon}{2^{j+1}}$, 
\end{enumerate}
hold. By applying the product measure technique adopted in the proof of Theorem \ref{fixed_point_proposition_1}, it is easy to verify that the  equalities $\expected(M[\lambda_{i}]_{i\in I})\!=\! \expected(M_{1}[\lambda_{i}]_{i\in I_{1}}) \cdot \expected(M_{2}[\lambda_{j}]_{j\in I_{2}})$ and $\expected(M[\gamma_{i}]_{i\in I})\!=\! \expected(M_{1}[\gamma_{i}]_{i\in I_{1}}) \cdot \expected(M_{2}[\gamma_{j}]_{j\in I_{2}})$ hold. The desired equation $(1)$ above can then be derived as follows:
\begin{center}
\begin{tabular}{l l l }
$\expected(M[\gamma_{i}]_{i\in I})$ & $=$ & $\expected(M_{1}[\gamma_{i}]_{i\in I_{1}}) \cdot \expected(M_{2}[\gamma_{j}]_{j\in I_{2}})$ \\
$$ & $\geq_{b)}$ & $\expected(M_{1}[\gamma_{i}]_{i\in I_{1}}) \cdot \big(  \expected(M_{2}[\lambda_{j}]_{j\in I_{2}}) - \sum_{j \in I_{2}} \frac{\varepsilon}{2^{j+1}}     \big) $\\
$$ & $\geq_{*}$ & $ \big( \expected(M_{1}[\gamma_{i}]_{i\in I_{1}}) \cdot \expected(M_{2}[\lambda_{j}]_{j\in I_{2}}) \big)  -  \sum_{j \in I_{2}} \frac{\varepsilon}{2^{j+1}}    $\\
$$ & $\geq_{a)}$ & $ \Big( \big(\expected(M_{1}[\lambda_{i}]_{i\in I_{1}}) - \sum_{i \in I_{1}} \frac{\varepsilon}{2^{i+1}} \big) \cdot \expected(M_{2}[\lambda_{j}]_{j\in I_{2}})  \Big)  - \sum_{j \in I_{2}} \frac{\varepsilon}{2^{j+1}}  $\\
$$ & $\geq_{*}$ & $\big( \expected(M_{1}[\lambda_{i}]_{i\in I_{1}}) \cdot \expected(M_{2}[\lambda_{j}]_{j\in I_{2}}) \big)  - \sum_{i \in I_{1}} \frac{\varepsilon}{2^{i+1}} - \sum_{j \in I_{2}} \frac{\varepsilon}{2^{j+1}}  $\\
$$ & $\geq_{I=I_{1}\uplus I_{2}}$ & $\big( \expected(M_{1}[\lambda_{i}]_{i\in I_{1}}) \cdot \expected(M_{2}[\lambda_{j}]_{j\in I_{2}}) \big)  - \sum_{i \in I} \frac{\varepsilon}{2^{i+1}} $\\
$$ & $=$ & $\expected(M[\lambda_{i}]_{i\in I})  - \sum_{i \in I} \frac{\varepsilon}{2^{i+1}} $\\
\end{tabular}
\end{center}
where the steps labeled with ($*$) are valid because all terms have values in the interval $[0,1]$.\\

%%%%%%%%%%%%%%%%%%%%%%%
%%%%%%%%%%%%%%%%%%%%%%%%%%%
%%%%%%%%%%%%%%%%%%%%%%%%%%%%%%%%
%%%%%%%%%%%%%%%%%%%%%%%%%%%%%%%%%%%%

\textbf{Inductive case} $\mathbf{F\!=\! \mu X.G}$. \\
\begin{itemize}
\item We first prove that point \ref{main_equation_to_prove_1} holds. 
\end{itemize}
For every  state $p$ and every interpretation $\rho$ we have, by definition of the denotational semantics, that the following equality holds:
\begin{center}
$\sem{\mu X.G}_{\rho}(p)\bydef \lfp \Big( \lambda f\!\in\![0,1]^{P}. \big( \sem{G}_{\rho[f/X]}\big)  \Big)(p)$.
\end{center}
By the Knaster--Tarski theorem, the previous equation can be rewritten as:
\begin{equation}\label{fix_step_aux_1}
\sem{\mu X.G}_{\rho}(p)=\bigsqcup_{\alpha} \sem{G}^{\alpha}_{\rho},
\end{equation}
where $\alpha$ ranges over the ordinals, and $\sem{G}^{\alpha}_{\rho}$ is defined as $\bigsqcup_{\beta<\alpha}\sem{G}_{\rho[\sem{G}^{\beta}_{\rho}/X]}$. Let us denote with $\gamma$ the least ordinal such that $\sem{G}^{\gamma}_{\rho}=\sem{\mu X.G}_{\rho}$, and with $\rho_{\gamma}\!\in\![0,1]^{P}$ the interpretation $\rho[\sem{G}^{\gamma}_{\rho}/X]$. Thus,  the following equation holds: 
\begin{equation}
\sem{G}_{\rho_{\gamma}}=\sem{\mu X.G}_{\rho}.
\end{equation}

Let us now turn our attention to the $2\frac{1}{2}$-player meta-parity game $\game^{\mu X.G}_{\rho}$. Our goal is to prove that point \ref{main_equation_to_prove_1}  holds, i.e., that the following equalities
\begin{equation}\label{fix_step_aux_2}
\sem{\mu X.G}_{\rho}(p) = \val_{\downarrow}\big(\game^{\mu X.G}_{\rho}\big)(\langle p, \mu X.G\rangle)= \val_{\uparrow}\big(\game^{\mu X.G}_{\rho}\big)(\langle p, \mu X.G\rangle)
\end{equation}
hold, for every $p\!\in\!P $. As a first observation, note that the state $\langle p, \mu X.G\rangle$ is not reachable by any other game state, and that it has the state $\langle p, G\rangle$ as its only successor state. It then follows by application of Proposition \ref{fixed_point_proposition_1} that, in order to prove the desired result (\ref{fix_step_aux_2}), we just have to show that the equalities 
\begin{equation}\label{fix_step_aux_3}
\sem{G}_{\rho_{\gamma}}(p) = \val_{\downarrow}\big(\game^{\mu X.G}_{\rho}\big)(\langle p,G\rangle)= \val_{\uparrow}\big(\game^{\mu X.G}_{\rho}\big)(\langle p, G\rangle)
\end{equation}
hold. In order to improve readability, we shall denote with $\gsem{\mu X.G}^{\star}_{\rho}\!:\!P\!\rightarrow\![0,1]$ the function defined as $\lambda p\!\in\!P.\Big( \val_{\star}\big(\game^{\mu X.G}_{\rho}\big)(\langle p,G\rangle) \Big)$, for $\star\!\in\!\{\downarrow,\uparrow\}$. Thus, Equation \ref{fix_step_aux_3} can be rewritten as follows:
\begin{equation}\label{fix_step_aux_3_prime}
\sem{G}_{\rho_{\gamma}}(p) = \gsem{\mu X.G}^{\downarrow}_{\rho}(p)=\gsem{\mu X.G}^{\uparrow}_{\rho}(p).
\end{equation}
Note that the analogous functions $\gsem{G}^{\star}_{\rho[f/X]}\!:\!P\!\rightarrow\![0,1]$ specified, for $\star\!\in\!\{\downarrow,\uparrow\}$, as \\ $\lambda p\!\in\!P.\Big( Val_{\star}\big(\game^{G}_{\rho[f/X]}\big)(\langle p,G\rangle) \Big)$, satisfy the following equation:
\begin{equation}\label{fix_step_aux_3_second}
\sem{G}_{\rho[f/X]}= \gsem{G}^{\downarrow}_{\rho[f/X]} = \gsem{G}^{\uparrow}_{\rho[f/X]} 
\end{equation}
for all $f\!\in\![0,1]^{P}$, by induction hypothesis (\ref{main_equation_to_prove_1}) on $G$.

We prove Equation \ref{fix_step_aux_3_prime} by exploiting the similarities between the game $\game^{\mu X.G}_{\rho}$ and the game $ \game^{G}_{\rho[f/X]}$, already discussed in Section \ref{technical_section}. Our first observation is the following:
\begin{equation}\label{fix_step_aux_6}
\gsem{\mu X.G}^{\star}_{\rho} =  \gsem{G}^{\star}_{\rho[\gsem{\mu X.G}^{\star}_{\rho}/X]} \stackrel{\textnormal{Eq. \ref{fix_step_aux_3_second}}}{=} \sem{G}_{\rho[\gsem{\mu X.G}^{\star}_{\rho}/X]}
\end{equation}
for $\star\!\in\!\{\downarrow,\uparrow\}$. Indeed when a state of the form $\langle p, X\rangle$ is reached in the game $\game^{\mu X.G}_{\rho}$ the play continues from the state $\langle p, G\rangle$ and ends in a victory for Player $1$ with (lower or upper) value $\gsem{\mu X.G}^{\star}_{\rho}(p)$, and, similarly, when the state $\langle p, X\rangle$ is reached in $\game^{G}_{\rho[\gsem{\mu X.G}^{\star}_{\rho}/X]}$,  the play immediately terminates in favor of Player $1$ with probability $\gsem{\mu X.G}^{\star}_{\rho}(p)$.

By application of Equation \ref{fix_step_aux_3_second}, this implies that both $\gsem{\mu X.G}^{\uparrow}_{\rho}$ and $\gsem{\mu X.G}^{\downarrow}_{\rho}$ are  fixed points of the functional $\lambda f\!\in\![0,1]^{P}.( \sem{G}_{\rho[f/X]})$. Note that, for all $p\!\in\!P$, the inequality $\gsem{\mu X.G}^{\downarrow}_{\rho}(p) \leq \gsem{\mu X.G}^{\uparrow}_{\rho}(p)$ trivially holds. Moreover the inequality $\sem{\mu X.G}_{\rho}(p) \leq \gsem{\mu X.G}^{\uparrow}_{\rho}(p)$ holds, for all $p\!\in\!P$, because $\sem{\mu X.G}_{\rho}$ (or, equivalently, $\sem{G}_{\rho_{\gamma}}$) is the least fixed point of $\lambda f\!\in\![0,1]^{P}.( \sem{G}_{\rho[f/X]})$.

We shall prove the desired result (Equation \ref{fix_step_aux_3_prime}) by showing that, for all $p\!\in\!P$, the inequality
\begin{equation}
\gsem{\mu X.G}^{\uparrow}_{\rho}(p)\bydef Val^{\uparrow}\big( \game^{\mu X.G}_{\rho}\big)(\langle p, G\rangle) \leq   \sem{G}_{\rho_{\gamma}}(p)
\end{equation}
holds. We do this by constructing, for every $\varepsilon\!>\!0$ a strategy $\sigma^{\varepsilon}_{2}$ for Player $2$ in the game $\game^{\mu X.G}_{\rho}$, satisfying the following inequality:
 \begin{equation}\label{strategy_eq_1}
  \bigsqcup_{\sigma_{1}}\expected(M^{\langle p, G\rangle}_{\sigma_{1},\sigma^{\varepsilon}_{2}}) \leq   \sem{G}_{\rho_{\gamma}}(p) + \varepsilon.
  \end{equation}

The strategy $\sigma^{\varepsilon}_{2}$ is constructed using the collection of $\delta$-optimal strategies, for $\delta\!>\!0$, in the game $\game^{G}_{\rho_{\gamma}}$, i.e., strategies $\tau^{\delta}_{2}$ such that the following equality holds:
\begin{equation}\label{delta_optima_strategies}
\bigsqcup_{\tau_{1}}\expected(M^{\langle p, G\rangle}_{\tau_{1},\tau^{\delta}_{2}}) \leq   \val\big(\game^{G}_{\rho_{\gamma}}\big)(\langle p, G\rangle) + \delta \stackrel{\textnormal{Eq. \ref{fix_step_aux_3_second}}}{=}  \sem{G}_{\rho_{\gamma}}(p) + \delta.
\end{equation} 
Let $e\!:\!\mathcal{P}^{<\omega}_{\mu X.G}\!\rightarrow\!\mathbb{N}$ be a numbering (i.e., an injective map into the natural numbers) of the finite paths in the game $\game^{\mu X.G}_{\rho}$. The strategy $\sigma^{\varepsilon}_{2}$ is defined, for every $\varepsilon \!>\!0$,  as follows:
\begin{center}
$\sigma^{\varepsilon}_{2}(\vec{s}) =   \left\{     \begin{array}{l  l}
				\tau^{\frac{\varepsilon}{2}}_{2}(\vec{s}) & $if $\vec{s} $ does not contain states of the form $\langle p, X\rangle $, for $ p\!\in\!P \\
					\\
					\sigma^{ \frac{\varepsilon}{2}\cdot \frac{1}{\#(e(\vec{s}_{j}))}  }_{2}(\vec{t})  &   $ if $\vec{s}\!=\!\vec{s}_{j}.\vec{t}$ with $\last(\vec{s}_{j})=\langle p, X\rangle $ for some $p\!\in\!P \\				 \end{array}      \right.$
\end{center}
for every finite path $\vec{s}$ whose last state belong to Player $2$ (i.e., such that $\last(\vec{s})\!\in\! S_{2}$), where the function $\#\!:\!\mathbb{N}\!\rightarrow\!\mathbb{N}$ is specified as in Definition \ref{approx_function}. The strategy $\sigma^{\varepsilon}_{2}$ can be informally described as follows. When the game begins $\sigma^{\varepsilon}_{2}$ behaves as the strategy $\tau^{\frac{\varepsilon}{2}}_{2}$ until a state of the form $\langle p, X\rangle$ is reached. This is a good definition because  plays in the two games $\game^{\mu X.G}_{\rho}$ and $\game^{G}_{\rho_{\gamma}}$ are identical until states of this form are reached. If a state of the form $\langle p, X\rangle$ is eventually reached following some path $\vec{s}_{j}$, then Player $2$ \emph{improves} their strategy, and plays the rest of the game (starting at the unique successor state $\langle p, G\rangle$ of $\langle p, X\rangle$) in accordance with the better strategy $\sigma^{ \frac{\varepsilon}{2}\cdot \frac{1}{\#(e(\vec{s}_j))}  }_{2}$. Note how the choice of the new strategy to be followed crucially depends on the path $\vec{s}_{j}$.

We are now going to show that, for every $\varepsilon \!>\!0$,  the strategy $\sigma^{\varepsilon}_{2}$ satisfies the  desired inequality \ref{strategy_eq_1}. We need to show that, for every strategy $\sigma_{1}$ for Player $1$, the inequality 
\begin{equation}
\expected(M^{\langle p, G\rangle}_{\sigma_{1},\sigma^{\varepsilon}_{2}}) \leq   \sem{G}_{\rho_{\gamma}}(p) + \varepsilon.
\end{equation} 
holds. Recall that, by definition, $\expected(M^{s}_{\sigma_{1},\sigma^{\varepsilon}_{2}})\!=\!\mathbb{P}^{s}_{\sigma_{1},\sigma^{\varepsilon}_{2}}(\Phi)$, where $\mathbb{P}^{\langle p, G\rangle}_{\sigma_{1},\sigma^{\varepsilon}_{2}}$ denotes the probability measure over branching plays in $\game^{\mu X.G}_{\rho}$ induced by the Markov branching play $M^{s}_{\sigma_{1},\sigma^{\varepsilon}_{2}}$, and $\Phi$ denotes  the set of winning branching plays for Player $1$ in $\game^{\mu X.G}_{\rho}$. By Theorem \ref{inductive_ch} and Lemma \ref{iteration_lemma}, we know that $\Phi\!=\! \bigcup_{\alpha<\omega_{1}} \mathbb{W}_{\alpha}$. Thus, the previous equation can be rewritten as follows:
\begin{equation}
\mathbb{P}^{\langle p, G\rangle}_{\sigma_{1},\sigma^{\varepsilon}_{2}}\big( \bigcup_{\alpha<\omega_{1}} \mathbb{W}_{\alpha} \big)  \leq   \sem{G}_{\rho_{\gamma}}(p) + \varepsilon.
\end{equation} 
By application of Theorem \ref{consequences_martin}, under the set-theoretic assumption $\textnormal{MA}_{\aleph_{1}}$ we can further rewrite the previous equation as follows:
\begin{equation}\label{step_martin_setup}
\bigsqcup_{\alpha<\omega_{1}}\big( \mathbb{P}^{\langle p, G\rangle}_{\sigma_{1},\sigma^{\varepsilon}_{2}}( \mathbb{W}_{\alpha}) \big)  \leq   \sem{G}_{\rho_{\gamma}}(p) + \varepsilon.
\end{equation} 
This step allow us to set up a proof by transfinite induction for the desired Equality \ref{strategy_eq_1}. We shall now prove that for every countable ordinal $\beta<\omega_{1}$ and every $\varepsilon >0$, the inequality 
\begin{equation}\label{inductive_hp_equation}
 \mathbb{P}^{\langle p, G\rangle}_{\sigma_{1},\sigma^{\varepsilon}_{2}}( \mathbb{W}_{\beta}) \leq   \sem{G}_{\rho_{\gamma}}(p) + \varepsilon.
\end{equation}
holds. Assume Equation \ref{inductive_hp_equation} holds for every ordinal $\alpha<\beta$. The Markov branching play $M^{\langle p,G\rangle}_{\sigma_{1},\sigma^{\varepsilon}_{2}}$ can be depicted, following the notation introduced in sections \ref{technical_section} and \ref{robust_markov_plays_section}, as in Figure \ref{fig_proof_A1}.
\begin{figure}[t]
\pstree[ treemode=U,levelsep=8ex ]{\Tr{$\langle p, G\rangle$}}{
        \pstree[linestyle=none,arrows=-,levelsep=2ex]{\Tfan[fansize=10ex, linestyle=dashed]}{\TR{ $\ $}}
  
\pstree[levelsep=5ex]{\Tr{$\langle p_{0},X\rangle$} \trput{$...$}}{
	\pstree[levelsep=5ex]{\Tr{$\langle p_{0},G\rangle$}}{
		\pstree[linestyle=none,arrows=-,levelsep=2ex]{\Tfan[fansize=10ex, linestyle=dashed]}{\TR{ $M_{0} $}}
	
	}
}
\pstree[levelsep=5ex]{\Tr{$\langle p_{i},X\rangle$} \trput{$...$}}{
	\pstree[levelsep=5ex]{\Tr{$\langle p_{i},G\rangle$}}{
		\pstree[linestyle=none,arrows=-,levelsep=2ex]{\Tfan[fansize=10ex, linestyle=dashed]}{\TR{ $M_{i} $}}
	}
}
}
\caption{Markov Branching play $M{[}M_{i}{]_{i \in I}}$}
\label{fig_proof_A1}
\end{figure}
By definition of $\sigma^{\varepsilon}_{2}$, the sub-Markov branching plays $M_{i}$, for $i\!\in\!I$, are Markov branching plays played in accordance with the strategy $\sigma^{\varepsilon_{i}}_{2}$, with $\varepsilon_{i}\!=\! \frac{\varepsilon}{2}\cdot \frac{1}{\#(e(\vec{s}_{i}))}$. Let us denote with $\delta^{\alpha}_{i}$ the value $\mathbb{P}_{M_{i}}(\mathbb{W}_{\alpha})$, where $\mathbb{P}_{M_{i}}$ denotes the probability measure associated with $M_{i}$, and $\alpha <\beta$. Moreover, let us denote with $\delta_{i}$, for $i\!\in\!I$ the value $\bigsqcup_{\alpha<\beta}\delta^{\alpha}_{i}$.
Then, by induction hypothesis on the ordinals, we know that, for every $\alpha\! <\!\beta$, the inequality  
\begin{equation}
\delta^{\alpha}_{i} \leq  \sem{G}_{\rho_{\gamma}}(p) + \frac{\varepsilon}{2}\cdot \frac{1}{\#(e(\vec{s}_{i}))}
\end{equation} 
holds, for every $i\!\in\! I$. By application of Lemma \ref{comparing_lemma}, we know that the following equality holds:
\begin{equation}
 \mathbb{P}^{\langle p,G\rangle}_{\sigma_{1},\sigma^{\varepsilon}_{2}}(\mathbb{W}_{\beta}) = \expected( M[\delta_{i}]_{i\in I}])
\end{equation}
where $M[\delta_{i}]_{i\in I}$ denotes the Markov branching play in the game $\game^{G}_{\rho^{\gamma}}$, obtained from $M^{\langle p, G\rangle}_{\sigma_{1},\sigma^{\varepsilon}_{2}}$ as specified in Section \ref{robust_markov_plays_section}, which can be depicted as in Figure \ref{fig_proof_B1-A}.
\begin{figure}
	\pstree[ treemode=U,levelsep=8ex ]{\Tr{$\langle p, G\rangle$}}{
        \pstree[linestyle=none,arrows=-,levelsep=2ex]{\Tfan[fansize=10ex, linestyle=dashed]}{\TR{ $\ $}}
  
\pstree[levelsep=7ex]{\Tr{$\langle p_{0},X\rangle$} \trput{$...$}}{
	\Tr{$\top$}\tlput{$\delta_{0}$}
	\Tr{$\bot$}\trput{$1-\delta_{0}$}
}
\pstree[levelsep=7ex, linestyle=none]{\Tr{$\dots $}}{
	\Tr{$\ $}
	\Tr{$\ $}
}
\pstree[levelsep=7ex]{\Tr{$\langle p_{i},X\rangle$} \trput{$...$}}{
	\Tr{$\top$}\tlput{$\delta_{i}$}
	\Tr{$\bot$}\trput{$1-\delta_{i}$}
}
}	
\caption{Markov branching play $M[\delta_{i}]_{i\in I}$}
\label{fig_proof_B1-A}
\end{figure}
Let us denote with $\lambda_{i}$, for $i\!\in\!I$, the value $\sem{G}_{\rho_{\gamma}}(p_{i})$. Observe that the Markov branching play $M[\lambda_{i}]_{i\in I}$ (depicted by replacing $\delta_{i}$ with $\lambda_{i}$ in Figure \ref{fig_proof_B1-A}) is a Markov branching play in the game $\game^{G}_{\rho^{\gamma}}$. Moreover, by definition of $\sigma^{\varepsilon}_{2}$, the Markov branching play $M[\lambda_{i}]_{i\in I}$ is played by Player $2$ in accordance with the $\frac{\varepsilon}{2}$-optimal strategy $\tau^{\frac{\varepsilon}{2}}_{2}$. It then follows by Equation \ref{delta_optima_strategies} that
\begin{equation}\label{quasi_finito_1}
\expected(M[\lambda_{i}]_{i\in I}) \leq \sem{G}_{\rho_{\gamma}}(p) + \frac{\varepsilon}{2}
\end{equation}
Recall that, by induction hypothesis (\ref{main_equation_to_prove_2}) on $G$, the Markov branching play $M[\lambda_{i}]_{i\in I}$ is robust is the free (in $G$) variable $X$. Thus the following inequality holds: 
\begin{equation}\label{quasi_finito_2}
\expected(M[\delta_{i}]_{i\in I})  \leq \expected(M[\lambda_{i}]_{i\in I}) + \sum_{i\in I}\big(\frac{\varepsilon}{2}\cdot \frac{1}{\#(e(\vec{s}_{i}))}\big).
\end{equation}
Since the numbering $e$ is injective, it follows from equation \ref{quasi_finito_1} and \ref{quasi_finito_2} that $\expected(M[\delta_{i}]_{i\in I})  \leq \sem{G}_{\rho_{\gamma}}(p) -  \frac{\varepsilon}{2} -\frac{\varepsilon}{2}$. Thus, Equation \ref{inductive_hp_equation} holds as desired.\\

%%%%%%%%%%%%%%%%%%%%%%%%%%%%%%%%%%%%%%%%%%%
%%%%%%%%%%%%%%%%%%%%%%%%%%%%%%%%%%%%%%%%%%%
%%%%%%%%%%%%%%%%%%%%%%%%%%%%%%%%%%%%%%%

\begin{figure}
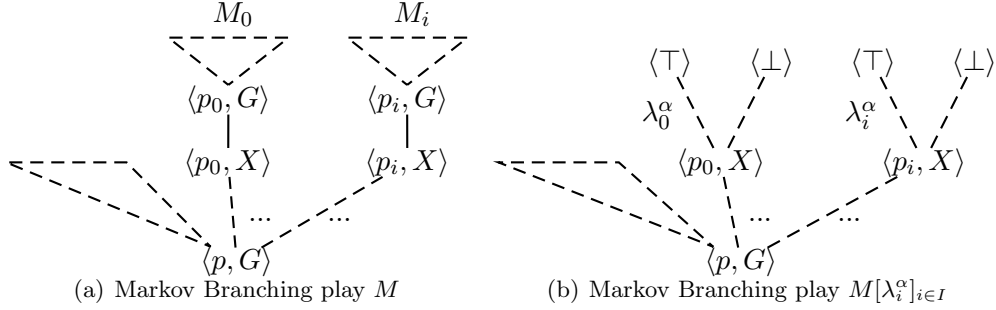

\centering
\subfigure[Markov Branching play $M$]{\label{fig20}
	\pstree[ treemode=U,levelsep=8ex,linestyle=dashed]{\Tr{$\langle p, G\rangle$}}{
        \pstree[linestyle=none,arrows=-,levelsep=2ex]{\Tfan[linestyle=dashed,fansize=10ex]}{\TR{ $\ $}}
  
\pstree[levelsep=5ex,linestyle=solid]{\Tr{$\langle p_{0},X\rangle$} \trput{$...$}}{
	\pstree[levelsep=5ex]{\Tr{$\langle p_{0},G\rangle$}}{
		\pstree[linestyle=none,arrows=-,levelsep=2ex]{\Tfan[linestyle=dashed,fansize=10ex]}{\TR{ $M_{0} $}}
	
	}
}
\pstree[levelsep=5ex,linestyle=solid]{\Tr{$\langle p_{i},X\rangle$} \trput{$...$}}{
	\pstree[levelsep=5ex]{\Tr{$\langle p_{i},G\rangle$}}{
		\pstree[linestyle=none,arrows=-,levelsep=2ex]{\Tfan[linestyle=dashed,fansize=10ex]}{\TR{ $M_{i} $}}
	}
}
}
}
\subfigure[Markov Branching play $M{[}\lambda^{\alpha}_{i}{]_{i \in I}}$]{\label{fig21}
	\pstree[ treemode=U,levelsep=8ex,linestyle=dashed ]{\Tr{$\langle p, G\rangle$}}{
        \pstree[linestyle=none,arrows=-,levelsep=2ex]{\Tfan[linestyle=dashed,fansize=10ex]}{\TR{ $\ $}}

	\pstree[levelsep=8ex]{\Tr{$\langle p_{0},X\rangle$} \trput{$...$}}{\psset{linestyle=dashed}  \Tr{$\langle\top\rangle $}\tlput{$\lambda^{\alpha}_{0}$} \Tr{$  \langle\bot\rangle  $}\trput{$$}} 
	\pstree[levelsep=8ex]{\Tr{$\langle p_{i},X\rangle$} \trput{$...$}}{\psset{linestyle=dashed}  \Tr{$\langle\top\rangle$}\tlput{$\lambda^{\alpha}_{i}$} \Tr{$\langle\bot\rangle $}\trput{$$}} 
	}
}

\caption{Markov branching plays $M\!=\!M[M_{i}]_{i\in I}$ in $\game^{\mu X.G}_{\rho}$ and $M{[}\lambda^{\alpha}_{i}{]_{i \in I}}$ in $\game^{G}_{\rho}$}
\end{figure}

\begin{itemize}
\item We now prove that point \ref{main_equation_to_prove_2} holds.
\end{itemize}
We just discuss the main ideas of the proof as the necessary techniques have been already introduced in the proof of point \ref{main_equation_to_prove_1} above.

We need to show that every Markov branching play $M$ in $\game^{\mu X.G}_{\rho}$, rooted at $\langle p, \mu X.G\rangle$ is robust in every free variable $Y$. Clearly $Y\!\neq\! X$, since $X$ is bound in $\mu X.F$. As already observed earlier, we can just consider Markov branching plays $M$ rooted at $\langle p, G\rangle$, since the state $\langle p, \mu X.G\rangle$ has $\langle p,G\rangle$ has its only successor state, and it is not reachable by any other states. Let $J\!\subseteq\!\mathbb{N}$ be the set indexing the paths from the root $\langle p, G\rangle$ of $M$ to states of the form $\langle p_{j},Y\rangle$. Let us denote with $M[y_{j}]_{j\in J}$ the associated Markov branching pre-play. Let $\lambda_{j}\!\in\![0,1]$, for $j\!\in\! J$, be the value labeling the edge connecting the $j$-th hole in $M$ with the state $\top$, i.e., $\lambda_{j}=\rho(Y)(p_{j})$.  Thus $M=M[\lambda_{j}]_{j\in J}$. We need to prove that, for every $\varepsilon >0$,  the inequalities
\begin{enumerate}[(1)]
\item $\expected(M[\gamma_{j}]_{j\in J}) \geq \expected(M[\lambda_{j}]_{j\in J}) - \sum_{j \in J} \frac{\varepsilon}{2^{j+1}}$, and
\item $\expected(M[\delta_{j}]_{j\in J}) \leq \expected(M[\lambda_{j}]_{j\in J}) + \sum_{j \in J} \frac{\varepsilon}{2^{j+1}}$, 
\end{enumerate}
hold for every sequences  $\{\gamma_{j}\}_{j\in J}$ and $\{\delta_{j}\}_{j \in J}$ of reals in $[0,1]$ such that, for every $j\!\in\! J\!\subseteq\!\mathbb{N}$, the inequalities $\gamma_{j}\geq \lambda_{j}-\frac{\varepsilon}{\#(j)}$ and $\delta_{j}\leq \lambda_{j}+\frac{\varepsilon}{\#(j)}$ hold. We just consider the first inequality, as the second one can be proved in a similar way. 

Recall that, by application of Theorem \ref{consequences_martin}, under the set-theoretic assumption $\textnormal{MA}_{\aleph_{1}}$,  the equality $\expected(M[\gamma_{j}]_{j\in J})\!\bydef\! \mathbb{P}_{M[\gamma_{j}]_{j\in J}}(\Phi)  \!=\!\bigsqcup_{\alpha<\omega_{1}}\mathbb{P}_{M[\gamma_{j}]_{j\in J}}(\mathbb{W}_{\alpha})$ holds. Similarly for $\expected(M[\lambda_{j}]_{j\in J})$. We shall then prove ($1$) above, by proving  the following more general property: for all Markov branching plays $M=M[\lambda_{j}]_{j\in J}$ in $\game^{\mu X.G}_{\rho}$ rooted at $\langle p, G\rangle$ and sequence  $\{\gamma_{j}\}_{j\in J}$ as described above, the following equality holds:
\begin{equation}\label{transfinite_hp_rubust}
\mathbb{P}_{M[\gamma_{j}]_{j \in J}}(\mathbb{W}_{\alpha}) \geq \mathbb{P}_{M[\lambda_{j}]_{j\in J}}(\mathbb{W}_{\alpha})- \sum_{j \in J} \frac{\varepsilon}{2^{j+1}}
\end{equation}
This is proven, again, by transfinite induction on the ordinals. Suppose the property holds for all $\alpha <\beta$. The main idea to prove the inductive step is to reduce the problem on $M$ (which can be depicted as in Figure \ref{fig20} by exposing the collection of paths reaching states of the form $\langle p_{i},X\rangle$, for an index set $I$) to that of $M[\lambda^{\alpha}_{i}]_{i\in I}$ (where $\lambda^{\alpha}_{i}\!=\! \mathbb{P}_{M_{i}}(\bigcup_{\alpha<\beta}\mathbb{W}_{\alpha})$), the Markov branching play in $\game^{G}_{\rho}$ constructed as described in Lemma \ref{comparing_lemma} and depicted in Figure \ref{fig21}. The result then follows by applications of Lemma \ref{comparing_lemma} (relating the values $\mathbb{P}_{M}(\mathbb{W}_{\beta})$ and of $\expected(\mathbb{P}_{M[\lambda^{\alpha}_{i}]})$), by induction hypothesis on $\alpha$ (\ref{transfinite_hp_rubust}) and by induction hypothesis  (\ref{main_equation_to_prove_2}) on the robustness of  the Markov branching plays $M[\lambda^{\alpha}_{i}]_{i\in I}$ in $\game^{G}_{\rho}$. We omit the technical details. A detailed proof of (a generalized version of) this result can be found in \cite[\S 4.3]{MioThesis}.\\

\textbf{Inductive case} $\mathbf{G\!=\! \nu X.G}$. \\
Similar to the previous one and based on the properties summarized in Proposition \ref{dual_results}.

%%%%%%%%%%%%%%%%%%%%%%%
%%%%%%%%%%%%%%%%%%%%%%%%%%%
%%%%%%%%%%%%%%%%%%%%%%%%%%%%%%%%
%%%%%%%%%%%%%%%%%%%%%%%%%%%%%%%%%%%%

\subsection{Remarks about the use of $\textnormal{MA}_{\aleph_{1}}$}\label{remarks_martin}
We conclude this section by highlighting the two critical steps in our proof where the set-theoretic assumption $\textnormal{MA}_{\aleph_{1}}$ is used. 

As already discussed in Section \ref{tree_games_section}, we make use of Martin's Axiom at $\aleph_{1}$ to ensure that the $\mbox{\boldmath$\Delta$}^{1}_{2}$ winning set of a pL$\mu^{\odot}$ game is universally measurable. The universal measurability of $\mbox{\boldmath$\Delta$}^{1}_{2}$ sets can, however, be proved in other extensions of $\textnormal{ZFC}$. For example, determinacy-based axioms such as \emph{Analytic Determinacy} \cite{Jech}, suffice (see, e.g., \cite[\S 36.E]{Kechris}). 

The second use we make of the axiom $\textnormal{MA}_{\aleph_{1}}$ is in the derivation of Equation \ref{step_martin_setup} in the proof above. This is a fundamental step required to set up a proof by transfinite induction on the countable ordinals. As stated in Theorem \ref{consequences_martin}, one of the consequences of Martin's Axiom at $\aleph_{1}$ is that probability measures on Polish spaces are $\omega_{1}$-continuous. Such a property, clearly implying the negation of the Continuum hypothesis, does not follow from determinacy-based axioms mentioned above. As mentioned after Lemma \ref{iteration_lemma}, our inductive characterization of the winning set of pL$\mu^{\odot}$ games, can be shown to require, in general, $\omega_{1}$-iterations (i.e., approximants) to converge. Thus some form of $\omega_{1}$-continuity seems to be required by the proof technique adopted in this Section.

\section{Conclusions and Future Work}\label{conclusion_section}

One of the primary interests in a game semantics for pL$\mu^{\odot}$, and more generally for all logics having a $[0, 1]$-valued semantics with an intended probabilistic reading, is to offer an accessible and clear interpretation for the property described by a formula. We suggest that our game semantics, built on top of the elementary idea of concurrent execution of independent sub-istances of the game, succeeds in this task. The logic pL$\mu^{\odot}$ is expressive enough to encode the qualitative threshold modalities $\mathbb{P}_{>0}$ and $\mathbb{P}_{=1}$ which, as discussed at the end of  Section \ref{model_checking_games}, are interpreted within the game semantics in a straightforward way. 

Despite the naturalness of the definition, our proof of equivalence is a technically undertaking. Our result, based on a transfinite characterization of pL$\mu^{\odot}$ winning sets, is carried out in $\textnormal{ZFC}+\textnormal{MA}_{\aleph_{1}}$ set-theory to deal with the measure theoretic complications associated with the complexity of winning sets. We are not aware of any other result in theoretical computer science whose proof is (or at least was originally) carried out in proper extensions of $\textnormal{ZFC}$ set theory. Thus, our proof of determinacy, which is generalized in \cite{MioThesis} from the pL$\mu^{\odot}$ games considered in this paper to arbitrary two-player stochastic meta-parity games,  is perhaps noteworthy as being a first example of this kind of result.

Although the logic pL$\mu^{\odot}$ subsumes the qualitative fragment of the logic PCTL of \cite{BA1995}, it does not seem possible to encode the full logic PCTL within pL$\mu^{\odot}$. In \cite{MioThesis} an extension of pL$\mu^{\odot}$,  obtained by adding to the syntax of the logic yet another pair of De Morgan dual connectives,  is considered. Interestingly, this extension, which is capable of encoding the full logic PCTL, can be given an appropriate game semantics in terms of two-player stochastic meta-parity games. We refer to \cite{MioThesis} for a proof of this fact, which serves as a demonstration of the expressive power of the new class of two-player stochastic meta-parity games introduced in this paper.

Our work leaves open several directions for future research. From a theoretical point of view, it would be interesting to remove the dependencies on the set-theoretic assumption $\textnormal{MA}_{\aleph_{1}}$ from our proof. This, in light of the remarks of Section \ref{remarks_martin}, looks as a challenging task. In another direction, it would be interesting to investigate the theory of two-player stochastic meta-games. Preliminary results, such as the fact that Blackwell games can be encoded as tree games and that the open problem of \emph{qualitative determinacy} \cite{vaclav2011a}  for stochastic games can be formulated as a determinacy problem for tree games, are obtained in the author's PhD thesis \cite{MioThesis}. However, several questions remain open. For example, the concept of independent execution of actions in two-player games have been already considered as a tool for understanding logics of imperfect information (see, e.g., \cite{Bradfield2000}, \cite{GV2012}). It is then be natural to explore the expressive power of tree games in this setting. A very general class of two-player games with concurrent behaviors, based on event structures, has recently been considered in \cite{CGW2012} where a determinacy theorem is obtained for games satisfying appropriate restrictions. Comparing our notion of tree games with their \emph{concurrent games}, and the corresponding determinacy results, looks like a promising direction for future work. Similarly, it is interesting to compare two-player non-stochastic meta-parity games with the, apparently very similar, \emph{hierarchical four player}  \emph{games} ($2$ vs $2$) of \cite{Kaiser2006}, which are used to formalize the semantics of first order logic extended with \emph{game quantifiers}. Developing verification methods is another source of possible research directions. For instance, it is interesting to study the \emph{model checking problem} for the logic pL$\mu^{\odot}$. Is it possible to compute, or at least approximate to an arbitrary degree of precision, the value $\sem{F}(p)$ assigned by a formula to a state $p$ of a finite PLTS? The problem seems far from trivial. A source of difficulty comes from the fact that (finite) two-player stochastic meta-parity games are not positionally determined. We refer to \cite[\S 8]{MioThesis} for an overview of these issues.

$\ $\\ $\ $\\
\textbf{Acknowledgments.} The author is deeply grateful to his PhD supervisor, Prof. Alex Simpson, who substantially contributed to the research presented in this manuscript with constant insights and suggestions. These thanks are extended to the anonymous reviewers for their valuable comments.
\nocite{NCI99}

\bibliographystyle{abbrv}
\bibliography{biblio}

\begin{thebibliography}{10}

\bibitem{Bartels02}
F.~Bartels.
\newblock {GSOS} for probabilistic transition systems.
\newblock In {\em Electronic Notes in Theoretical Computer Science, Volume 65,
  Issue 1}, 2002.

\bibitem{BA1995}
A.~Bianco and L.~de~Alfaro.
\newblock Model checking of probabilistic and nondeterministic systems.
\newblock In {\em Foundations of Software Technology and Theoretical Computer
  Science}, number 1026 in Lecture Notes in Computer Science, pages 499--513.
  Springer-Verlag, 1995.

\bibitem{blackwell69}
D.~Blackwell.
\newblock Infinite {$\textnormal{G}_{\delta}$} games with imperfect
  information.
\newblock {\em Zastosowania Matematyki (Applicationes Mathematicae)}, Hugo
  Steinhaus Jubilee Volume X, 1969.

\bibitem{Bradfield2000}
J.~C. Bradfield.
\newblock Independence: Logics and concurrency.
\newblock In {\em In Proceedings of the 14th annual conference on Computer
  Science Logic (CSL)}, 2000.

\bibitem{BS2001}
J.~C. Bradfield and C.~Stirling.
\newblock Modal logics and mu-calculi: an introduction.
\newblock In {\em Handbook of Process Algebra}. Elsevier, 2001.

\bibitem{vaclav2011a}
T.~Br{\'a}zdil, V.~Brozek, A.~Kucera, and J.~Obdrz{\'a}lek:.
\newblock Qualitative reachability in stochastic {BPA} games.
\newblock {\em Information and Computation}, 209, 2011.

\bibitem{ChatPhD}
K.~Chatterjee.
\newblock {\em Stochastic {$\omega$}-Regular Games}.
\newblock PhD thesis, University of California, Berkeley, 2007.

\bibitem{KLT99}
K.~Chatterjee, L.~Doyen, and T.~A. Henzinger.
\newblock A survey of stochastic games with {\texttt{lim-sup} and
  \texttt{lim-inf}} objective.
\newblock In {\em Proceedings of the 36th Internatilonal Colloquium on
  Automata, Languages and Programming}. Springer-Verlag Berlin, 2009.

\bibitem{CGW2012}
P.~Clairambault, J.~Gutierrez, and G.~Winskel.
\newblock The winning ways of concurrent games.
\newblock In {\em Logic in Computer Science}. IEEE Computer Society, 2012.

\bibitem{AM04}
L.~{de Alfaro} and R.~Majumdar.
\newblock Quantitative solution of omega-regular games.
\newblock {\em Journal of Computer and System Sciences, Volume 68 , Issue 2},
  pages 374 -- 397, 2004.

\bibitem{FGK2010}
D.~Fischer, E.~Gr{\"{a}}del, and {\L}.~Kaiser.
\newblock Model checking games for the quantitative {$\mu$}-calculus.
\newblock {\em Theory of Computing Systems}, 47(3), 2010.

\bibitem{GV2012}
E.~Gr{\"{a}}del and J.~V{\"{a}}{\"{a}}n{\"{a}}nen.
\newblock Dependence, independence, and incomplete information.
\newblock In {\em International Conference on Database Theory (ICDT)}, 2012.

\bibitem{HP2000}
O.~M. Herescu and C.~Palamiessi.
\newblock Probabilistic asynchronous {$\pi$}-calculus.
\newblock In {\em Proceedings of FOSSACS 2000}, pages 146--160, 2000.

\bibitem{HM96}
M.~Huth and M.~Kwiatkowska.
\newblock Quantitative analysis and model checking.
\newblock In {\em Proceedings of the 12th Annual IEEE Symposium on Logic In
  Computer Science}, page 111, Washington, DC, USA, 1997. IEEE Computer
  Society.

\bibitem{Jech}
T.~Jech.
\newblock {\em Set Theory}.
\newblock Springer Monographs in Mathematics. Springer, 2002.

\bibitem{Kaiser2006}
{\L}.~Kaiser.
\newblock Game quantification on automatic structures and hierarchical model
  checking games.
\newblock In {\em International Conference on Computer Science Logic (CSL)},
  pages 411--425, 2006.

\bibitem{Kechris}
A.~S. Kechris.
\newblock {\em Classical Descriptive Set Theory}.
\newblock Graduate Texts in Mathematics. Springer Verlag, 1994.

\bibitem{Kozen83}
D.~Kozen.
\newblock Results on the propositional mu-calculus.
\newblock In {\em Theoretical Computer Science}, volume~27, pages 333--354,
  1983.

\bibitem{KNPV2009}
M.~Kwiatkowska, G.~Norman, D.~Parker, and M.~Vigotti.
\newblock Probabilistic mobile ambients.
\newblock {\em Theoretical Computer Science}, pages 12--13, 2009.

\bibitem{Martin75}
D.~A. Martin.
\newblock Borel determinacy.
\newblock {\em Annals of Mathematics}, 2(102):363--371, 1975.

\bibitem{Martin98}
D.~A. Martin.
\newblock The determinacy of {B}lackwell games.
\newblock In {\em Journal of Symbolic Logic Volume 63, Issue 4, 1565-1581},
  1998.

\bibitem{MS70}
D.~A. Martin and R.~M. Solovay.
\newblock Internal {C}ohen extensions.
\newblock {\em Ann. Math. Logic}, 2:143--178, 1970.

\bibitem{MM07}
A.~McIver and C.~Morgan.
\newblock Results on the quantitative {$\mu$}-calculus q{M}{$\mu$}.
\newblock {\em ACM Trans. Comput. Logic}, 8(1):3, 2007.

\bibitem{MIO11}
M.~Mio.
\newblock {P}robabilistic {M}odal {$\mu$}-{C}alculus with {I}ndependent
  {P}roduct.
\newblock In {\em Foundations of Software Science and Computation Structures},
  volume 6604 of {\em Lecture Notes in Computer Science}, pages 290--304.
  Springer-Verlag Berlin, 2011.

\bibitem{MioThesis}
M.~Mio.
\newblock {\em Game Semantics for Probabilistic {$\mu$}-Calculi}.
\newblock PhD thesis, School of Informatics, University of Edinburgh, 2012.
\newblock Permanent URL: {\url{http://hdl.handle.net/1842/6223}}.

\bibitem{MIO10}
M.~Mio.
\newblock On the equivalence of denotational and game semantics for the
  probabilistic {$\mu$}-calculus.
\newblock {\em Logical Methods in Computer Science}, 8(2), 2012.

\bibitem{MM97}
C.~Morgan and A.~McIver.
\newblock A probabilistic temporal calculus based on expectations.
\newblock In {\em In Lindsay Groves and Steve Reeves, editors, Proc. Formal
  Methods}. Springer Verlag, 1997.

\bibitem{NCI99}
M.~Narasimha, R.~Cleaveland, and P.~Iyer.
\newblock Probabilistic temporal logics via the modal mu-calculus.
\newblock In {\em Foundations of Software Science and Computation Structures},
  pages 288--305, 1999.

\bibitem{S95}
R.~Segala.
\newblock {\em Modeling and Verification of Randomized Distributed Real-Time
  Systems}.
\newblock PhD thesis, Laboratory for Computer Science, Massachusetts Institute
  of Technology, 1995.

\bibitem{Stirling96}
C.~Stirling.
\newblock {\em Modal and temporal logics for processes}.
\newblock Springer (Texts in Computer Science), 2001.

\bibitem{Tao_measuretheory}
T.~Tao.
\newblock {\em An introduction to measure theory}.
\newblock Graduate Studies in Mathematics. American Mathematical Society, 2011.

\bibitem{Tarski1955}
A.~Tarski.
\newblock A lattice-theoretical fixedpoint theorem and its applications.
\newblock {\em Pacific Journal of Mathematics}, 5(2):285--309, 1955.

\end{thebibliography}

\end{document}